\newtheorem{theorem}{Theorem}
\newtheorem{lemma}[theorem]{Lemma}
\newtheorem{claim}[theorem]{Claim}
\newtheorem{corollary}[theorem]{Corollary}
\newtheorem{proposition}[theorem]{Proposition}
\theoremstyle{definition}
\newtheorem{example}[theorem]{Example}
\newcommand{\opt}{{\rm OPT}}
\newcommand{\wst}{{\rm WST}}
\newcommand{\alg}{{\rm ALG}}
\newcommand{\app}{{\rm APPROX}}
\newcommand{\IS}{{\rm IS}}
\newcommand{\upp}{\scalebox{0.8}{$\circ$}}
\newcommand{\low}{\scalebox{0.7}{$\bullet$}}
\newcommand{\switch}[2]{#2} 
\title{Maximally Satisfying Lower Quotas \\
in the Hospitals/Residents Problem with Ties\thanks{This work was partially supported by the joint project of Kyoto University and Toyota Motor Corporation, titled ``Advanced Mathematical Science for Mobility Society''.}}
\author{
Hiromichi Goko\thanks{Frontier Research Center, Toyota Motor Corporation, Aichi 471-8572, Japan, E-mail: {\tt hiromichi\_goko@mail.toyota.co.jp}}
\and
Kazuhisa Makino\thanks{Research Institute for Mathematical Sciences, Kyoto University, Kyoto
606-8502, Japan., E-mail: {\tt makino@kurims.kyoto-u.ac.jp}, Supported by JSPS KAKENHI Grant Numbers JP20H05967, JP19K22841, and JP20H00609.}
\and
Shuichi Miyazaki\thanks{Academic Center for Computing and Media Studies, Kyoto University, Kyoto 606-8501, Japan, E-mail: {\tt shuichi@media.kyoto-u.ac.jp}, Supported by JSPS KAKENHI Grant Numbers JP20K11677 and JP16H02782.}
\and
Yu Yokoi\thanks{Principles of Informatics Research Division, National Institute of Informatics, Tokyo 101-8430, Japan, E-mail: {\tt yokoi@nii.ac.jp}, Supported by JSPS KAKENHI Grant Number JP18K18004 and JST PRESTO Grant Number JPMJPR212B.}}
\begin{document}
\maketitle

\begin{abstract}
Motivated by the serious problem that hospitals in rural areas suffer from a shortage of residents, we study the Hospitals/Residents model in which hospitals are associated with lower quotas and the objective is to satisfy them as much as possible.
When preference lists are strict, the number of residents assigned to each hospital is the same in any stable matching because of the well-known rural hospitals theorem; thus there is no room for algorithmic interventions.
However, when ties are introduced to preference lists, this will no longer apply because the number of residents may vary over stable matchings.

In this paper, we formulate an optimization problem to find a stable matching with 
 the maximum total satisfaction ratio for lower quotas.
We first investigate how the total satisfaction ratio varies over choices of stable matchings in four natural scenarios and provide the exact values of these maximum gaps. 
Subsequently, we propose a strategy-proof approximation algorithm for our problem; in one scenario it solves the problem optimally, and in the other three scenarios, which are NP-hard, it yields a better approximation factor than that of a naive tie-breaking method.
Finally, we show inapproximability results for the above-mentioned three NP-hard scenarios.
\end{abstract}

\clearpage
\section{Introduction}\label{sec:intro}
\vspace{-1mm}
The Hospitals/Residents model (HR), a many-to-one matching model, has been extensively studied since the seminal work of Gale and Shapley \cite{10.2307/2312726}. 
Its input consists of a set of residents and a set of hospitals.
Each resident has a preference over hospitals; similarly, each hospital has a preference over residents.
In addition, each hospital is associated with a positive integer called the upper quota,
which specifies the maximum number of residents it can accept.
In this model, stability is the central solution concept, which requires the nonexistence of a blocking pair, i.e., a resident--hospital pair that has an incentive to deviate jointly from the current matching. 
In the basic model, each agent (resident or hospital) is assumed to have a strict preference for possible partners. For this model, the resident-oriented Gale--Shapley algorithm (also known as the deferred acceptance mechanism) is known to find a stable matching. This algorithm has advantages from both computational and strategic viewpoints: it runs in linear time and is strategy-proof for residents.

In reality, 
people typically have indifference among possible partners.
Accordingly, a stable matching model that allows {\em ties} in preference lists, denoted by {\em HRT} in the context of HR, was introduced \cite{DBLP:journals/dam/Irving94}.
For such a model, several definitions of stability are possible. 
Among them, {\em weak stability} provides a natural concept, in which agents have no incentive to move within the ties. 
It is known that if we break the ties of an instance $I$ arbitrarily, any stable matching of the resultant instance is a weakly stable matching of $I$.
Hence, the Gale--Shapley algorithm can still be used to obtain a weakly stable matching. 
In applications, typically, ties are broken randomly, or participants are forced to report strict preferences even if their true preferences have ties. 
Hereafter, ``stability'' in the presence of ties refers to ``weak stability,'' unless stated otherwise. 

It is commonly known that HR plays an important role not only in theory but also in practice;
for example, in assigning students to high schools \cite{10.1257/000282805774670167,10.1257/000282805774669637} and residents to hospitals \cite{RePEc:ucp:jpolec:v:92:y:1984:i:6:p:991-1016}.
In such applications, ``imbalance'' is one of the major problems. 
For example, hospitals in urban areas are generally more popular than those in rural areas; 
hence it is likely that the former are well-staffed whereas the latter suffer from a shortage of doctors.
One possible solution to this problem is
to introduce a {\em lower quota} of each hospital, which specifies the minimum number of residents required by a hospital, and obtain a stable matching that satisfies both the upper and lower quotas.
However, such a matching may not exist in general \cite{DBLP:journals/algorithmica/HamadaIM16,DBLP:books/ws/Manlove13},
and determining if such a stable matching exists in HRT is known to be NP-complete (which is an immediate consequence from page 276 of \cite{DBLP:journals/tcs/ManloveIIMM02}). 

In general, it is too pessimistic to assume that a shortage of residents would force
hospitals to go out of operation.
In some cases, the hospital simply has to reduce its service level according to how much its lower quota is satisfied.
In this scenario, a hospital will wish to satisfy the lower quota as much as possible, if not completely.
To formulate this situation, 
 we introduce the following optimization problem, which we call \emph{HRT to Maximally Satisfy Lower Quotas} (\emph{HRT-MSLQ}). 
 Specifically, let $R$ and $H$ be the sets of residents and hospitals, respectively.
All members in $R$ and $H$ have complete preference lists that may contain ties. Each hospital $h$ has an upper quota $u(h)$, the maximum number of residents it can accept. The stability of a matching is defined with respect to these preference lists and upper quotas, as in conventional HRT. 
In addition, each hospital $h$ is associated with a lower quota $\ell(h)$, which specifies the minimum number of residents required to keep its service level.
We assume that $\ell(h)\leq u(h)\leq |R|$ for each $h\in H$. 
For a stable matching $M$, let $M(h)$ be the set of residents assigned to $h$. 
The \emph{satisfaction ratio}, $s_M(h)$, of hospital $h \in H$ (with respect to $\ell(h)$) is defined as $s_M(h)=\min\left\{ 1, \frac{|M(h)|}{\ell(h)}\right\}$.
Here, we let $s_M(h)=1$ if $\ell(h)=0$, because the lower quota is automatically satisfied in this case.
The satisfaction ratio reflects a situation in which hospital $h$'s service level increases linearly with respect to the number of residents up to $\ell(h)$ but does not increase after that, even though $h$ is still willing to accept $u(h)-\ell(h)$ more residents.
These $u(h)-\ell(h)$ positions may be considered as ``marginal seats,'' which do not affect the service level but provide hospitals with advantages, such as generous work shifts.
 Our HRT-MSLQ problem asks us to 
maximize the total satisfaction ratio over the family $\mathcal{M}$ of all stable matchings 
in the problem instance, i.e.,
\[
\max_{M\in \mathcal{M}} \sum_{h\in H}s_M(h).
\]

The following are some remarks on our problem:
(1) To our best knowledge, almost all previous works on lower quotas
have investigated cases with no ties and have assumed lower quotas to be hard constraints. 
Refer to the discussion at the end of this section. 
(2) Our assumption that all preference lists are complete is theoretically a fundamental scenario used to study the satisfaction ratio for lower quotas. 
Moreover, there exist several cases in which this assumption is valid \cite{DBLP:journals/ior/AshlagiSS20,GOTO201640}. 
For example, according to Goto et al.~\cite{GOTO201640}, a complete list assumption is common in student--laboratory assignment in engineering departments of Japanese universities because it is mandatory that every student be assigned.
(3) If preference lists contain no ties, the satisfaction ratio $s_M(h)$ is identical for any stable matching $M$ because of the {\em rural hospitals theorem} \cite{DBLP:journals/dam/GaleS85,RePEc:ucp:jpolec:v:92:y:1984:i:6:p:991-1016,RePEc:ecm:emetrp:v:54:y:1986:i:2:p:425-27}.
Hence, there is no chance for algorithms to come into play if the stability is not relaxed. 
In our setting (i.e., with ties), the rural hospitals theorem implies that our task is essentially to find an optimal tie-breaking.
However, it is unclear how to find such a tie-breaking.  
\switch{}{
(4) Alternative objective functions may be considered to reflect our objective of satisfying the lower quotas.
In Appendix~\ref{appendix:other}, we introduce three such natural objective functions and briefly discuss their behaviors.
}

\medskip
\noindent\textbf{\textsf{Our Contributions.}}~
First, we study the goodness of any stable matching in terms of the total satisfaction ratios. 
For a problem instance $I$, let 
$\opt(I)$ and $\wst(I)$, respectively, denote the maximum and minimum total satisfaction ratios 
of the stable matchings of $I$. 
 For a family of problem instances ${\cal I}$, 
 let $\Lambda({\cal I})=\max_{I \in {\cal I}}\frac{\opt(I)}{\wst(I)}$ denote the maximum gap of the total satisfaction ratios. 
 In this paper, we consider the following four fundamental scenarios of ${\cal I}$:
(i) \emph{general model}, which consists of all problem instances, (ii) \emph{uniform model}, in which all hospitals have the same upper and lower quotas, (iii) \emph{marriage model}, in which each hospital has an upper quota of $1$ and a lower quota of either $0$ or $1$, and (iv) \emph{R-side ML model}, in which all residents have identical preference lists. 
The exact values of $\Lambda({\cal I})$ for all such fundamental scenarios are listed in the first row of Table \ref{table1}, where $n=|R|$. 
In the uniform model, we write $\theta=\frac{u(h)}{\ell(h)}$ for 
the ratio of the upper and lower quotas, which is common to all hospitals. 
Further detailed analyses can be found in \switch{the full version \cite{DBLP:journals/corr/abs-2105-03093}.}{Table~\ref{table2} of Appendix~\ref{app:approx}.}

Subsequently, we consider our problem algorithmically.  
Note that the aforementioned maximum gap corresponds to the worst-case approximation factor of the \emph{arbitrarily tie-breaking Gale--Shapley algorithm}, which is frequently used in practice; 
this algorithm first breaks ties in the preference lists of agents arbitrarily 
and then applies the Gale--Shapley algorithm on the resulting preference lists. 
This correspondence easily follows from the rural hospitals theorem\switch{ (see the full version \cite{DBLP:journals/corr/abs-2105-03093} for the details).}{, as explained in Proposition~\ref{prop:tie-breaking} in Appendix~\ref{app:approx}.}

In this paper, we show that 
there are two types of difficulties inherent in our problem HRT-MSLQ for all scenarios except (iv). 
Even for scenarios (i)--(iii), we show that (1) the problem is NP-hard and that (2) there is no algorithm that is strategy-proof for residents and always returns an optimal solution; see Section~\ref{sec:hardness} and Appendix~\ref{app:incompatible}.

We then consider strategy-proof approximation algorithms. 
We propose a strategy-proof algorithm {\sc Double Proposal}, which is applicable in all above possible scenarios, whose approximation factor 
is substantially better than that of the arbitrary tie-breaking method.
The approximation factors are listed in the second row of Table~\ref{table1},
where 
$\phi$ is a function defined by
$\phi(1)=1$, $\phi(2)=\frac{3}{2}$, and $\phi(n)=
n(1+\lfloor\frac{n}{2}\rfloor)/(n+\lfloor \frac{n}{2}\rfloor)$ for any $n\geq 3$. 
Note that $\frac{\theta^2+\theta-1}{2\theta-1}<\theta$ holds whenever $\theta>1$.

\renewcommand\arraystretch{0.9}
\begin{center}
\begin{threeparttable}[htbp]
  \centering
    \begin{tabular}{|l| c| c| c| c|} \hline
   & General  &\ Uniform \ &\  Marriage\ & \ $R$-side ML\  \\\hline\hline
   \begin{tabular}{l}~\vspace{-2mm}\\
   Maximum gap $\Lambda({\cal I})$
   \vspace{-0.5mm}\\
   {\scriptsize (i.e.,  Approx. factor of}
   \\[-.1cm]
   {\scriptsize arbitrary tie-breaking GS)}
   \end{tabular}
   &$n+1$&  $\theta$ & $2$ & $n+1$\\[0.4cm] \hline
   \begin{tabular}{l}~\vspace{-2mm}\\
   Approx. factor of \\
   {\sc Double Proposal}
   \end{tabular}& \ 
   $\phi(n)~(\sim \frac{n+2}{3})$ \ & $\frac{\theta^2+\theta-1}{2\theta-1}$ & $1.5$& $1$\\[.4cm] \hline
   \ Inapproximability&
   $n^{\frac{1}{4}-\epsilon}$\,\tnote{$*$} & $\frac{3\theta+4}{2\theta+4}-\epsilon\,$\tnote{$\dagger$} \ & $\frac{9}{8}-\epsilon$\,\tnote{$\dagger$} & --- \rule[-3mm]{0mm}{8mm}\\ \hline
  \end{tabular}
\begin{tablenotes}
\item[$*$] {\small Under P $\not=$ NP}
\item[$\dagger$] {\small Under the Unique Games Conjecture}

\end{tablenotes}
\smallskip 
  \caption{Maximum gap $\Lambda({\cal I})$, approximation factor of {\sc Double Proposal}, and inapproximability of HRT-MSLQ for four fundamental scenarios ${\cal I}$.}
  \label{table1}
\end{threeparttable}
\end{center}
\medskip

\noindent\textbf{\textsf{Techniques.}}~
Our algorithm {\sc Double Proposal} is based on the resident-oriented Gale--Shapley algorithm and is inspired by previous research on approximation algorithms \cite{DBLP:journals/algorithms/Kiraly13,DBLP:conf/isaac/HamadaMY19} for another NP-hard problem called MAX-SMTI.
Unlike in the conventional Gale--Shapley algorithm, our algorithm allows each resident $r$ to make proposals twice to each hospital. 
Among the hospitals in the top tie of the current preference list, $r$ prefers hospitals to which $r$ has not yet proposed to those which $r$ has already proposed to once.
When a hospital $h$ receives a new proposal from $r$, hospital $h$ may accept or reject it, and in the former case, $h$ may reject a currently assigned resident to accommodate $r$.
In contrast to the conventional Gale--Shapley algorithm, a rejection may occur even if $h$ is not full.
If at least $\ell(h)$ residents are currently assigned to $h$ and at least one of them 
has not been rejected by $h$ so far, then $h$ rejects such a resident, regardless of its preference.
This process can be considered as the algorithm dynamically finding a tie-breaking in $r$'s preference list.

The main difficulty in our problem originates from the complicated form of our objective function
$s(M)=\sum_{h\in H} \min\{ 1, \frac{|M(h)|}{\ell(h)}\}$.
In particular, non-linearity of $s(M)$ makes the analysis of the approximation factor of {\sc Double Proposal} considerably hard.
We therefore introduce some new ideas and techniques to analyze the maximum gap $\Lambda$ and approximation factor of our algorithm, which is one of the main novelties of this paper. 

To estimate the approximation factor of the algorithm, 
we need to compare objective values of 
a stable matching $M$ output by the algorithm and 
an (unknown) optimal stable matching $N$.
A typical technique used to compare two matchings is 
to consider a graph of their union.
In the marriage model, the connected components of the union are paths and cycles,
both of which are easy to analyze;
however, this is not the case in a general many-to-one matching model.
For some problems, 
this approach still works via ``cloning,'' which transforms an instance of HR
into that of the marriage model by replacing each hospital $h$ with an upper quota of $u(h)$ by
$u(h)$ hospitals with an upper quota of $1$.
Unfortunately, however, in HRT-MSLQ there seems to be no simple way to transform the general model into the marriage model because of the non-linearity of the objective function.

In our analysis of the uniform model, the union graph of $M$ and $N$ may have a complex structure.
We categorize hospitals using a procedure like breadth-first search starting from the set of hospitals $h$ with the satisfaction ratio $s_N(h)$ larger than $s_M(h)$, which allows us to provide a tight bound on the approximation factor.
For the general model, 
instead of using the union graph, we define two vectors that distribute the values $s(M)$ and $s(N)$ to the residents. 
By making use of the local optimality of $M$ proven in Section~\ref{sec:algorithm}, we compare such two vectors 
and give a tight bound on the approximation factor.

We finally remark that the improvement of {\sc Double Proposal} over the maximum gap 
shows that 
our problem exhibits a different phenomenon from that of MAX-SMTI because the approximation factor of MAX-SMTI cannot be improved from a naive tie-breaking method if
 strategy-proofness is imposed \cite{DBLP:conf/isaac/HamadaMY19}. 

\medskip
\noindent\textbf{\textsf{Related Work.}}~
Recently, the Hospitals/Residents problems with lower quotas are quite popular in the literature; however, 
most of these studies are on settings without ties.
The problems related to HRT-MSLQ can be classified into three models.
The model by Hamada et al.~\cite{DBLP:journals/algorithmica/HamadaIM16}, denoted by HR-LQ-2 in \cite{DBLP:books/ws/Manlove13}, is the closest to ours.
The input of this model is the same as ours, but the hard and soft constraints are different from ours; their solution must satisfy both upper and lower quotas, the objective being to maximize the stability (e.g., to minimize the number of blocking pairs).
Another model, introduced by Bir\'{o} et al.~\cite{DBLP:journals/tcs/BiroFIM10} and denoted by HR-LQ-1 in \cite{DBLP:books/ws/Manlove13}, allows some hospitals to be closed; a closed hospital is not assigned any resident. 
They showed that it is NP-complete to determine the existence of a stable matching.
This model was further studied by Boehmer and Heeger \cite{DBLP:conf/wine/BoehmerH20} from a parameterized complexity perspective.
Huang \cite{DBLP:conf/soda/Huang10} introduced the {\em classified stable matching} model, in which each hospital defines a family of subsets $R$ of residents and each subset of $R$ has an upper and lower quota.
This model was extended by Fleiner and Kamiyama \cite{DBLP:journals/mor/FleinerK16} to a many-to-many matching model where both sides have upper and lower quotas.
Apart from these, 
several matching problems with lower quotas have been studied in the literature, whose solution concepts are different from stability \cite{DBLP:journals/algorithmica/ArulselvanCGMM18,DBLP:journals/teco/FragiadakisITUY15,DBLP:conf/sagt/KrishnaaLNN20,DBLP:conf/wea/MNNR18,DBLP:journals/algorithmica/Yokoi20}.

\medskip
\noindent\textbf{\textsf{Paper Organization.}}~
The rest of the paper is organized as follows.
Section~\ref{sec:definition} formulates our problem HRT-MSLQ, and 
Section~\ref{sec:algorithm} describes our algorithm {\sc Double Proposal} for HRT-MSLQ. 
Section~\ref{sec:strategy-proofness} shows the strategy-proofness of {\sc Double Proposal}.
Section~\ref{sec:approximation} is devoted to proving
 the maximum gaps $\Lambda$ and approximation factors of algorithm {\sc Double Proposal}
for the several scenarios mentioned above. 
Finally, Section~\ref{sec:hardness} provides hardness results such as NP-hardness and inapproximability  for several scenarios. 
\switch{Because of space constraints, some proofs are omitted and included in the full version \cite{DBLP:journals/corr/abs-2105-03093}.}{Some proofs are deferred to appendices.}

\section{Problem Definition}\label{sec:definition}
Let $R = \{ r_{1}, r_{2}, \ldots, r_{n} \}$ be a set of residents and $H = \{ h_{1}, h_{2}, \ldots, h_{m} \}$ be a set of hospitals.
Each hospital $h$ has a lower quota $\ell(h)$ and an upper quota $u(h)$ such that $\ell(h) \leq u(h)\leq n$.
We sometimes denote a hospital $h$'s quota pair as $[\ell(h), u(h)]$ for simplicity.
Each resident has a preference list over hospitals, which is complete and may contain ties.
If a resident $r$ prefers a hospital $h_{i}$ to $h_{j}$, we write $h_{i} \succ_{r} h_{j}$.
If $r$ is indifferent between $h_{i}$ and $h_{j}$ (including the case that $h_{i} = h_{j}$), we write $h_{i} =_{r} h_{j}$.
We use the notation $h_{i} \succeq_{r} h_{j}$ to signify that $h_{i} \succ_{r} h_{j}$ or $h_{i} =_{r} h_{j}$ holds.
Similarly, each hospital has a preference list over residents and the same notations as above are used.
In this paper, a preference list is denoted by one row, from left to right according to the preference order.
When two or more agents are of equal preference, they are enclosed in parentheses.
For example, ``$r_{1}$: \ $h_{3}$ \  ( \ $h_{2}$ \ $h_{4}$ \ ) \ $h_{1}$''
is a preference list of resident $r_1$ such that $h_{3}$ is the top choice, $h_{2}$ and $h_{4}$ are the  second choice with equal preference, and $h_{1}$ is the last choice.

An {\em assignment} is a subset of $R \times H$.
For an assignment $M$ and a resident $r$, let $M(r)$ be the set of hospitals $h$ such that $(r, h) \in M$.
Similarly, for a hospital $h$, let $M(h)$ be the set of residents $r$ such that $(r, h) \in M$.
An assignment $M$ is called a {\em matching} if $|M(r)| \leq 1$ for each resident $r$ and $|M(h)| \leq u(h)$ for each hospital $h$.
For a matching $M$, a resident $r$ is called {\em matched} if $|M(r)|=1$ and {\em unmatched} otherwise.
If $(r,h) \in M$, we say that $r$ is {\em assigned to} $h$ and $h$ is {\em assigned} $r$.  
We sometimes abuse notation $M(r)$ to denote the unique hospital where $r$ is assigned.   
A hospital $h$ is called {\em deficient} or {\em sufficient} if  $|M(h)| < \ell(h)$ or $\ell(h) \leq |M(h)| \leq u(h)$, respectively.
Additionally, a hospital $h$ is called {\em full} if $|M(h)| = u(h)$ and {\em undersubscribed} otherwise.

A resident--hospital pair $(r,h)$ is called a {\em blocking pair} for a matching $M$ (or we say that $(r, h)$ {\em blocks} $M$) if (i) $r$ is either unmatched in $M$ or prefers $h$ to $M(r)$ and (ii) $h$ is either undersubscribed in $M$ or prefers $r$ to at least one resident in $M(h)$.
A matching is called {\em stable} if it admits no blocking pair. 
Recall that the satisfaction ratio of a hospital $h$ (which is also called {\em the score} of $h$) in a matching $M$ is 
defined by  $s_{M}(h) = \min\{ 1, \frac{|M(h)|}{\ell(h)} \}$, where we define $s_M(h)=1$ if $\ell(h)=0$. 
The {\em total satisfaction ratio} (also called {\em the score}) of a matching $M$, is the sum of the scores of all hospitals, that is, 
$s(M) = \sum_{h \in H} s_{M}(h)$.
The Hospitals/Residents problem with Ties to Maximally Satisfy Lower Quotas, denoted by {\em HRT-MSLQ}, is to find a stable matching $M$ that maximizes the score $s(M)$.
The optimal score of an instance $I$ is denoted by $\opt(I)$.

Note that if $|R| \geq \sum_{h \in H} u(h)$, then all hospitals are full in any stable matching (recall that preference lists are complete).
Hence, all stable matchings have the same score $|H|$, and the problem is trivial.
Therefore, throughout this paper, we assume $|R| < \sum_{h \in H} u(h)$.
In this setting, all residents are matched in any stable matching as an unmatched resident forms a blocking pair with an undersubscribed hospital.

\section{Algorithm}\label{sec:algorithm}
In this section, we present our algorithm {\sc Double Proposal} 
for HRT-MSLQ along with a few of its basic properties.
Its strategy-proofness and approximation factors for several models are presented in the following sections.

Our proposed algorithm {\sc Double Proposal} is based on the resident-oriented Gale--Shapley algorithm but allows each resident $r$ to make proposals twice to each hospital. 
Here, we explain the ideas underlying this modification.

Let us apply the ordinary resident-oriented Gale--Shapley algorithm to  HRT-MSLQ, which starts with an empty matching 
$M\coloneqq \emptyset$ and repeatedly updates $M$ by a proposal-acceptance/rejection process.
In each iteration, the algorithm takes a currently unassigned resident $r$ and lets her propose to 
the hospital at the top of her current list.
If the preference list of resident $r$ contains ties, 
the proposal order of $r$ depends on how to break the ties in her list. 
Hence, we need to define a priority rule for hospitals that are in a tie.
Recall that our objective function is given by $s(M)=\sum_{h\in H} \min\{ 1, \frac{|M(h)|}{\ell(h)}\}$.
This value immediately increases by $\frac{1}{\ell(h)}$ if $r$ proposes to a deficient hospital $h$,
whereas it does not increase if $r$ proposes to a sufficient hospital $h'$, 
although the latter may cause a rejection of some resident if $h'$ is full.
Therefore, a naive greedy approach is to let $r$ first prioritize deficient hospitals over sufficient hospitals and then prioritize those with small lower quotas among deficient hospitals. 
This approach is useful for attaining a larger objective value for some instances; however, it is not enough to improve the approximation factor in the sense of worst case analysis, 
as a deficient hospital $h$ in some iteration 
might become sufficient later and it might be better if $r$ had made a proposal to a hospital other than $h$ in the tie. 
Furthermore, this naive approach sacrifices strategy-proofness as demonstrated in Appendix~\ref{app:naive}.
This failure of strategy-proofness follows from the adaptivity of this tie-breaking rule, 
in the sense that the proposal order of each resident is affected by the other residents' behaviors.

In our algorithm {\sc Double Proposal}, each resident can propose twice to each hospital.
If the head of $r$'s preference list is a tie when $r$ makes a proposal, 
then the hospitals to which $r$ has not yet proposed are prioritized.
This idea was inspired by an algorithm of \cite{DBLP:conf/isaac/HamadaMY19}.
Recall that each hospital $h$ has an upper quota $u(h)$ and a lower quota $\ell(h)$.
In our algorithm, we use $\ell(h)$ as a dummy upper quota. 
Whenever $|M(h)|<\ell(h)$, a hospital $h$ accepts any proposal. 
If $h$ receives a new proposal from $r$ when $|M(h)|\geq \ell(h)$, then $h$ checks whether there is a resident in $M(h)\cup\{r\}$ who has not been rejected by $h$ so far. 
If such a resident exists, $h$ rejects that resident regardless of the preference of $h$.
Otherwise, we apply the usual acceptance/rejection operation, i.e., $h$ accepts $r$ if $|M(h)|<u(h)$ and otherwise replaces $r$ with the worst resident $r'$ in $M(h)$. 
Roughly speaking, the first proposals are used to implement priority on deficient hospitals, and
the second proposals are used to guarantee stability.

Formally, our algorithm {\sc Double Proposal} is described in Algorithm~\ref{alg1}. 
For convenience, in the preference list, a hospital $h$ that is not included in any tie is regarded as a tie consisting of $h$ only.
We say that a resident is {\em rejected} by a hospital $h$ if she is chosen as $r'$ 
in Lines~\ref{chosen1} or \ref{chosen2}.
To argue strategy-proofness, we need to make the algorithm deterministic.
To this end, we remove arbitrariness using indices of agents as follows.
If there are multiple hospitals (resp., residents) satisfying the condition to be chosen at Lines~\ref{proposal1} or \ref{proposal2}
(resp., at Lines~\ref{chosen1} or \ref{chosen2}), take the one with the smallest index (resp., with the largest index).
Furthermore, when there are multiple unmatched residents at Line~\ref{unmatched}, take the one with the smallest index. %
In this paper, {\sc Double Proposal} always refers to this deterministic version.

\begin{algorithm}[htb]
\caption{\sc ~Double Proposal}
\label{alg1}
\begin{algorithmic}[1]
\REQUIRE An instance $I$ where each $h\in H$ has quotas $[\ell(h),u(h)]$.
\ENSURE A stable matching $M$.
\STATE $M\coloneqq \emptyset$
\WHILE{there is an unmatched resident}
\STATE Let $r$ be any unmatched resident and $T$ be the top tie of $r$'s list.\label{unmatched}
\IF{$T$ contains a hospital to which $r$ has not proposed yet}\label{propose}
\STATE Let $h$ be such a hospital with minimum $\ell(h)$. \label{proposal1}
\ELSE
\STATE Let $h$ be a hospital with minimum $\ell(h)$ in $T$. \label{proposal2}
\ENDIF
\IF{$|M(h)|<\ell(h)$}
\STATE Let $M\coloneqq M\cup\{(r,h)\}$.\label{update1}
\ELSIF{there is a resident in $M(h)\cup\{r\}$ who has not been rejected by $h$}\label{reject}
\STATE Let $r'$ be such a resident (possibly $r'=r$).\label{chosen1}
\STATE Let $M\coloneqq (M\cup \{(r,h)\})\setminus\{(r',h)\}$.\label{reject-end}
\ELSIF{$|M(h)|<u(h)$}
\STATE $M\coloneqq M\cup\{(r,h)\}$.\label{update3}
\ELSE[i.e., when $|M(h)|=u(h)$ and all residents in $M(h)\cup\{r\}$ have been rejected by $h$ once]\label{u-full}
\STATE Let $r'$ be any resident that is worst in $M(h)\cup\{r\}$ for $h$ (possibly $r'=r$).\label{chosen2}
\STATE Let $M\coloneqq (M\cup \{(r,h)\})\setminus\{(r',h)\}$.\label{update4}
\STATE Delete $h$ from $r'$'s list.
\ENDIF
\ENDWHILE
\STATE Output $M$ and halt.
\end{algorithmic}
\end{algorithm}

\begin{lemma}\label{lem:stability}
Algorithm {\sc Double Proposal} runs in linear time and outputs a stable matching.
\end{lemma}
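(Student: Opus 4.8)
The plan is to check, in this order, that Algorithm~\ref{alg1} always keeps $M$ a matching, that it terminates after at most $2nm$ proposals (which also yields the linear running time and the fact that every resident ends up matched), and finally that the returned matching admits no blocking pair.

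For the first two points I would use two invariants that are immediate from the pseudocode: (i) for every $h\in H$, $|M(h)|$ never decreases and never exceeds $u(h)$, since Lines~\ref{update1} and~\ref{update3} increase it only while it is below $\ell(h)$, respectively below $u(h)$, and Lines~\ref{reject-end} and~\ref{update4} leave it unchanged; hence $M$ is always a matching. (ii) A hospital is removed from a resident's list only by the deletion step following Line~\ref{update4}, hence only when it is full, and such removals are permanent. Using these I would argue that each resident $r$ proposes to each hospital $h$ at most twice: the first time $r$ proposes to $h$ the proposal is issued at Line~\ref{proposal1} (at that moment $h$ has not been proposed to by $r$ and lies in $r$'s top tie), after which $r$ is either matched to $h$ or has been rejected by $h$; and a short case check of Lines~\ref{reject}--\ref{update4} shows that, once $r$ has been rejected by $h$, a second proposal to $h$ either leaves $r$ matched to $h$ (and $r$ then stays matched to $h$ until $h$, if ever, is deleted from her list) or causes $h$ to be deleted from her list; in both cases $r$ never proposes to $h$ again. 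Thus the number of while-loop iterations, which equals the number of proposals, is at most $2nm$. Each iteration is well defined because the chosen unmatched resident always has a nonempty list: otherwise every hospital would have been deleted from it, hence by (ii) every hospital would be full, giving $\sum_{h\in H}|M(h)|=\sum_{h\in H}u(h)>|R|$, contradicting that $M$ is a matching. With the data structures standard for the linear-time Gale--Shapley algorithm (ties pre-sorted by lower quota, one pointer per resident into her list, and one pointer per hospital advancing monotonically to its worst current assignee), each iteration costs amortized $O(1)$, so the total running time is $O(nm)$, which is linear in the input size; and when the loop halts, no resident is unmatched.

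For stability, suppose towards a contradiction that $(r,h)$ blocks the output $M$. Since $r$ is matched, the definition of a blocking pair gives $h\succ_r M(r)$. Consider $r$'s last proposal during the run: it must be accepted (otherwise $r$ would stay unmatched and be picked again), so $r$ ends matched to the hospital of that proposal, which is therefore $M(r)$; and since at that moment $M(r)$ lies in $r$'s top tie and the relative order of her list is preserved under deletions, every hospital that $r$ strictly prefers to $M(r)$, in particular $h$, has already been deleted from her list. I would then examine the step at which $h$ was deleted from $r$'s list, namely the deletion following an execution of Line~\ref{update4} with $r'=r$: there $h$ is full and $r$ is chosen as a worst resident with respect to $h$, so immediately afterwards $h$ is full, every resident assigned to $h$ has been rejected by $h$, and every such resident $x$ satisfies $x\succeq_h r$. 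The key step is to show that this configuration persists until the algorithm halts. Indeed, while $h$ is full and all residents in $M(h)$ have been rejected by $h$, the step of Lines~\ref{reject}--\ref{reject-end} cannot insert a new resident into $M(h)$: a proposer not yet rejected by $h$ is bounced out there, while an already-rejected proposer falls through to the branch at Line~\ref{u-full}. Hence $M(h)$ can change only via Line~\ref{update4}, where the entering resident is weakly preferred by $h$ to the resident she displaces (a worst one), which by induction is weakly preferred by $h$ to $r$; so $x\succeq_h r$ is preserved for all $x\in M(h)$, and so are fullness and the property that all assignees have been rejected by $h$. Therefore at termination $h$ is full and $h$ prefers no resident of $M(h)$ less than $r$, so the second condition for $(r,h)$ to be a blocking pair fails, a contradiction. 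Hence $M$ is stable.

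The step I expect to be the main obstacle is precisely this last invariant in the stability argument. Its delicacy is that Line~\ref{reject-end} can in general swap residents into $M(h)$ with no regard for $h$'s preferences; one must first isolate the regime in which this cannot happen, namely $h$ full with every assignee already rejected by $h$, then verify that $h$ has entered this regime by the time it is deleted from $r$'s list, and only then run the monotonicity induction familiar from the Gale--Shapley stability proof. The remaining ingredients, the matching invariant, the two-proposals-per-pair bound, and the amortized bookkeeping, are routine.
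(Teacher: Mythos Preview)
Your proof is correct and follows essentially the same approach as the paper's own proof: at most two proposals per resident--hospital pair (giving the $2nm$ bound on iterations), the preference list never empties because deletion only happens when the hospital is full, and stability via the invariant that once $r$ has been rejected the second time by $h$ (equivalently, once $h$ is deleted from $r$'s list), $h$ remains full with assignees all weakly preferred to $r$.

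One small caveat: your data-structure remark (``one pointer per hospital advancing monotonically to its worst current assignee'') is accurate only for the regime of Line~\ref{chosen2}; Line~\ref{chosen1} instead needs the resident of largest index among those in $M(h)\cup\{r\}$ not yet rejected by $h$, which is not the standard Gale--Shapley bookkeeping. The paper's proof glosses over this too, and an amortized-linear implementation is still possible (each resident enters and leaves the ``not-yet-rejected'' set of each hospital at most once), so this does not affect correctness.
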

\begin{proof}
Clearly, the size of the input is $O(|R||H|)$.
As each resident proposes to each hospital at most twice, 
the while loop is iterated at most $2|R||H|$ times.
At Lines~\ref{proposal1} and \ref{proposal2}, a resident prefers hospitals with smaller $\ell(h)$,
and hence we need to sort hospitals in each tie in an increasing order of the values of $\ell$.
Since $0\leq \ell(h)\leq n$ for each $h\in H$,
$\ell$ has only $|R|+1$ possible values.
Therefore, the required sorting can be done in $O(|R||H|)$ time as a preprocessing step using a method like bucket sort.
Thus, our algorithm runs in linear time.

Observe that a hospital $h$ is deleted from $r$'s list only if $h$ is full. Additionally, once $h$ becomes full, it remains so afterward. Since each resident has a complete preference list and $|R|<\sum_{h \in H} u(h)$, the preference list of each resident never becomes empty.
Therefore, all residents are matched in the output $M$.

Suppose, to the contrary, that $M$ is not stable, i.e., there is a pair $(r,h)$ such that (i) $r$ prefers $h$ to $M(r)$ and (ii) $h$ is either undersubscribed or prefers $r$ to at least one resident in $M(h)$.
By the algorithm, (i) implies that $r$ is rejected by $h$ twice.
Just after the second rejection, $h$ is full, and all residents in $M(h)$ have once been rejected by $h$ and are no worse than $r$ for $h$. 
Since $M(h)$ is monotonically improving for $h$, at the end of the algorithm $h$ is still full and no resident in $M(h)$ is worse than $r$, which contradicts (ii).
\end{proof}
In addition to stability, the output of {\sc Double Proposal} satisfies the following property, 
which plays a key role in the analysis of the approximation factors in Section~\ref{sec:approximation}.
\begin{lemma}\label{lem:property}
Let $M$ be the output of {\sc Double Proposal}, $r$ be a resident, and $h$ and $h'$ be hospitals such that $h=_{r} h'$ and $M(r)=h$.
Then, we have the following conditions:
\begin{enumerate}
\item[\rm (i)] If $\ell(h)> \ell(h')$, then $|M(h')|\geq \ell(h')$.
\item[\rm (ii)] If $|M(h)|> \ell(h)$, then $|M(h')|\geq \ell(h')$.
\end{enumerate}
\end{lemma}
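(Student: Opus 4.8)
I would examine the \emph{last} proposal made by $r$; since $r$ ends up assigned to $h$ and a resident proposes only while unmatched, this last proposal is to $h$, and $r$ stays assigned to $h$ afterwards. Two simple facts are used throughout. \textbf{(a)} For every hospital $g$, the size $|M(g)|$ is non-decreasing during the run, because each update of $M(g)$ either adds a resident (Line~\ref{update1} or~\ref{update3}) or swaps one for another (Line~\ref{reject-end} or~\ref{update4}); moreover $g$ is removed from a resident's list only at Lines~\ref{chosen2} and~\ref{update4}, where $|M(g)|=u(g)$. In particular, if $h'$ is ever removed from $r$'s list then $|M(h')|=u(h')\ge\ell(h')$ and we are done; so from now on assume $h'$ stays in $r$'s list, whence $h$ and $h'$ lie in a common tie $T$ that is $r$'s top tie whenever $r$ proposes to $h$. \textbf{(b)} If $r$ ever proposes to $h'$, then $|M(h')|\ge\ell(h')$ in the output: at that proposal either $|M(h')|\ge\ell(h')$ already (done by (a)), or $r$ is accepted at Line~\ref{update1}; in the latter case, since $M(r)=h\neq h'$ at the end, $r$ is later rejected by $h'$, which happens only at Line~\ref{chosen1} or~\ref{chosen2}, where $|M(h')|\ge\ell(h')$, and (a) applies.

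For (i), assume toward a contradiction that $|M(h')|<\ell(h')$. By (b), $r$ never proposed to $h'$. Consider $r$'s last proposal, to $h$: since $h'\in T$ has not been proposed to by $r$, the algorithm enters the branch at Line~\ref{propose} and picks $h$ as a hospital of minimum value $\ell(h)$ among the not-yet-proposed hospitals of $T$; as $h'$ is one of these, $\ell(h)\le\ell(h')$, contradicting $\ell(h)>\ell(h')$.

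For (ii), the crux is the following claim: \emph{if $|M(h)|>\ell(h)$ in the output, then every resident of $M(h)$ has been rejected by $h$ at least once.} Granting it, $r$ has been rejected by $h$, hence $r$ proposed to $h$ at least twice; but a second proposal of $r$ to $h$ is made only when $r$'s current top tie contains no hospital $r$ has not yet proposed to (otherwise the branch at Line~\ref{propose} would send $r$ elsewhere), so by then $r$ has proposed to every hospital of that tie, in particular to $h'$. Then (b) gives $|M(h')|\ge\ell(h')$, as desired.

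The main obstacle is the claim. By (a), $|M(h)|$ reaches $\ell(h)+1$ for the first time at some step, and an increase of $|M(h)|$ above $\ell(h)$ is possible only at Line~\ref{update3}; the guard of that line forces $M(h)$ together with the incoming proposer to consist entirely of residents already rejected by $h$. One then proves by induction on the subsequent steps that ``$|M(h)|>\ell(h)$ and every resident of $M(h)$ has been rejected by $h$'' is an invariant: under this hypothesis a proposal to $h$ skips the test $|M(h)|<\ell(h)$ guarding Line~\ref{update1} and is resolved at Line~\ref{chosen1} --- where, because $M(h)$ has no unrejected resident, the chosen $r'$ is the proposer itself, which simply bounces and leaves $M(h)$ unchanged --- or, when the proposer has already been rejected by $h$, at Line~\ref{update3} or~\ref{update4}, each of which adds or keeps only residents rejected by $h$ and keeps $|M(h)|>\ell(h)$. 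The one scenario to rule out is an unrejected resident re-entering $M(h)$ through the swap at Line~\ref{reject-end}; but that swap removes an unrejected resident of $M(h)$, which the inductive hypothesis forbids. This establishes the claim and completes the proof of the lemma.
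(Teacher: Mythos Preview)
Your proof is correct and follows essentially the same approach as the paper's: both rely on the monotonicity of $|M(h')|$, the fact that a rejection by (or proposal to) $h'$ forces $|M(h')|\ge\ell(h')$, the proposal-order rule at Lines~\ref{propose}--\ref{proposal2} for part~(i), and the invariant that once $|M(h)|>\ell(h)$ every resident in $M(h)$ has been rejected by $h$ for part~(ii). The paper states this invariant in one sentence (``By Line~\ref{reject}, at any point after this, $M(h)$ consists only of residents who have once been rejected by $h$''), whereas you spell out the induction explicitly; otherwise the arguments coincide.
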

\begin{proof}
(i) Since $h=_{r} h'$, $\ell(h)>\ell(h')$, and $r$ is assigned to $h$ in $M$, 
the definition of the algorithm (Lines~\ref{propose}, \ref{proposal1}, and \ref{proposal2})
implies that $r$ proposed to $h'$ and was rejected by $h'$ before she proposes to $h$.
Just after this rejection occurred, $|M(h')|\geq \ell(h')$ holds. 
Since $|M(h')|$ is monotonically increasing, we also have $|M(h')|\geq \ell(h')$ at the end.

(ii) Since $|M(h)|>\ell(h)$, the value of $|M(h)|$ changes from $\ell(h)$ to $\ell(h)+1$ at some moment of the algorithm.
By Line~\ref{reject} of the algorithm,  
at any point after this, $M(h)$ consists only of residents who have once been rejected by $h$.
Since $M(r)=h$ for the output $M$, at some moment $r$ must have made the second proposal to $h$. 
By Line~\ref{propose} of the algorithm, $h=_r h'$ implies that $r$ has been rejected by $h'$ at least once, which implies that $|M(h')|\geq \ell(h')$ at this moment and also at the end.
\end{proof}

Lemma~\ref{lem:property} states some local optimality of {\sc Double Proposal}.
Suppose that we reassign $r$ from $h$ to $h'$.
Then, $h$ may lose and $h'$ may gain score, 
but Lemma~\ref{lem:property} says that the objective value does not increase.
To see this, note that if the objective value were to increase, 
$h'$ must gain score and $h$ would either not lose score or lose less score than $h'$ would gain.
The former and the latter are the ``if'' parts of (ii) and (i), respectively, and in either case the conclusion $|M(h')|\geq \ell(h')$ implies that $h'$ cannot gain score by accepting one more resident.

\section{Strategy-proofness}\label{sec:strategy-proofness}
An algorithm is called {\em strategy-proof} for residents if it gives residents no incentive to misrepresent
their preferences. The precise definition follows. 
An algorithm that always outputs a matching deterministically  
can be regarded as a mapping from instances of HRT-MSLQ into matchings.
Let $A$ be an algorithm. We denote by $A(I)$ the matching returned by $A$ for an instance $I$.
For any instance $I$, let $r\in R$ be any resident, who has a preference $\succeq_r$. 
Additionally, let $I'$ be an instance of HRT-MSLQ which is obtained from $I$ by replacing 
$\succeq_r$ with some other $\succeq'_r$. Furthermore, let $M\coloneqq A(I)$ and $M'\coloneqq A(I')$.
Then, $A$ is strategy-proof if $M(r)\succeq_{r} M'(r)$ holds regardless of the choices of $I$, $r$, and $\succeq'_{r}$.

In the setting without ties, it is known that the resident-oriented Gale--Shapley algorithm is strategy-proof for residents (even if preference lists are incomplete) \cite{df81,books/daglib/0066875,10.2307/3689483}.
Furthermore, it has been proved that no algorithm can be strategy-proof for both residents and hospitals \cite{10.2307/3689483}.
As in many existing papers on two-sided matching, 
we use the term ``strategy-proofness'' to refer to strategy-proofness for residents.

Before proving the strategy-proofness of {\sc Double Proposal},
we remark that the exact optimization and strategy-proofness are incompatible even if a computational issue is set aside.
The following fact is demonstrated in Appendix~\ref{app:incompatible}. 
\begin{proposition}\label{prop:opt-vs-SP}
There is no algorithm that is strategy-proof for residents and returns an optimal solution for any instance of HRT-MSLQ.
The statement holds even for the uniform and marriage models.
\end{proposition}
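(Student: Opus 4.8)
The plan is to construct, for each of the marriage and uniform models, a pair of instances $I$ and $I'$ that coincide except for the preference list of one resident $r_1$, such that each of $I$ and $I'$ has a \emph{unique} score-maximizing stable matching, and such that $r_1$ is strictly better off --- according to her preference list in $I$ --- under the optimum of $I'$ than under that of $I$. Any algorithm that always returns a score-maximizing stable matching must output these unique optima on $I$ and $I'$, so $r_1$ would profitably deviate by reporting her $I'$-list in place of her true ($I$-)list; this contradicts strategy-proofness. Since the instances are small, every verification reduces to a finite case check. (The general model contains both restricted classes, so this settles that case as well; here ``stable'' means weakly stable throughout, as in the rest of the paper.)

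The design principle I would use is the following. In $I$ I arrange two stable matchings $N$ and $M$ with $s(N)>s(M)$, so that an optimal algorithm is forced to return $N$ on $I$; at the same time I make $M$ place $r_1$ at a hospital $h$ that is high on $r_1$'s true list, whereas $N(r_1)$ is lower. Ties in the hospitals' lists are essential: they make $N$ weakly stable even though $M$ is the matching produced by the natural proposal process. The list reported in $I'$ is obtained from $r_1$'s true list by a single modification --- promoting a hospital $h^\ast$ that $r_1$ finds unattractive (a hospital with lower quota $0$ in the marriage model, or an already-satisfied but not-yet-full hospital in the uniform model) to a position just above $N(r_1)$. In $I'$ the pair $(r_1,h^\ast)$ becomes blocking for $N$, because $h^\ast$ is undersubscribed in $N$, so $N$ is no longer weakly stable; meanwhile $M$ survives, since in $M$ the resident $r_1$ still sits above $h^\ast$ in her reported list. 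One then checks that $M$ is the unique optimum of $I'$, so the optimal algorithm returns it and assigns $r_1$ to $h$, which in $I$ she ranks strictly above $N(r_1)$.

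The hard part will be establishing the two uniqueness claims. On the $I$ side one must eliminate every competing stable matching of score $s(N)$; the persistent obstacle is that ``everyone to (one of) their favorite(s)'' matchings tend to be weakly stable and of maximal score, so the hospitals' lists must be chosen so that $r_1$'s favorite is, in any stable matching, captured by a resident the hospital prefers to $r_1$. On the $I'$ side one must check that promoting $h^\ast$ creates no new stable matching of score $\ge s(M)$, in particular none that parks $r_1$ at $h^\ast$ itself. Reconciling these forces a delicate choice: the occupant of $N(r_1)$ in $N$ must be a resident whom $N(r_1)$ prefers to $r_1$ (so $r_1$ cannot take $N(r_1)$ in $N$), yet one who would herself rather be at $N(r_1)$ and is kept out only by $r_1$'s presence there, so that once $r_1$ vacates $N(r_1)$ in $I'$ every matching sending $r_1$ to $h^\ast$ is destabilized by her. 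Balancing these constraints against the complete-list requirement is the whole content of the argument; the explicit small instances, and the routine finite verification of the two uniqueness claims, are deferred to Appendix~\ref{app:incompatible}.
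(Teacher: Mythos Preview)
Your high-level strategy is sound, and your description of the manipulation---promoting an undersubscribed hospital $h^\ast$ in the misreporting resident's list so as to destabilize the current optimum $N$ while preserving $M$---is essentially what the paper does. Where you diverge from the paper is in your treatment of $I$: you insist that $I$ have a \emph{unique} score-maximizing stable matching, and you correctly flag establishing this as the hard part. The paper bypasses this difficulty entirely by allowing $I$ to have \emph{two} optima of equal score and handling the resulting ambiguity with a one-line case split.

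In the paper's marriage-model construction (Example~\ref{ex:SP}), the instance $I$ has only two residents and three hospitals, and it admits exactly two stable matchings $M_1=\{(r_1,h_1),(r_2,h_2)\}$ and $M_2=\{(r_1,h_2),(r_2,h_1)\}$, both of score~$3$. The paper assumes without loss of generality that the algorithm outputs $M_1$; then $r_2$ misreports ``$h_1\ h_3\ h_2$,'' and the resulting instance $I'$ has a unique optimum assigning $r_2$ to $h_1\succ_{r_2}h_2$. By the complete symmetry of $I$ in $r_1$ and $r_2$, the case where the algorithm outputs $M_2$ is handled by swapping their roles. The uniform-model construction (Example~\ref{ex:SP2}) follows the same pattern but makes the case split explicit: if the algorithm outputs $M_1$ then $r_2$ manipulates, and if $M_2$ then $r_6$ does.

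This buys considerable simplicity: the instances are tiny and every verification is a two-line check. Your route could in principle be completed, but you have not exhibited the instances or verified the two uniqueness claims, and doing so---as you yourself acknowledge---is genuinely delicate. The case-split device is standard in mechanism-design impossibility proofs precisely because it sidesteps that delicacy.
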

This proposition implies that, if we require strategy-proofness for an algorithm, then we should consider approximation
even in the absence of computational constraints.
Now, we show the strategy-proofness of our approximation algorithm.
\begin{theorem}\label{thm:SP}
Algorithm {\sc Double Proposal} is strategy-proof for residents.
\end{theorem}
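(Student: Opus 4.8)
The plan is to adapt the classical argument for strategy-proofness of the resident-oriented Gale--Shapley algorithm to the two-proposal setting. The key technical tool will be a comparison between the run of {\sc Double Proposal} on the true instance $I$ and on a manipulated instance $I'$ in which one resident $r_0$ reports $\succeq'_{r_0}$ instead of $\succeq_{r_0}$. I would first establish a structural monotonicity lemma: roughly, if a resident $r$ is rejected by a hospital $h$ during the execution, then the ``state'' of $h$ (the set of residents currently assigned, together with which of them have already been rejected once) can only improve for $h$ over time, so once $r$ is rejected twice by $h$, she can never end up at $h$. This is already implicit in the proof of Lemma~\ref{lem:stability}; I would reuse it. The heart of the argument is then to show that in the run on $I'$, resident $r_0$ cannot obtain a hospital strictly better (with respect to her \emph{true} list $\succeq_{r_0}$) than $M(r_0)=A(I)(r_0)$.

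The main step is a coupled-execution / invariant argument. I would run {\sc Double Proposal} on $I$ and on $I'$ in parallel and maintain an invariant of the form: at every stage, the set of (resident, hospital) rejection events that have occurred in the $I'$-run is contained in (or dominated by) the set of rejection events in a suitably chosen run on $I$, except for the extra rejections ``caused'' by $r_0$'s deviation. A cleaner route, which I expect to follow, is the standard one: suppose for contradiction that $r_0$ does strictly better in $I'$, i.e.\ $M'(r_0)\succ_{r_0}M(r_0)$. Consider the execution of the algorithm on $I$. Since the output is the same regardless of the order in which unmatched residents are processed (this order-independence should be argued, or one uses the deterministic tie-break and a simulation argument), I may assume $r_0$ follows the proposal sequence dictated by $\succeq'_{r_0}$ truncated appropriately. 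I would then show by induction over the proposals that every hospital that rejects $r_0$ in the $I$-run also rejects her in the $I'$-run, so $r_0$ ends up no better in $I'$ than in $I$ --- contradiction. The induction step uses the monotone improvement of each hospital's assignment: any resident $r\ne r_0$ who is displaced and then re-proposes in one run does so in the other as well, so the ``chain of rejections'' triggered by $r_0$'s proposals is the same or longer in $I'$.

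A subtlety specific to {\sc Double Proposal}, and the place I expect the real work to be, is handling the \emph{first-proposal priority rule} (Lines~\ref{propose}--\ref{proposal2}) and the \emph{dummy-upper-quota rejections} (Line~\ref{reject}), which are absent from ordinary Gale--Shapley. These create rejections even when a hospital is not full, and the choice of which resident is rejected at Line~\ref{chosen1} is ``regardless of preference.'' I must check that such rejections still respect the monotonicity invariant: once $h$ has $\ell(h)$ residents and starts rejecting never-yet-rejected residents, the relevant quantity for the invariant is not $h$'s preference order but the pair (current load, set of once-rejected residents), and I need to verify this pair evolves monotonically and identically enough across the two runs. I would isolate this as a lemma: for each hospital $h$, the sequence of ``rejection slots'' it opens up is determined by the multiset of proposals it receives, independent of the global order, and a resident rejected twice stays rejected. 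Given that lemma, the comparison of the $I$- and $I'$-runs goes through as in the classical proof, and for completeness I would also note that $r_0$ telling the truth is a dominant strategy follows by applying the inequality $M(r_0)\succeq_{r_0}M'(r_0)$ to arbitrary $I,\succeq'_{r_0}$, which is exactly the definition of strategy-proofness stated above.

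The anticipated main obstacle is precisely making the ``order-independence of the output'' and ``monotone evolution of each hospital's state under the new rejection rules'' rigorous, since the non-standard Lines~\ref{reject}--\ref{reject-end} break the usual one-to-one correspondence between rejections and the hospital becoming full; everything else is a faithful adaptation of the textbook Gale--Shapley strategy-proofness proof.
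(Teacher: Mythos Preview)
Your approach is genuinely different from the paper's, and the difference is instructive. The paper does \emph{not} adapt the classical coupled-execution argument at all. Instead it builds an auxiliary instance $I^*$ of the ordinary Hospitals/Residents problem (no ties, no lower quotas) in which each hospital $h_j$ is split into two copies $h_j^{\low}$ with capacity $\ell(h_j)$ and $h_j^{\upp}$ with capacity $u(h_j)$, adds dummy residents, and replaces each tie in a resident's list by a strict block $h_{j'_1}^{\low}\cdots h_{j'_k}^{\low}\,h_{j'_1}^{\upp}\cdots h_{j'_k}^{\upp}$ sorted by $\ell$-value. The key lemma (Lemma~\ref{lem:resident-opt}) shows that the output of {\sc Double Proposal} on $I$ corresponds exactly to the output of the ordinary resident-oriented Gale--Shapley on $I^*$. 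Strategy-proofness is then inherited directly from the known result for Gale--Shapley, because the map $I\mapsto I^*$ transforms each resident's list independently of other residents' lists. This reduction buys a lot: all of the awkwardness of Lines~\ref{reject}--\ref{reject-end} is absorbed into the two-copy gadget, and no new monotonicity or order-independence statement has to be proved from scratch.

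Your direct route is plausible in spirit but has a real gap at the point you already flag. The ``chain of rejections'' invariant in the textbook proof relies on the fact that a hospital's rejection decisions are monotone in its \emph{preference order}: once $h$ rejects $r$, it is because $h$ holds residents it weakly prefers, and this only improves. In {\sc Double Proposal} the Line~\ref{chosen1} rejection is \emph{not} preference-based; it is ``pick any not-yet-rejected resident, break ties by largest index.'' Consequently, which resident gets ejected at Line~\ref{chosen1} depends on the exact set of residents currently in $M(h)$ and their rejection histories, which in turn depends on the global proposal order. Your claimed order-independence is therefore not the obvious lemma it is for classical GS, and your inductive step ``every hospital that rejects $r_0$ in the $I$-run also rejects her in the $I'$-run'' does not follow from preference monotonicity alone. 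You would need to formulate and prove a two-layer invariant (first-round rejections track load against $\ell(h)$; second-round rejections track preference against $u(h)$), and show it survives the coupling. That is essentially what the paper's auxiliary-instance construction encodes, but it is considerably more work to do by hand; as written, your proposal identifies the obstacle without supplying the argument that overcomes it.
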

\begin{proof}
To establish the strategy-proofness, we show that 
an execution of {\sc Double Proposal} for an instance $I$ 
can be described as an application of the resident-oriented Gale--Shapley algorithm 
to an auxiliary instance $I^*$.
The construction of $I^*$ is based on the proof of Lemma~8 in \cite{DBLP:conf/isaac/HamadaMY19}; however, we need nontrivial extensions.

Let $R$ and $H$ be the sets of residents and hospitals in $I$, respectively. 
An auxiliary instance $I^*$ is an instance of the Hospitals/Residents problem that has neither lower quotas nor ties and allows incomplete lists.
The set of residents in $I^*$ is $R'\cup D$, where $R'=\{r'_1,r'_2,\dots ,r'_n\}$ is a copy of $R$ 
and $D=\set{d_{j,p}|j=1,2,\dots,m,~p=1,2,\dots,u(h_j)}$ is a set of $\sum_{j=1}^{m}u(h_j)$ dummy residents.
The set of hospitals in $I^*$ is $H^{\upp}\cup H^{\low}$, where each of 
$H^{\upp}=\{h^{\upp}_1, h^{\upp}_2,\dots,h^{\upp}_m\}$ and
$H^{\low}=\{h^{\low}_1, h^{\low}_2,\dots,h^{\low}_m\}$ is a copy of $H$.
Each hospital $h^{\upp}_j\in H^{\upp}$ has an upper quota 
$u(h_j)$ while each $h^{\low}_j\in H^{\low}$ has an upper quota $\ell(h_j)$.

For each resident $r'_i\in R'$, her preference list is defined as follows.
Consider any tie $(h_{j_1} h_{j_2} \cdots h_{j_k})$ in $r_i$'s preference list. 
Let $j'_1\,j'_2 \cdots j'_k$ be a permutation of $j_1 \,j_2\cdots j_k$
such that $\ell(h_{j'_1})\leq \ell(h_{j'_2})\leq\cdots\leq \ell(h_{j'_k})$,
and for each $j'_p, j'_q$ with $\ell(h_{j'_p})=\ell(h_{j'_q})$, $p<q$ implies $j'_p<j'_q$.
We replace the tie $(h_{j_1} h_{j_2} \cdots h_{j_k})$ with a strict order of $2k$ hospitals 
$h^{\low}_{j'_1} h^{\low}_{j'_2} \cdots h^{\low}_{j'_k} h^{\upp}_{j'_1} h^{\upp}_{j'_2} \cdots h^{\upp}_{j'_k}$.
The preference list of $r'_i$ is obtained by applying this operation to all ties in $r_i$'s list,
where a hospital not included in any tie is regarded as a tie of length one.
The following is an example of the correspondence between the preference lists of $r_i$ and $r'_i$:
\begin{align*}
&r_i: \quad( \ h_{2} \ h_{4} \ h_{5} \ ) \ h_{3} \ ( \ h_{1} \ h_{6} \ )\text{~~~ where~~ $\ell(h_4)=\ell(h_5)<\ell(h_2)$ and $\ell(h_6)<\ell(h_1)$}\\ 
&r'_i: ~\quad h^{\low}_{4} \  h^{\low}_{5} \  h^{\low}_{2} \ h^{\upp}_{4} \ h^{\upp}_{5} \ h^{\upp}_{2} \  h^{\low}_{3} \ h^{\upp}_{3} \  h^{\low}_{6} \  h^{\low}_{1} \  h^{\upp}_{6} \  h^{\upp}_{1}
\end{align*}
For each $j=1,2,\dots,m$, the dummy residents 
$d_{j,p}~(p=1,2,\dots,u(h_j))$
have the same list:  
\begin{align*}
d_{j,p}:~~h^{\upp}_j ~~h^{\low}_j
\end{align*}
For $j=1,2,\dots,m$, let $P(h_j)$ be the preference list of $h_j$ in $I$ and let $Q(h_j)$ be the strict order on $R'$ 
obtained by replacing residents $r_i$ with $r'_i$ and breaking ties so that
residents in the same tie of $P(h_j)$ are ordered in ascending order of indices.
The preference lists of hospitals $h^{\upp}_j$ and $h^{\low}_j$ are then defined as follows:
\begin{align*}
&h^{\upp}_j:  \quad\quad\qquad Q(h_i)\quad\quad\quad~~ d_{j,1}~d_{j,2}~\cdots~d_{j,u(h_j)}\\
&h^{\low}_j: \quad d_{j,1}~d_{j,2}~\cdots~d_{j,u(h_j)}~\quad r'_1~~r'_2~~\cdots ~r'_n
\end{align*}
Let $M$ be the output of {\sc Double Proposal} applied to $I$.
For each resident $r_i$, there are two cases: 
she has never been rejected by $M(r_i)$, and she had been rejected once by $M(r_i)$ and accepted upon her second proposal.
Let $M_1$ be the set of pairs $(r_i,M(r_i))$ of the former case and $M_2$ be that of the latter. 
Note that $|M_{1}(h_j)| \leq \ell(h_j)$ for any $h_j$.
Define a matching $M^*$ of $I^*$ by
\begin{align*}
M^*=&\set{(r'_i,h^{\upp}_j)|(r_i,h_j)\in M_2}\cup \set{(r'_i,h^{\low}_j)|(r_i,h_j)\in M_1}\\
&\cup\set{(d_{j,p},h^{\upp}_j)| ~1\leq p\leq u(h_j)-|M_2(h_j)|~}\\
&\cup \set{(d_{j,p},h^{\low}_j)| ~u(h_j)-|M_2(h_j)|< p\leq \min\{u(h_j)-|M(h_j)|+\ell(h_j),~u(h_j)\}~}.
\end{align*}
Then, the following holds.
\begin{lemma}\label{lem:resident-opt}
$M^*$ coincides with the output of the resident-oriented Gale--Shapley algorithm applied to the auxiliary instance $I^*$.
\end{lemma}
\switch{}{This lemma is proved in Appendix~\ref{app:SP}.}
We now complete the proof of the theorem.

Given an instance $I$, suppose that some resident $r_i$ changes her preference list from $\succeq_{r_i}$ to some other $\succeq'_{r_i}$. 
Let $J$ be the resultant instance.
Define an auxiliary instance $J^*$ from $J$ in the manner described above.
Let $N$ be the output of {\sc Double Proposal} for $J$ and $N^*$ be a matching defined from $N$ as we defined $M^*$ from $M$.
By Lemma~\ref{lem:resident-opt}, the resident-oriented Gale--Shapley algorithm returns $M^*$ and $N^*$ for $I^*$ and $J^*$, respectively.
Note that all residents except $r'_i$ have the same preference lists in $I^*$ and $J^*$ and so do all hospitals.
Therefore, by the strategy-proofness of the Gale--Shapley algorithm, we have $M^*(r'_i)\succeq_{r'_i} N^*(r'_i)$.
By the definitions of $I^*$, $J^*$, $M^*$, and $N^*$, we have $M(r_i)\succeq_{r_i} N(r_i)$,
which means that $r_i$ is no better off in $N$ than in $M$ with respect to her true preference $\succeq_{r_i}$.
Thus, {\sc Double Proposal} is strategy-proof for residents.
\end{proof}

\section{Maximum Gaps and Approximation Factors of {\sc Double Proposal}}\label{sec:approximation}
In this section, we analyze the approximation factors of our algorithm, together with the maximum gaps $\Lambda$ for the four models mentioned in Section \ref{sec:intro}. 
All results in this section are summarized in the first and second rows of 
Table~\ref{table1} in Section \ref{sec:intro}.
\switch{}{Most of the proofs are deferred to Appendix~\ref{app:approx}, which gives the full version of this section.}

For an instance $I$ of HRT-MSLQ, 
 let $\opt(I)$ and $\wst(I)$ respectively denote the maximum and minimum scores over all stable matchings of $I$, and let $\alg(I)$  be the score of the output of our algorithm {\sc Double Proposal}. \switch{Then,}{Proposition~\ref{prop:tie-breaking} in Appendix~\ref{app:approx} shows that} $\wst(I)$ can be the score of the output of the algorithm that first breaks ties arbitrarily and then applies the Gale--Shapley algorithm for the resultant instance \switch{(see the full version \cite{DBLP:journals/corr/abs-2105-03093})}{}. 
 Therefore, the maximum gap is equivalent to the approximation factor of such arbitrary tie-breaking GS algorithm. 

For a model $\cal I$ (i.e., subfamily of problem instances of HRT-MSLQ), let 
\[
\Lambda({\cal I})=\max_{I \in {\cal I}}\frac{\opt(I)}{\wst(I)} \ \  \mbox{ and } \ \ \app({\cal I})=\max_{I \in {\cal I}}\frac{\opt(I)}{\alg(I)}. 
\]
In subsequent subsections, we provide exact values of $\Lambda({\cal I})$ and $\app({\cal I})$ for the four fundamental models. 
Recall our assumptions that 
preference lists are complete, $|R| < \sum_{h \in H} u(h)$, and 
$\ell(h)\leq u(h)\leq n$ for each $h\in H$. 

\subsection{General Model}\label{sec:general}
Let ${\cal I}_{\rm Gen}$ denote the family of all 
instances of HRT-MSLQ, which we call the {\em general model}. 

\begin{proposition}\label{prop:general-worst-R}
The maximum gap for the general model satisfies $\Lambda({\cal I}_{\rm Gen})=n+1$. 
Moreover, this equality holds even if residents have a master list, and 
preference lists of hospitals contain no ties.
\end{proposition}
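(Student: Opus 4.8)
The plan is to prove the matching bounds $\Lambda(\mathcal{I}_{\rm Gen})\le n+1$ and $\Lambda(\mathcal{I}_{\rm Gen})\ge n+1$ separately, treating $n=|R|$ as fixed; for the lower bound I will exhibit one extremal instance that already meets the extra restrictions (residents share a master list, hospitals have strict lists).

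For the upper bound, for an arbitrary instance $I$ split the hospitals into $H_0=\{h:\ell(h)=0\}$ and $H_+=H\setminus H_0$. On the one hand, if $N$ is an optimal stable matching then $\opt(I)=s(N)\le |H_0|+n$: each $h\in H_0$ contributes exactly $1$, each $h\in H_+$ contributes at most $1$, and $s_N(h)>0$ for $h\in H_+$ only when $|N(h)|\ge 1$, which holds for at most $n$ hospitals since $\sum_h|N(h)|=n$. On the other hand I claim $\wst(I)\ge\max\{|H_0|,1\}$. The bound $\wst(I)\ge|H_0|$ is immediate. For $\wst(I)\ge 1$, take any stable matching $M$; all residents are matched because $|R|<\sum_h u(h)$ and lists are complete. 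If some $h$ has $|M(h)|\ge\ell(h)$ then $s_M(h)=1$; otherwise every hospital is deficient, so $s_M(h)=\frac{|M(h)|}{\ell(h)}\ge\frac{|M(h)|}{n}$ by $\ell(h)\le n$, and summing gives $s(M)\ge\frac{1}{n}\sum_h|M(h)|=1$. Combining the two estimates: if $|H_0|\ge 1$ then $\frac{\opt(I)}{\wst(I)}\le\frac{|H_0|+n}{|H_0|}\le n+1$, and if $|H_0|=0$ then $\opt(I)\le n$ while $\wst(I)\ge 1$, so the ratio is at most $n\le n+1$.

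For the lower bound, take $R=\{r_1,\dots,r_n\}$ and $H=\{g,h_1,\dots,h_n\}$ with $g$ having quota pair $[0,n]$ and each $h_i$ having quota pair $[1,1]$; then $\sum_h u(h)=2n>n$, and the (complete) lists are: every resident has the single all-ties list $(\,g\ h_1\ \cdots\ h_n\,)$, and the hospitals have arbitrary strict lists. Since no resident strictly prefers any hospital to any other, condition~(i) for a blocking pair fails for every pair once all residents are matched, so every resident-saturating matching is stable. In particular $N=\{(r_i,h_i):1\le i\le n\}$ is stable with $s(N)=1+n=n+1$, and $M=\{(r_i,g):1\le i\le n\}$ is stable with $s(M)=1+0=1$; hence $\opt(I)\ge n+1$ and $\wst(I)\le 1$, so $\frac{\opt(I)}{\wst(I)}\ge n+1$. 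Together with the upper bound this yields $\opt(I)=n+1$, $\wst(I)=1$, and $\Lambda(\mathcal{I}_{\rm Gen})=n+1$, and the instance already has the stated master-list/strict-hospital form.

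The one delicate ingredient is the uniform floor $\wst(I)\ge\max\{|H_0|,1\}$ on the worst score. The temptation is to bound $\opt(I)$ by $|H|$, but $|H|$ can be arbitrarily larger than $n$; the additive form $\opt(I)\le|H_0|+n$ together with the fact that the worst stable matching keeps score at least $1$ (via the deficient-hospital averaging) is exactly what caps the ratio at $n+1$ and prevents a blow-up when there are many zero-lower-quota hospitals. The remaining steps are routine arithmetic.
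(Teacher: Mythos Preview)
Your proof is correct and follows essentially the same approach as the paper's: the same $H_0/H_+$ split with the bounds $\opt(I)\le |H_0|+n$ and $\wst(I)\ge\max\{|H_0|,1\}$ for the upper bound, and the same extremal instance (one $[0,n]$ hospital, $n$ hospitals with $[1,1]$, residents all indifferent, hospitals strict) for the lower bound. The only differences are cosmetic---you package the case analysis as $\wst(I)\ge\max\{|H_0|,1\}$ and count hospitals with positive score rather than bounding $\min\{1,|N(h)|\}\le|N(h)|$, but the underlying arguments are identical.
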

We next obtain the value of  $\app({\cal I}_{\rm Gen})$. Recall that 
 $\phi$ is a function of $n=|R|$ defined by
$\phi(1)=1$, $\phi(2)=\frac{3}{2}$, and $\phi(n)=
n(1+\lfloor\frac{n}{2}\rfloor)/(n+\lfloor \frac{n}{2}\rfloor)$ for $n\geq 3$.

\begin{theorem}\label{thm:general-approximable}
The approximation factor of {\sc Double Proposal} for the general model satisfies $\app({\cal I}_{\rm Gen})=\phi(n)$.
\end{theorem}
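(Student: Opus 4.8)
The plan is to prove the two matching facts: $\opt(I)\le\phi(n)\,\alg(I)$ for every instance $I$ with $|R|=n$ (the worst‑case analysis of {\sc Double Proposal}, which is the hard direction), together with an explicit family $I_n$ attaining equality. For the upper bound, fix $I$ with $|R|=n$, let $M$ be the output of {\sc Double Proposal} and let $N$ be an optimal stable matching; by Lemma~\ref{lem:stability} and the observation in Section~\ref{sec:definition} both are stable and saturate $R$, so $\alg(I)=s(M)$ and $\opt(I)=s(N)$ (hospitals with lower quota $0$ contribute $1$ to both scores and can be set aside, so assume $\ell(h)\ge1$). Following the ``two vectors'' idea, I distribute the scores over residents: writing $\mu(r)=M(r)$ and $\nu(r)=N(r)$, set $x_r = s_M(\mu(r))/|M(\mu(r))| = \min\{1/|M(\mu(r))|,\,1/\ell(\mu(r))\}$ and $y_r = s_N(\nu(r))/|N(\nu(r))|$, so that $\sum_r x_r = s(M)$, $\sum_r y_r = s(N)$, and $\sum_{r\in M(g)} x_r = s_M(g)$, $\sum_{r\in N(h)} y_r = s_N(h)$ for every hospital. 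The goal becomes $\sum_r y_r \le \phi(n)\sum_r x_r$.

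The heart of the argument is a case analysis of each resident $r$ according to how $\mu(r)$ and $\nu(r)$ compare in $r$'s list, using stability of $M$, stability of $N$, and the local optimality in Lemma~\ref{lem:property}. If $\mu(r)=_r\nu(r)$ and $\nu(r)$ is deficient in $M$, then Lemma~\ref{lem:property} forces $\ell(\mu(r))\le\ell(\nu(r))$ and $|M(\mu(r))|\le\ell(\mu(r))$, hence $y_r\le 1/\ell(\nu(r))\le 1/\ell(\mu(r))=x_r$. If $\mu(r)=_r\nu(r)$ with $\nu(r)$ not deficient in $M$, or if $\nu(r)\succ_r\mu(r)$ (so stability of $M$ makes $\nu(r)$ full in $M$), then $s_M(\nu(r))=1$. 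Finally, if $\mu(r)\succ_r\nu(r)$, then stability of $N$ makes $\mu(r)$ full in $N$, so $s_N(\mu(r))=1$. Splitting $\opt=\sum_r y_r$ along these three groups, one bounds the first group's total by a sum of $x_r$'s, the second group's total by $\sum_{h:\,s_M(h)=1}s_N(h)$, and the third group's total using that these residents sit (in $M$) at hospitals fully satisfied by $N$; these partial bounds must then be combined carefully, since the $x/y$ bookkeeping across the three groups overlaps. Invoking the structural constraints $|M(g)|\le u(g)\le n$, $\sum_{r\in N(h)}y_r=s_N(h)\le 1$, and $|R|=n$ reduces the estimate to a finite combinatorial optimization over the ways residents can be arranged among deficient and fully‑satisfied hospitals; its optimum is attained when roughly $\lfloor n/2\rfloor$ residents are concentrated so that $\alg$ is pushed down to about $1+\lfloor n/2\rfloor/n$ while $\opt$ stays near $\lfloor n/2\rfloor+1$, and evaluating it yields exactly $\phi(n)$ (the values $\phi(1)=1$ and $\phi(2)=\tfrac32$ reflecting that such a concentration is impossible for $n\le2$).

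For the matching lower bound I would construct, for each $n$, an explicit instance $I_n$ with $|R|=n$ built from several unit‑lower‑quota hospitals, one hospital that is fully satisfied both in an optimal matching and in the algorithm's output, and one or more hospitals with large lower quota, with all preference lists being ties designed so that an optimal stable matching spreads the residents out and achieves score $\lfloor n/2\rfloor+1$, whereas the deterministic tie‑breaking of {\sc Double Proposal} (``smallest $\ell$ first, then smallest index'', together with the double‑proposal/rejection dynamics) is forced into a split of score $(n+\lfloor n/2\rfloor)/n$. Stability of both matchings is checked directly from the lists, and tracing the deterministic execution pins down the algorithm's output; the ratio then equals $\frac{\lfloor n/2\rfloor+1}{(n+\lfloor n/2\rfloor)/n}=\phi(n)$, with the cases $n\le2$ handled by small ad hoc instances.

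I expect the main obstacle to be the final combinatorial optimization. Because $s(\cdot)$ is non‑linear, there is no per‑hospital or per‑resident inequality that is simultaneously valid and tight — a single heavily deficient $M$‑hospital can hold many residents that $N$ spreads out, so its own score cannot ``pay'' for them locally — and extracting the exact constant $\phi(n)$ rather than a loose $O(n)$ bound requires using the consequences of Lemma~\ref{lem:property}, the quota constraints, and the cardinality bound $|R|=n$ all at once.
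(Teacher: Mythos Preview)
Your setup and case analysis are broadly in the right direction, but the proposal has a genuine gap at exactly the point you yourself flag as ``the main obstacle.'' You reduce everything to ``a finite combinatorial optimization'' over how residents can be distributed, assert its optimum equals $\phi(n)$, and stop. That optimization is the whole difficulty, and your three-group decomposition does not obviously close it: knowing that group-2 residents have $s_M(\nu(r))=1$ and group-3 residents have $s_N(\mu(r))=1$ gives you information about \emph{hospitals}, not a usable bound on $\sum_r y_r$ in terms of $\sum_r x_r$, because different residents in the same group can share the same $\nu(r)$ or $\mu(r)$ and the overlaps between groups and hospitals are uncontrolled.

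The paper's proof supplies the missing structural idea. First, it distributes the score \emph{non-uniformly}: for each $h$, exactly $\min\{\ell(h),|M(h)|\}$ residents of $M(h)$ get $p_M$-value $1/\ell(h)$ and the rest get $0$ (similarly for $p_N$). This $\{0,1/\ell(h)\}$-valued choice, rather than your uniform $s_M(h)/|M(h)|$, is what makes the next step work. Second, and crucially, the paper builds a bipartite graph on $R\times R'$ (with $R'$ a copy of $R$) whose edges are pairs $(r,r')$ with $p_M(r)\ge p_N(r')$, and proves via two injections $\xi_+\colon R_+\to R'$ and $\xi_-\colon R'_-\to R$ (constructed from stability of $N$, stability of $M$, and Lemma~\ref{lem:property}) that this graph admits a matching of size at least $\lceil n/2\rceil$. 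This matching is the device that converts your correct but local observations into a global bound: for at least half the residents one can simultaneously pair $p_M$-mass against $p_N$-mass, and the remaining $\lfloor n/2\rfloor$ residents contribute at most $1$ each to $p_N$. A further refinement of the matching handles the case $s(M)<2$ to recover the exact constant. Without this bipartite-matching step (or something of equivalent strength), your per-group bookkeeping does not combine to $\phi(n)$.
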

We provide a full proof in \switch{the full version of the paper \cite{DBLP:journals/corr/abs-2105-03093}.}{Appendix~\ref{app:approx}, 
where Proposition~\ref{prop:general-tight} provides an instance \mbox{$I \in {\cal I}_{\rm Gen}$} such that $\frac{\opt(I)}{\alg(I)}=\phi(n)$.}
Here, we present the ideas to show the inequality \mbox{$\frac{\opt(I)}{\alg(I)}\leq \phi(n)$}
for any $I\in {\cal I}_{\rm Gen}$.
\begin{proof}[Proof sketch of Theorem~\ref{thm:general-approximable}]
Let $M$ be the output of the algorithm and $N$ be an optimal stable matching.
We define vectors $p_M$ and $p_N$ on $R$, which distribute the scores to residents.
For each $h\in H$, among residents in $M(h)$, 
we set $p_M(r)=\frac{1}{\ell(h)}$ for 
$\min\{\ell(h),|M(h)|\}$ residents and $p_M(r)=0$ for the remaining 
$|M(h)|-\min\{\ell(h),|M(h)|\}$ residents.
Similarly, we define $p_N$ from $N$.
We write $p_M(A)\coloneqq \sum_{r\in A}p_M(r)$ for any $A\subseteq R$.
By definition, $p_M(M(h))=s_M(h)$ and $p_N(N(h))=s_N(h)$ for each $h\in H$,
and hence $s(M)=\sum_{h\in H}s_M(h)=p_M(R)$ and $s(N)=\sum_{h\in H}s_N(h)=p_N(R)$.
Thus, $\frac{p_N(R)}{p_M(R)}=\frac{s(N)}{s(M)}$, which needs to be bounded.

Let $R'=\{r'_1,r'_2,\dots,r'_n\}$ be a copy of $R$ and identify $p_N$ as a vector on $R'$. 
Consider a bipartite graph $G=(R,R';E)$ whose edge set is 
$E\coloneqq \set{(r_i,r'_j)\in R\times R'|p_M(r_i)\geq p_N(r'_j)}$.
For any matching $X\subseteq E$ in $G$,
denote by $\partial(X)\subseteq R\cup R'$ the set of vertices covered by $X$.
Then, $p_M(R\cap\partial(X))\geq p_N(R'\cap\partial(X))$ holds since 
each edge $(r_i,r'_j)\in X\subseteq E$ satisfies $p_M(r_i)\geq p_N(r'_j)$.
In addition, the value of $p_N(R'\setminus\partial(X))-p_M(R\setminus\partial(X))$ is bounded from above by $|R\setminus\partial(X)|=|R|-|X|=n-|X|$ because $p_N(r')\leq 1$ for any $r'\in R'$ and $p_M(r)\geq 0$ for any $r\in R$.
Therefore, the existence of a matching $X\subseteq E$ with large $|X|$ helps us bound $\frac{p_N(R)}{p_M(R)}$.
Indeed, the following claim plays a key role in our proof: ($\star$) The graph $G$ admits a matching $X\subseteq E$ with $|X|\geq \lceil\frac{n}{2}\rceil$.

In the proof in \switch{the full version \cite{DBLP:journals/corr/abs-2105-03093}}{Appendix~\ref{app:approx}}, the required bound of $\frac{p_N(R)}{p_M(R)}$
is obtained using a stronger version of ($\star$).
Here we concentrate on showing ($\star$). 
To this end, we divide $R$ into 
\begin{align*}
&R_{+}\coloneqq \set{r\in R|M(r)\succ_{r} N(r)},\\
&R_{-}\coloneqq \set{r\in R|N(r)\succ_{r} M(r) \text{ or }  [M(r)=_{r}N(r), ~p_N(r)>p_M(r)]}, \text{ and}\\
&R_{0}~\!\coloneqq \set{r\in R|M(r)=_{r}N(r), ~p_M(r)\geq p_N(r)}.
\end{align*}
Let $R'_+, R'_-, R'_0$ be the corresponding subsets of $R'$. 
We show the following two properties.
\smallskip
\begin{itemize}
\item There is an injection $\xi_+\colon R_+\to R'$ such that $p_M(r)=p_N(\xi_+(r))$ for every $r\in R_+$.
\item There is an injection $\xi_-\colon R'_-\to R$ such that $p_N(r')=p_M(\xi_-(r'))$ for every $r'\in R'_-$.
\end{itemize}

We first define $\xi_+$. For each hospital $h$ with $M(h)\cap R_+\neq \emptyset$,  
there is $r\in M(h)\cap R_+$ with $h=M(r)\succ_{r}N(r)$.
By the stability of $N$, hospital $h$ is full in $N$.
Then, we can define an injection $\xi_+^h\colon M(h)\cap R_+\to N(h)$ so that $p_M(r)=p_N(\xi_+^h(r))$ for all $r\in M(h)\cap R_+$. By regarding $N(h)$ as a subset of $R'$ and taking the direct sum of $\xi_+^h$ for all hospitals 
$h$ with $M(h)\cap R_+\neq \emptyset$,
we obtain a required injection $\xi_+\colon R_+\to R'$.

We next define $\xi_-$. For each hospital $h'$ with $N(h')\cap R'_-\neq \emptyset$,
any $r\in N(h')\cap R'_-$ satisfies either 
$h'=N(r)\succ_{r} M(r)$ or $[h'=N(r)=_{r}M(r), ~p_N(r)>p_M(r)]$.
If some $r\in N(h')\cap R'_-$ satisfies the former, the stability of $M$ implies that $h'$ is full in $M$.
If all $r\in N(h')\cap R'_-$ satisfy the latter, they all satisfy $0\neq p_N(r)=\frac{1}{\ell(h')}$, 
and hence $|N(h')\cap R'_-|\leq \ell(h')$.
Additionally, $p_N(r)>p_M(r)$ implies either $p_M(r)=0$ or $\ell(h')<\ell(h)$, where $h\coloneqq M(r)$.
Observe that $p_M(r)=0$ implies $|M(h)|>\ell(h)$.
By Lemma~\ref{lem:property}, each of $\ell(h')<\ell(h)$ and $|M(h)|>\ell(h)$ implies 
$|M(h')|\geq \ell(h')\geq |N(h')\cap R'_-|$.
Then, in any case, we can define an injection $\xi_-^{h'}\colon N(h')\cap R'_-\to M(h')$ such that $p_N(r')=p_M(\xi_-^{h'}(r'))$ for all $r'\in N(h')\cap R'_-$.
By taking the direct sum of $\xi_-^{h'}$ for all hospitals $h'$ with $M(h')\cap R_-\neq \emptyset$,
we obtain $\xi_-\colon R'_-\to R$.

Let $G^*=(R,R';E^*)$ be a bipartite graph 
(possibly with multiple edges), where $E^*$ is the 
disjoint union of $E_+$, $E_-$, and $E_0$, defined by
\begin{align*}
E_+&\coloneqq \set{(r,\xi_+(r))|r\in R_+},~~
E_-\coloneqq \set{(\xi_-(r'),r')|r\in R'_-}, \text{ and}\\
E_0~\!&\coloneqq \set{(r,r')|r\in R_0 \text{ and $r'$ is the copy of $r$}}.~~~~~
\end{align*}

\begin{figure}[htbp]
\begin{center}
\includegraphics[width=70mm]{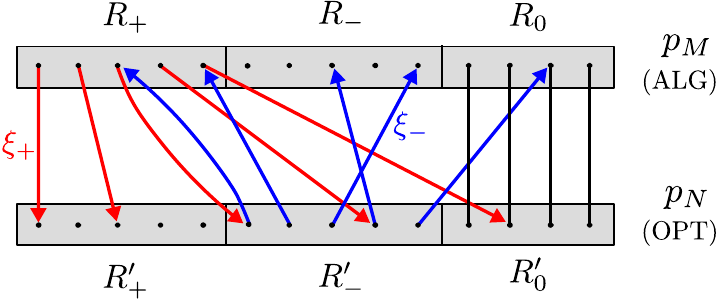}
\caption{A graph $G^*=(R,R';E^*)$}
\label{fig:graph1}
\end{center}
\end{figure}

\noindent 
See Fig.~\ref{fig:graph1} for an example. 
By the definitions of $\xi_+$, $\xi_-$, and $R_0$, any edge $(r,r')$ in $E^*$ belongs to $E$,
and hence any matching in $G^*$ is also a matching in $G$.
Since $\xi_+\colon R_+\to R'$ and $\xi_-\colon R'_-\to R$ are injections, 
we observe that every vertex in $G^*$ is incident to at most two edges in $E^*$.
Then, $E^*$ is decomposed into paths and cycles, and hence $E^*$ contains a matching of size at least $\lceil\frac{|E^*|}{2}\rceil$.
Since $|E^*|=|R_+|+|R_-|+|R_0|=n$, this means that there exists a matching $X\subseteq E$ with $|X|\geq \lceil\frac{n}{2}\rceil$, as required.
\end{proof}

\subsection{Uniform Model}\label{sec:uniform}
Let ${\cal I}_{\rm Uniform}$ denote the family of uniform problem instances of HRT-MSLQ, where an instance is called {\em uniform} if upper and lower quotas are uniform.
In the rest of this subsection, we assume that $\ell$ and $u$ are nonnegative integers to represent 
the common lower and upper quotas, respectively, and let $\theta\coloneqq \frac{u}{\ell}~(\geq 1)$.
We call ${\cal I}_{\rm Uniform}$ the {\em uniform model}. 
\begin{proposition}\label{prop:uniform-worst-R}
The maximum gap for the uniform model satisfies $\Lambda({\cal I}_{\rm Uniform})=\theta$. 
Moreover, this equality holds even if preference lists of hospitals contain no ties.
\end{proposition}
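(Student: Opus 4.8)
The plan is to prove the two inequalities $\Lambda(\mathcal{I}_{\rm Uniform})\le \theta$ and $\Lambda(\mathcal{I}_{\rm Uniform})\ge \theta$ separately: the former by a counting argument valid for every stable matching, and the latter by exhibiting a single explicit family of uniform instances whose hospitals have strict preference lists.

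For the upper bound I would fix a uniform instance $I$ with common quotas $[\ell,u]$ and set $n=|R|$. Recall from Section~\ref{sec:definition} that every resident is matched in every stable matching, so $\sum_{h\in H}|M(h)|=n$ for every stable matching $M$; in particular this sum is the same for all stable matchings. The first step is to split $H$ into the sufficient hospitals (those with $|M(h)|\ge\ell$, each contributing $1$ to $s(M)$) and the deficient ones (those with $|M(h)|<\ell$, each contributing $|M(h)|/\ell$). On a sufficient hospital one has $\ell\le |M(h)|\le u$, while on a deficient hospital $\ell\cdot\frac{|M(h)|}{\ell}=|M(h)|$ and $u\cdot\frac{|M(h)|}{\ell}=\theta|M(h)|\ge |M(h)|$ since $\theta\ge 1$. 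Summing over all hospitals yields $\ell\cdot s(M)\le n\le u\cdot s(M)$, hence $n/u\le s(M)\le n/\ell$ for every stable $M$. Consequently $\opt(I)/\wst(I)\le (n/\ell)/(n/u)=\theta$, and since $I$ was arbitrary, $\Lambda(\mathcal{I}_{\rm Uniform})\le\theta$. If $\theta=1$ this already gives $\Lambda=1=\theta$, so for the matching lower bound I may assume $\theta>1$.

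For the lower bound I would construct the instance $I^{*}$ with $u$ hospitals $h_1,\dots,h_u$, each of quota $[\ell,u]$, and $\ell u$ residents, where every resident's preference list is the single tie $(h_1\ h_2\ \cdots\ h_u)$ and the hospitals have arbitrary strict preference lists over the residents. Then $|R|=\ell u<u^2=\sum_{h}u(h)$, so the standing assumption holds. The key observation is that, since every resident is indifferent among all hospitals, the condition ``$r$ is unmatched or strictly prefers $h$ to $M(r)$'' in the definition of a blocking pair can never be satisfied for a matching in which all residents are assigned; hence every such matching is stable, regardless of the hospitals' (strict) preferences. I would then point to two such matchings: $M_{\rm good}$, assigning exactly $\ell$ residents to each hospital (all $u$ hospitals sufficient, $s(M_{\rm good})=u$), and $M_{\rm bad}$, assigning $u$ residents to each of $h_1,\dots,h_\ell$ and none to $h_{\ell+1},\dots,h_u$ (exactly $\ell$ hospitals sufficient, $s(M_{\rm bad})=\ell$). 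Both use exactly $\ell u$ residents and respect all upper quotas, so both are stable; therefore $\opt(I^{*})\ge u$ and $\wst(I^{*})\le\ell$, giving $\opt(I^{*})/\wst(I^{*})\ge u/\ell=\theta$. Together with the upper bound this yields $\Lambda(\mathcal{I}_{\rm Uniform})=\theta$, and since $I^{*}$ uses only strict hospital lists, the ``moreover'' part follows.

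I do not expect a real obstacle here. The only points needing care are purely mechanical: in the counting step one must combine the per-hospital inequalities in the correct directions and use that $\sum_{h}|M(h)|=n$ is genuinely matching-independent; and in the construction one must double-check that $M_{\rm good}$ and $M_{\rm bad}$ distribute the residents exactly, which holds because $\ell\cdot u=u\cdot\ell=|R|$.
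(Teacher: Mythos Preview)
Your proposal is correct and follows essentially the same approach as the paper: the upper bound sandwiches $s(M)$ between $n/u$ and $n/\ell$ via the per-hospital inequalities $\ell\cdot s_M(h)\le |M(h)|\le u\cdot s_M(h)$, and the lower-bound instance (with $u$ hospitals of quota $[\ell,u]$, $\ell u$ residents all indifferent among the hospitals, and arbitrary strict hospital lists) together with the two matchings $M_{\rm good}$ and $M_{\rm bad}$ is exactly the paper's construction.
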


\begin{theorem}\label{thm:uniform-approximable}
The approximation factor of {\sc Double Proposal} for the uniform model satisfies $\app({\cal I}_{\rm uniform})=\frac{\theta^{2}+\theta -1}{2\theta -1}$.
\end{theorem}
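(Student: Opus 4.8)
The plan is to prove the two bounds $\app({\cal I}_{\rm Uniform})\le\frac{\theta^{2}+\theta-1}{2\theta-1}$ and $\app({\cal I}_{\rm Uniform})\ge\frac{\theta^{2}+\theta-1}{2\theta-1}$ separately; write $\alpha\coloneqq\frac{\theta^{2}+\theta-1}{2\theta-1}$, and recall that $1\le\alpha<\theta$ for $\theta>1$ (with $\alpha=1$ at $\theta=1$). For the upper bound, fix a uniform instance, let $M$ be the output of {\sc Double Proposal} and $N$ an optimal stable matching; we may assume $s(N)>s(M)$, so that $H^{+}\coloneqq\{h\in H:s_N(h)>s_M(h)\}$ is nonempty, and every $h\in H^{+}$ is deficient in $M$. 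I would build a hospital set $A\supseteq H^{+}$ by a breadth-first closure: starting from $H^{+}$, whenever $h\in A$ and some resident $r$ with $N(r)=h$ has $M(r)\notin A$, add $M(r)$ to $A$. By construction $N(r)\in A$ implies $M(r)\in A$ for every resident $r$, so (all residents being matched in both matchings) $\sum_{h\in A}|N(h)|\le\sum_{h\in A}|M(h)|$; moreover every hospital outside $A$ lies outside $H^{+}$ and hence has $s_N(h)\le s_M(h)$, whence $s(N)\le s(M)+\bigl(s_N(A)-s_M(A)\bigr)$ with $s_X(A)\coloneqq\sum_{h\in A}s_X(h)$. Since $\alpha\ge1$ and $s(M)-s_M(A)\ge0$, it then suffices to prove the local inequality $s_N(A)\le\alpha\,s_M(A)$.

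The next step is to pass to a purely numerical statement. Writing $k=|A|$ and, for $X\in\{M,N\}$, $D_X\coloneqq\sum_{h\in A}(\ell-|X(h)|)_{+}$ and $E_X\coloneqq\sum_{h\in A}(|X(h)|-\ell)_{+}$, one has $s_X(A)=k-D_X/\ell$ and $\sum_{h\in A}|X(h)|=k\ell-D_X+E_X$; thus $\sum_{h\in A}|N(h)|\le\sum_{h\in A}|M(h)|$ gives $D_N\ge D_M+E_N-E_M$, and it suffices, for $s_N(A)\le\alpha\,s_M(A)$, to show $E_M-E_N\le(\alpha-1)\,\ell\,s_M(A)$, where $\ell\,s_M(A)=\sum_{h\in A}\min(|M(h)|,\ell)$. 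Informally this says that the load $M$ heaps \emph{above} $\ell$ inside $A$ can exceed the analogous $N$-quantity only by a controlled fraction of the ``useful'' $M$-load. The engine is a layer-by-layer analysis of the BFS together with three facts that are all available: (a) stability of $M$ makes any hospital deficient in $M$ least-preferred by every resident, so a resident who in $N$ moves into a deficient-in-$M$ hospital weakly prefers her $M$-hospital; (b) stability of $N$ forces the $M$-hospital of any resident who \emph{strictly} prefers it to her $N$-hospital to be full in $N$; (c) Lemma~\ref{lem:property}(ii) says that if a resident is indifferent between her $M$-hospital $h$ and another hospital $h'$ and $|M(h)|>\ell$, then $|M(h')|\ge\ell$. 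Combining (a)--(c) shows that a hospital loaded to $u$ in $M$ cannot give up residents \emph{net} while lying on a BFS chain to a deficient-in-$M$ hospital — it can reach such a hospital only through residents who strictly prefer it, and is then full in $N$ as well — so the wasteful excess $E_M$ is, travelling down the chains, essentially reproduced inside $E_N$, the slack being chargeable only to the small loads of the roots $H^{+}$. Turning these qualitative constraints into per-hospital inequalities (together with $|M(h)|\le u$) and optimizing over the admissible load profiles along a chain yields exactly the constant $\alpha$.

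The main obstacle is precisely this last optimization. Fact (c) gives nothing when the indifferent $M$-hospital has load \emph{exactly} $\ell$, so such hospitals — which are ``good in $M$'' and therefore carry weight in $s_M(A)$ yet may feed deficient roots — have to be controlled through the strict-preference argument (b) instead; and the BFS can have several layers, so one needs a careful inductive accounting (essentially solving a small linear program over the layers) to recover the sharp constant $\alpha$ rather than a cruder bound that would only reproduce the maximum gap $\theta$. I expect keeping this bookkeeping tight — in particular not double-counting residents that flow through a hospital both ways — to be where most of the work lies.

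For the matching lower bound $\app({\cal I}_{\rm Uniform})\ge\alpha$ I would exhibit an explicit uniform instance, assembled from many disjoint copies of a fixed gadget so that the ratio is attained rather than merely approached. In the gadget every resident has a two-element tie at the top of her list; the deterministic index-based tie-breaking of {\sc Double Proposal} then routes first proposals so as to fill a ``popular'' hospital up to $u$ while leaving a paired hospital deficient with just enough residents that the upper-bound analysis above is tight, whereas the optimal stable matching rebalances the gadget so that every hospital meets its lower quota. One chooses the hospitals' strict preference lists so that this rebalanced matching is stable — this is exactly where fact (b) is invoked in reverse — and checks that the instance is genuinely uniform; computing $\opt(I)$ and $\alg(I)$ then gives $\opt(I)/\alg(I)=\frac{\theta^{2}+\theta-1}{2\theta-1}$, completing the proof.
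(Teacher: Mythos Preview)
Your proposal has the right skeleton—BFS out from the hospitals with $s_N>s_M$, use stability of $M$ at the first layer, stability of $N$ for the ``strictly prefers'' case, and Lemma~\ref{lem:property}(ii) for the ``indifferent'' case—but the part you yourself flag as the main obstacle is exactly where the argument is not yet a proof. The paper's route is structurally similar, but it avoids the multi-layer inductive bookkeeping you anticipate: it works in a connected component of $M\cup N$ and stops the BFS after \emph{one} step, partitioning hospitals into $H_0$ (your $H^+$), $H_1=\{M(r):N(r)\in H_0\}\setminus H_0$, and $H_2=$ ``everything else''. The whole argument then reduces to two scalar parameters, $\alpha=v_N(H_0)-v_M(H_0)$ and $\beta=v_N(H_1\cup H_2)-v_M(H_1)$, and a case split on whether $\beta\ge\alpha/(\theta-1)$ or $\beta\le\alpha/(\theta-1)$; each case is a short chain of inequalities yielding $\frac{\theta^2+\theta-1}{2\theta-1}$ exactly. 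So the ``layer-by-layer LP'' you envision is unnecessary, and without having identified this two-parameter reduction and case split, you have not yet shown how to extract the tight constant rather than merely $\theta$.

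There is also a subtle mismatch: your set $A$ is closed only under $N(r)\in A\Rightarrow M(r)\in A$, which gives $\sum_{h\in A}|N(h)|\le\sum_{h\in A}|M(h)|$, whereas the paper uses the full connected component, where the equality $\sum_{h\in H^*}|N(h)|=\sum_{h\in H^*}|M(h)|$ holds. The equality is what lets one combine the lower bound on $|R^*|$ via $N$ (using that $H_1^\succ$ is full in $N$) with the upper bound on $|R^*|$ via $M$ (using $|M(h)|\le\ell$ on $H_0\cup H_1^=$). Your one-sided closure still suffices for the global reduction $s(N)\le s(M)+(s_N(A)-s_M(A))$, but you should check that the resident-counting estimates can still be married together with only an inequality. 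Finally, for the lower bound the paper gives a single explicit instance (not asymptotic copies) with three groups of hospitals $X,Y,Z$ and matching resident groups; your sketch is in the right direction but would need the concrete preference lists to verify both that {\sc Double Proposal}'s deterministic tie-breaking produces the intended output and that the alternative matching is stable.
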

Note that
$\frac{\theta^{2}+\theta -1}{2\theta -1}<\theta$ whenever $\ell<u$ because $\theta -\frac{\theta^{2}+\theta -1}{2\theta-1}=
\frac{(\theta-1)^2}{2\theta -1}>0$.
Here is the ideas to show that \mbox{$\frac{\opt(I)}{\alg(I)}\leq \frac{\theta^{2}+\theta -1}{2\theta -1}$}
holds for any $I\in {\cal I}_{\rm Uniform}$.

\begin{proof}[Proof sketch of Theorem~\ref{thm:uniform-approximable}]
Let $M$ be the output of the algorithm and $N$ be an optimal stable matching,
and assume $s(M)<s(N)$.
Consider a bipartite graph $(R, H; M\cup N)$, which may have multiple edges.
Take an arbitrary connected component, and let $R^*$ and $H^*$ be the sets of residents and hospitals, respectively, contained in it.
It is sufficient to bound $\frac{s_N(H^*)}{s_M(H^*)}$.

Let $H_0$ be the set of all hospitals in $H^*$ having strictly larger scores in $N$ than in $M$, i.e., 
\begin{align*}
&H_0\coloneqq \set{h\in H^*|s_N(h)>s_M(h)}.
\end{align*}
\noindent Using this, we sequentially define
\begin{align*}
&R_{0}\coloneqq \set{r\in R^*|N(r)\in H_0},~~H_{1}\coloneqq \set{h\in H^*\setminus H_0|\exists r\in R_{0}:M(r)=h},\\
&R_{1}\coloneqq \set{r\in R^*|N(r)\in H_1},~~
H_2\coloneqq H^*\setminus(H_0\cup H_1),\text{~~and~~}
R_2\coloneqq R^*\setminus(R_0\cup R_1).
\end{align*}

\begin{figure}[htbp]
\begin{center}
\includegraphics[width=55mm]{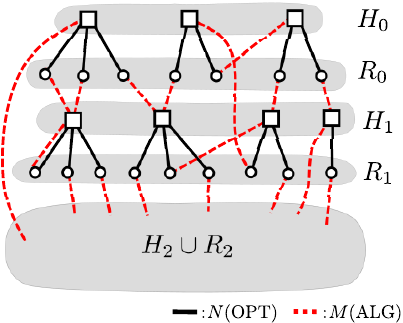}
\caption{Example with $[\ell,u]=[2,3]$.}
\label{fig:graph2}
\end{center}
\end{figure}
See Fig.~\ref{fig:graph2} for an example.
We use scaled score functions $v_M\coloneqq \ell\cdot s_M$ and $v_N\coloneqq \ell\cdot s_N$
and write $v_M(A)=\sum_{h\in A}v_M(h)$ for any $A\subseteq H$.
We bound $\frac{v_N(H^*)}{v_M(H^*)}$, which equals $\frac{s_N(H^*)}{s_M(H^*)}$.
Note that the set of residents assigned to $H^*$ is $R^*$ in both $M$ and $N$.
The scores differ depending on how efficiently those residents are assigned.
In this sense, we may think that a hospital $h$ is assigned residents ``efficiently'' in $M$ if $|M(h)|\leq \ell$
and is assigned ``most redundantly'' if $|M(h)|=u$.
Since $v_M(h)=\min\{\ell, |M(h)|\}$, we have $v_M(h)=|M(h)|$ in the former case and $v_M(h)=\frac{1}{\theta}\cdot|M(h)|$ in the latter.
We show that hospitals in $H_{1}$ provide us with advantage of $M$; any hospital in $H_1$ is assigned residents
either efficiently in $M$ or most redundantly in $N$.

For any $h\in H_0$, $s_M(h)<s_N(h)$ implies $|M(h)|<\ell$.
Then, the stability of $M$ implies $M(r)\succeq_r N(r)$ for any $r\in R_0$.
Hence, the following $\{H_1^{\succ}, H_1^{=}\}$ defines a bipartition of $H_1$:
\begin{align*}
&H_1^{\succ}\coloneqq \set{h\in H_1|\exists r\in M(h)\cap R_0: h\succ_r N(r)},\\
&H_1^{=}\coloneqq \set{h\in H_1|\forall r\in M(h)\cap R_0: h=_r N(r)}.
\end{align*}

For each $h\in H_1^{\succ}$, as some $r$ satisfies $h\succ_r N(r)$,
the stability of $N$ implies that $h$ is full, i.e., $h$ is assigned residents most redundantly, in $N$.
Note that any $h\in H_1^{\succ}$ satisfies $v_M(h)\geq v_N(h)$ because $h\not\in H_0$,
and hence $v_M(h)=v_N(h)=\ell$.
Then, $|N(h)|=u=\theta\cdot v_N(h)=(\theta-1)\cdot v_M(h)+v_N(h)$ for each $h\in H^{\succ}_1$.
Additionally, for any $h\in H^*$, we have $|N(h)|\geq \min\{\ell, |N(h)|\}=v_N(h)$.
Since $|R^*|=\sum_{h\in H^*}|N(h)|$, we have
\begin{align}
|R^*|\geq (\theta-1)\cdot v_M(H^{\succ}_1)+v_N(H^{\succ}_1)+v_N(H^*\setminus H^{\succ}_1)
= (\theta-1)\cdot v_M(H_1^{\succ})+v_N(H^*).\nonumber
\end{align}

For each $h\in H_1^{=}$, there is $r\in R_0$ with $M(r)=h=_r N(r)$.
As $r\in R_0$, the hospital $h'\coloneqq N(r)$ belongs to $H_0$, and hence $|M(h')|<\ell$. 
Then, Lemma~\ref{lem:property}(ii) implies $|M(h)|\leq \ell$, i.e., 
$h$ is assigned residents efficiently in $M$.
Note that any $h\in H_0$ satisfies $v_M(h)<v_N(h)\leq \ell$.
Then, the number of residents assigned to $H_0\cup H_1^{=}$ is $v_M(H_0\cup H_1^{=})$.
Additionally, the number of residents assigned to $H_1^{\succ}\cup H_2$ is at most
$\theta \cdot v_M(H_1^{\succ}\cup H_2)$. Thus, we have 
\begin{align}
|R^*|\leq  v_M(H_0\cup H_1^{=}) + \theta \cdot v_M(H_1^{\succ}\cup H_2)
=v_M(H^*)+(\theta-1)\cdot v_M(H_1^{\succ}\cup H_2).\nonumber
\end{align}
From these two estimations of $|R^*|$, 
we obtain $v_N(H^*)\leq (\theta-1)\cdot v_M(H_2)+v_M(H^*)$, which gives us a relationship between $v_M(H^*)$ and $v_N(H^*)$.
Combining this with other inequalities, 
we can obtain the required upper bound of $\frac{v_N(H^*)}{v_M(H^*)}$.
\end{proof}
\subsection{Marriage Model}
Let ${\cal I}_{\rm Marriage}$ denote the family of instances of HRT-MSLQ, in which each hospital has an upper quota of $1$.
We call ${\cal I}_{\rm Marriage}$ the {\em marriage model}. 
By definition, $[\ell(h), u(h)]$ in this model is either $[0,1]$ or $[1,1]$ for each $h\in H$.
Since this is a one-to-one matching model, 
the union of two stable matchings can be partitioned into paths and cycles. 
By applying standard arguments used in other stable matching problems, we can obtain
$\Lambda({\cal I}_{\rm Marriage})=2$ and $\app({\cal I}_{\rm Marriage})=1.5$.

As shown in Example~\ref{ex:SP} in Appendix~\ref{app:incompatible}, there is no strategy-proof algorithm that can achieve an approximation factor better than $1.5$ even in the marriage model.
Therefore, we cannot improve this ratio without sacrificing strategy-proofness.

\subsection{Resident-side Master List Model}\label{sec:master}
Let ${\cal I}_{\scriptsize \mbox{R-ML}}$ denote the family of instances of HRT-MSLQ in which all residents have the same preference list.
This is well studied in literature on stable matching \cite{DBLP:conf/wine/BredereckHKN20,DBLP:journals/dam/IrvingMS08,DBLP:conf/sagt/Kamiyama15,DBLP:conf/atal/Kamiyama19}.
We call ${\cal I}_{\scriptsize \mbox{R-ML}}$ the {\em R-side ML model}.
We have already shown in Proposition~\ref{prop:general-worst-R} that 
$\Lambda({\cal I}_{\scriptsize \mbox{R-ML}})=n+1$. 
Our algorithm, however, solves this model exactly.

Note that this is not the case for the hospital-side master list model,
which is NP-hard as shown in Theorem~\ref{thm:NP-hard-HA} below.
This difference highlights the asymmetry of two sides in HRT-MSLQ.

\section{Hardness Results}\label{sec:hardness}
We obtain various hardness and inapproximability results for HRT-MSLQ.
First, we show that HRT-MSLQ in the general model is inapproximable and that we cannot hope for a constant factor approximation.  

\begin{theorem}\label{thm:general-hardness}
HRT-MSLQ is inapproximable within a ratio $n^{\frac{1}{4}-\epsilon}$ for any $\epsilon > 0$ unless P=NP.
\end{theorem}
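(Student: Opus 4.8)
The plan is to reduce from a known NP-hard problem with a strong inapproximability guarantee and amplify the gap so that it becomes polynomial in $n$. A natural candidate is the \emph{maximum independent set} problem, or equivalently some variant of it for which a gap of roughly $n^{1-\epsilon}$ is known under $\mathrm{P}\neq\mathrm{NP}$; since our objective is a sum of satisfaction ratios rather than a cardinality, we should expect to lose something in the translation, which is consistent with the target exponent $\frac14-\epsilon$ rather than $1-\epsilon$. First I would take an instance $G=(V,E)$ of independent set and build an HRT-MSLQ instance in which a carefully chosen family of ``vertex hospitals'' can each attain full satisfaction ratio $1$ only if the corresponding vertex is selected into the matched set, while an edge $uv\in E$ enforces (via shared residents and the stability constraint) that $u$ and $v$ cannot both be fully satisfied simultaneously. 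The key design trick is to give each vertex hospital a lower quota $\ell$ that is large enough that satisfying it requires ``committing'' a block of residents, and to route those residents' preference lists so that ties force the algorithm's/optimum's hand exactly along the edges of $G$.

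The key steps, in order, are: (1) describe the gadget for a single vertex — a hospital $h_v$ together with a pool of private residents and one or more ``shared'' residents whose preference lists place $h_v$ in a tie with a neutralizing dummy hospital; (2) describe the edge gadget — for each $uv\in E$, a shared resident (or small group) indifferent between $h_u$ and $h_v$, so that by stability this resident can be assigned to at most one of them, and if assigned to neither the corresponding hospitals fall short of their lower quotas; (3) prove completeness: from an independent set $S$ of size $k$, construct a stable matching whose score is at least $k + (\text{baseline})$, by assigning shared residents so that every $h_v$ with $v\in S$ is full and, for $v\notin S$, $h_v$ receives few residents; (4) prove soundness: from a stable matching of score at least $t$, extract an independent set of size roughly $t-(\text{baseline})$, using stability to argue that the set of \emph{fully satisfied} vertex hospitals must be independent in $G$; (5) choose the gadget sizes (number of private residents per vertex, the common lower quota, the number of dummy hospitals) as polynomial functions of $|V|$ so that $n = |R|$ is polynomially bounded in $|V|$ and the multiplicative gap $\mathrm{OPT}/\mathrm{WST}$ translates to $n^{\frac14-\epsilon}$; the fourth-root loss comes from the fact that the additive baseline score must be padded up (to keep all hospitals' lower quotas simultaneously satisfiable in the ``no'' case), so the gap in the \emph{ratio} is the square root of the gap in the \emph{difference}, and one further square-root-type loss arises from the blow-up of $n$ relative to $|V|$ needed to make the construction legal under the completeness-of-preference-lists assumption.

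The main obstacle I anticipate is \textbf{controlling the baseline score} — that is, ensuring that in the ``independent set is small'' case the total satisfaction ratio cannot be boosted by a clever stable matching that sacrifices structure elsewhere to partially satisfy many vertex hospitals at once. Because $s_M(h)=\min\{1,|M(h)|/\ell(h)\}$ is \emph{concave}, spreading residents thinly across many hospitals is actually favored by the objective, which works against a clean gap; the construction must therefore make the lower quotas $\ell(h_v)$ large relative to the available shared residents, and must ensure that the ``private'' residents of $h_v$ genuinely prefer other hospitals (so stability forbids them from padding $h_v$ for free). Getting these inequalities to line up — large enough $\ell$ to force the independent-set structure, small enough that the completeness assumption and the $|R|<\sum_h u(h)$ assumption still hold, and a total $n$ that is only polynomially larger than $|V|$ — is the delicate part, and it is exactly what pins down the achievable exponent at $\frac14-\epsilon$. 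I would also need to double-check that every resident has a complete preference list (filling in tails with dummy hospitals of quota $[0,n]$) and that the instance satisfies $\ell(h)\le u(h)\le n$, so that the constructed matchings are genuinely stable in the sense of Section~\ref{sec:definition}.
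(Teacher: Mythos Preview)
Your proposal is not a proof but a plan, and the plan has a real gap at exactly the point you flag yourself. You propose vertex hospitals $h_v$ with a large lower quota $\ell$ and shared edge-residents, and then you note that because $s_M(h)=\min\{1,|M(h)|/\ell(h)\}$ is concave, spreading residents thinly across many vertex hospitals is \emph{favored} by the objective. That is precisely what breaks soundness for your construction: in the ``small independent set'' case, nothing in your sketch stops a stable matching from partially filling many $h_v$'s and thereby collecting a score far above the intended baseline. You say the fix is to make $\ell$ ``large relative to the available shared residents'' and to route private residents' preferences cleverly, but you do not say how, and I do not see a way to do it while keeping the stability and completeness constraints. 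Your accounting for the exponent $\tfrac14$ (``the gap in the ratio is the square root of the gap in the difference, and one further square-root loss from the blow-up of $n$'') is also hand-waving that does not correspond to any concrete parameter choice.

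The paper's proof takes a different route that sidesteps the concavity problem entirely. It does \emph{not} reduce directly from independent set; it goes via the \emph{2-independent set} problem (MAX-2-IS), using the known characterization that 2-independent sets of a graph $G_2$ correspond to complements of maximal matchings in its subdivision graph $G_3$. In the HRT-MSLQ instance, edges of $G_2$ become hospitals with quota $[0,p]$ (so they always have score $1$ and never interact with the objective), and each resident $r_{i,j}$ has its own private hospital $y_{i,j}$ with quota $[1,1]$. The score-carrying hospitals thus contribute only $0$ or $1$, so concavity is a non-issue. Stability of the HRT matching corresponds exactly to maximality of a matching in $G_3$; the soundness direction uses K\H{o}nig's edge-coloring theorem to decompose the many-to-one assignment into $p$ maximal matchings and picks the smallest. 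The exponent $\tfrac14$ arises cleanly: the hardness of MAX-IS at $|V_1|^{1-\epsilon}$ becomes hardness of MAX-2-IS at $|E_2|^{1/2-\epsilon}$ (since $|E_2|\le 3|V_1|^2$), then $\opt(I)=m_2(1+\IS_2(G_2))$ with $n=n_2 m_2\le m_2^2$, so an $n^{1/4-\epsilon}$-approximation would distinguish $\opt(I)\le m_2^{1+\delta}$ from $\opt(I)\ge m_2^{3/2-\delta}$.
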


\begin{proof}
We show the theorem by way of a couple of reductions, one from the maximum independent set problem ({\em MAX-IS}\,) to the maximum 2-independent set problem ({\em MAX-2-IS}\,), and the other from MAX-2-IS to HRT-MSLQ.

For an undirected graph $G=(V, E)$, a subset $S \subseteq V$ is an {\em independent set} of $G$ if no two vertices in $S$ are adjacent.
$S$ is a {\em 2-independent set} of $G$ if the distance between any two vertices in $S$ is at least 3.
MAX-IS (resp. MAX-2-IS) asks to find an independent set (resp. 2-independent set) of maximum size.
Let us denote by $\IS(G)$ and $\IS_{2}(G)$, respectively, the sizes of optimal solutions of MAX-IS and MAX-2-IS for $G$.
We assume without loss of generality that input graphs are connected.
It is known that, unless P=NP, there is no polynomial-time algorithm, given a graph $G_{1}=(V_{1}, E_{1})$, to distinguish between the two cases $\IS(G_{1}) \leq |V_{1}|^{\epsilon_{1}}$ and $\IS(G_{1}) \geq |V_{1}|^{1-\epsilon_{1}}$, for any constant $\epsilon_{1} >0$ \cite{DBLP:journals/toc/Zuckerman07}.

Now, we give the first reduction, which is based on the NP-hardness proof of the minimum maximal matching problem \cite{10.1137/0406030}.
Let $G_1=(V_1,E_1)$ be an instance of MAX-IS.
We construct an instance $G_2=(V_2,E_2)$ of MAX-2-IS as $V_2=V_1 \cup E_1 \cup \{s\}$ and $E_2=\set{(v,e)| v \in V_1,~ e\in E_1, e \mbox{ is incident to } v \mbox{ in } G_{1}} \cup \set{(s,e)| e \in E_1}$, where $s$ is a new vertex not in $V_1 \cup E_1$.  
For any two vertices $u$ and $v$ in $V_{1}$, if their distance in $G_{1}$ is at least 2 then that in $G_{2}$ is at least 4.
Hence, any independent set in $G_1$ is also a $2$-independent set in $G_2$.
Conversely, for any $2$-independent set $S$ in $G_2$, $S \cap V_1$ is independent in $G_1$ and $|S \cap (V_2 \setminus V_1)|\leq 1$. 
These facts imply that $\IS_{2}(G_{2})$ is either $\IS(G_{1})$ or $\IS(G_{1})+1$.
Since $|E_{2}| =3|E_{1}| \leq 3|V_{1}|^{2}$, distinguishing between $\IS_{2}(G_{2}) \leq |E_{2}|^{\epsilon_{2}}$ and $\IS_{2}(G_{2}) \geq |E_{2}|^{1/2-\epsilon_{2}}$ for some constant $\epsilon_{2} >0$ would imply distinguishing between $\IS(G_{1}) \leq |V_{1}|^{\epsilon_{1}}$ and $\IS(G_{1}) \geq |V_{1}|^{1-\epsilon_{1}}$ for some constant $\epsilon_{1} >0$, which in turn implies P=NP.

We then proceed to the second reduction.
Let $G_2=(V_2,E_2)$ be an instance of MAX-2-IS.
Let $n_{2} =|V_{2}|$, $m_{2} =|E_{2}|$, $V_{2}= \{ v_{1}, v_{2}, \dots, v_{n_{2}} \}$, and $E_{2}= \{ e_{1}, e_{2}, \dots, e_{m_{2}} \}$.
We construct an instance $I$ of HRT-MSLQ as follows.
For an integer $p$ which will be determined later, define the set of residents of $I$ as $R=\set{r_{i,j} | 1 \leq i \leq n_{2}, ~1 \leq j \leq p }$, where $r_{i,j}$ corresponds to the $j$th copy of vertex $v_{i} \in V_{2}$.
Next, define the set of hospitals of $I$ as $H \cup Y$, where $H=\set{ h_{k} | 1 \leq k \leq m_{2} }$ and $Y=\set{y_{i,j} | 1 \leq i \leq n_{2},~ 1 \leq j \leq p}$.
The hospital $h_{k}$ corresponds to the edge $e_{k} \in E_{2}$ and the hospital $y_{i,j}$ corresponds to the resident $r_{i,j}$.

We complete the reduction by giving preference lists and quotas in Fig.~\ref{fig:preference-general-inapprox}, where $1 \leq i \leq n_{2}$, $1 \leq j \leq p$, and $1 \leq k \leq m_{2}$.
Here, $N(v_{i}) = \set{ h_{k} | e_{k} \mbox{ is incident to } v_{i} \mbox{ in } G_{2} }$ and ``( \ $N(v_{i})$ \ )'' denotes the tie consisting of all hospitals in $N(v_{i})$.
Similarly, $N(e_{k})= \set{ r_{i,j} | e_{k} \mbox{ is incident to } v_{i} \mbox{ in } G_{2},~ 1 \leq j \leq p }$ and ``( \ $N(e_{k})$ \ )'' is the tie consisting of all residents in $N(e_{k})$.
The notation ``$\cdots$'' denotes an arbitrary strict order of all agents missing in the list.

\begin{figure}[ht]
\begin{center}
\renewcommand\arraystretch{1.2}
\begin{tabular}{llllllllllllllllllllllllll}
$r_{i,j}$: & ( \ $N(v_{i})$ \ ) & $y_{i,j}$ & $\cdots$ & \hspace{10mm} & $h_{k}$ $[0, p]$: &   ( \ $N(e_{k})$ \ ) & $\cdots$ \\
&&&& & $y_{i,j}$ $[1, 1]$: &  $r_{i,j}$ & $\cdots$ \\
\end{tabular}
\caption{Preference lists of residents and hospitals.}\label{fig:preference-general-inapprox}
\end{center}
\end{figure}

We will show that $\opt(I)=m_{2}+p \cdot \IS_{2}(G_{2})$.
To do so, we first see a useful property.
Let $G_3=(V_3,E_3)$ be the subdivision graph of $G_2$, i.e., $V_3 =V_2 \cup E_2$ and $E_3=\set{(v,e) | v\in V_2, e \in E_2, e \mbox{ is incident to } v \mbox{ in } G_{2} }$. 
Then, the family ${\cal I}_2(G_2)$ of 2-independent sets in $G_2$ is characterized as follows \cite{10.1137/0406030}:
\[
{\cal I}_2(G_2)= \left\{\, V_2 \setminus \bigcup_{e \in M} \{ \mbox{endpoints of } e \} \ \middle|\, M \mbox{ is a maximal matching of } G_3\right\}.\]
In other words, for a maximal matching $M$ of $G_{3}$, if we remove all vertices matched in $M$ from $V_{2}$, then the remaining vertices form a 2-independent set of $G_{2}$, and conversely, any 2-independent set of $G_{2}$ can be obtained in this manner for some maximal matching $M$ of $G_{3}$.

Let $S$ be an optimal solution of $G_{2}$ in MAX-2-IS, i.e., a 2-independent set of size $\IS_{2}(G_{2})$.
Let $\tilde{M}$ be a maximal matching of $G_{3}$ corresponding to $S$.
We construct a matching $M$ of $I$ as $M=M_{1} \cup M_{2}$, where $M_{1}=\set{ (r_{i,j}, h_{k}) | (v_{i}, e_{k}) \in \tilde{M},~ 1 \leq j \leq p }$ and $M_{2}=\set{ (r_{i,j}, y_{i,j}) | v_{i} \in S,~ 1 \leq j \leq p }$.
It is not hard to see that each resident is matched by exactly one of $M_1$ and $M_2$ and that no hospital exceeds its upper quota.

We then show the stability of $M$.
Each resident matched by $M_{1}$ is assigned to a first-choice hospital, so if there were a blocking pair, then it would be of the form $(r_{i,j}, h_{k})$ where $M(r_{i,j})=y_{i,j}$ and $h_{k} \in N(v_{i})$.
Then, $v_{i}$ is unmatched in $\tilde{M}$.
Additionally, all residents assigned to $h_{k}$ (if any) are its first choice; hence, $h_{k}$ must be undersubscribed in $M$.
Then, $e_{k}$ is unmatched in $\tilde{M}$.
$h_{k} \in N(v_{i})$ implies that there is an edge $(v_{i}, e_{k}) \in E_{3}$, so $\tilde{M} \cup \{ (v_{i}, e_{k}) \}$ is a matching of $G_{3}$, contradicting the maximality of $\tilde{M}$.
Hence, $M$ is stable in $I$.

A hospital in $H$ has a lower quota of $0$, so it obtains a score of $1$.
The number of hospitals in $Y$ that are assigned a resident is $|M_{2}|=p|S|=p\cdot \IS_{2}(G_{2})$.
Hence, $s(M) =m_{2}+p \cdot \IS_{2}(G_{2})$.
Therefore, we have $\opt(I) \geq s(M) = m_{2}+p \cdot \IS_{2}(G_{2})$.

Conversely, let $M$ be an optimal solution for $I$, i.e., a stable matching of score $\opt(I)$.
Note that each $r_{i,j}$ is assigned to a hospital in $N(v_{i}) \cup \{ y_{i,j} \}$ as otherwise $(r_{i,j}, y_{i,j})$ blocks $M$.
We construct a bipartite multi-graph $G_{M}=(V_{2}, E_{2}; F)$ where $V_{2}= \{ v_{1}, v_{2}, \ldots, v_{n_{2}} \}$ and $E_{2}= \{ e_{1}, e_{2}, \ldots, e_{m_{2}} \}$ are identified as vertices and edges of $G_{2}$, respectively, and an edge $(v_{i}, e_{k})_{j} \in F$ if and only if $(r_{i,j}, h_{k}) \in M$.
Here, a subscript $j$ of edge $(v_{i}, e_{k})_{j}$ is introduced to distinguish the multiplicity of edge $(v_{i}, e_{k})$.
The degree of each vertex of $G_{M}$ is at most $p$, so by K\H{o}nig's edge coloring theorem \cite{konig1916}, $G_{M}$ is $p$-edge colorable and each color class $c$ induces a matching $M_{c}$ ($1 \leq c \leq p$) of $G_{M}$.
Each $M_{c}$ is a matching of $G_{3}$, and by the stability of $M$, we can show that it is in fact a maximal matching of $G_{3}$.
Let $M_{*}$ be a minimum cardinality one among them.

Define a subset $S$ of $V_{2}$ by removing vertices that are matched in $M_{*}$ from $V_{2}$.
By the above observation, $S$ is a 2-independent set of $G_{2}$.
We will bound its size.
Note that $s(M)=\opt(I)$ and each hospital in $H$ obtains the score of 1, so $M$ assigns residents to $\opt(I)-m_{2}$ hospitals in $Y$ and each such hospital receives one resident.
There are $p n_{2}$ residents in total, among which $\opt(I)-m_{2}$ ones are assigned to hospitals in $Y$, so the remaining $p n_{2}-(\opt(I)-m_{2})$ ones are assigned to hospitals in $H$.
Thus $F$ contains this number of edges and so $|M_{*}| \leq \frac{p n_{2}-(\opt(I)-m_{2})}{p} = n_{2}- \frac{\opt(I)-m_{2}}{p}$.
Since $|V_{2}|=n_{2}$ and exactly one endpoint of each edge in $M_{*}$ belongs to $V_{2}$, we have that $|S| = |V_{2}| - |M_{*}| \geq \frac{\opt(I)-m_{2}}{p}$.
Therefore $\IS_{2}(G_{2}) \geq |S| \geq \frac{\opt(I)-m_{2}}{p}$.
Hence, we obtain $\opt(I)=m_{2}+p \cdot \IS_{2}(G_{2})$ as desired.
Now we let $p=m_{2}$, and have $\opt(I)=m_{2}(1 + \IS_{2}(G_{2}))$.

Therefore distinguishing between $\opt(I) \leq (m_{2})^{1+\delta}$ and $\opt(I) \geq (m_{2})^{3/2-\delta}$ for some $\delta$ would distinguish between $\IS_{2}(G_{2}) \leq (m_{2})^{\epsilon_{2}}$ and $\IS_{2}(G_{2}) \geq (m_{2})^{1/2-\epsilon_{2}}$ for some constant $\epsilon_{2} >0$.
Since $n=|R|=n_{2}m_{2} \leq (m_{2})^{2}$, a polynomial-time $n^{1/4-\epsilon}$-approximation algorithm for HRT-MSLQ can distinguish between the above two cases for a constant $\delta < \epsilon/2$. 
Hence, the existence of such an algorithm implies P=NP.
This completes the proof.
\end{proof}

We then show inapproximability results for the uniform model and the marriage model under the Unique Games Conjecture (UGC).

\begin{theorem}\label{thm:uniform-hardness}
Under UGC, HRT-MSLQ in the uniform model is not approximable within a ratio $\frac{3\theta +3}{2\theta +4}- \epsilon$ for any positive $\epsilon$. 
\end{theorem}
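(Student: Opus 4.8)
The plan is to prove this inapproximability result by a gap-preserving reduction from a problem whose hardness is known under the Unique Games Conjecture, in the same spirit as the proof of Theorem~\ref{thm:general-hardness} but engineered so that every hospital carries the \emph{same} quota pair $[\ell,u]$ with $u=\theta\ell$. The natural candidate sources are the Khot--Regev gap version of \textsc{Vertex Cover}/\textsc{Max Independent Set} (UGC-hard to tell graphs with an independent set on a $(\tfrac12-\epsilon)$-fraction of vertices from those where every independent set is an $\epsilon$-fraction) or a UGC-hard Max-2-CSP such as \textsc{Max Cut}/\textsc{Max-2-Lin} with its $(1-\epsilon)$-versus-$(\tfrac12+\epsilon)$ gap; the fact that $\tfrac{3\theta+3}{2\theta+4}\to\tfrac32$ as $\theta\to\infty$ suggests that the reduction, like that of Theorem~\ref{thm:general-hardness}, passes through an intermediate covering/matching-type problem, so I would first fix that intermediate gap problem and only then build the uniform HRT-MSLQ instance.

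For the gadget, given a graph $G=(V,E)$ I would introduce, for each vertex $v$, a group of about $u$ residents $R_v$, and two families of hospitals, all with quota $[\ell,u]$: an ``edge hospital'' $h_e$ for each $e\in E$ and a ``private hospital'' $y_v$ for each $v\in V$, together with dummy residents and hospitals used for padding (to enforce $|R|<\sum_h u(h)$ and, crucially, to make the final ratio independent of the edge density of $G$). Residents in $R_v$ would list the incident edge hospitals $\{h_e:v\in e\}$ as one tie at the top, then $y_v$, then the rest arbitrarily; each $h_e$ with $e=\{u,v\}$ would list $R_u\cup R_v$ as a top tie; each $y_v$ would list $R_v$ at the top. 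The intent is that in an optimal stable matching the residents of a vertex either all land on its private hospital (a ``score-$1$, efficiently filled'' hospital, to correspond to $v$ lying in an independent set) or get pushed onto incident edge hospitals, while stability (a resident of $v$ sitting on the private hospital certifies that every incident edge hospital is full) forces these local choices to be globally consistent and hence to encode a source solution. The quantitative heart of the argument — and the origin of the constant $\tfrac{3\theta+3}{2\theta+4}$ — is a counting lemma: in the completeness case edge hospitals can be filled to \emph{exactly} $\ell$ (one unit of score per $\ell$ residents), whereas in the soundness case stability may force hospitals up toward $u=\theta\ell$ residents (one unit of score per $\theta\ell$ residents), so a factor near $\theta$ in score is lost per ``misrouted'' block of residents; combining this loss with the source gap and the additive contribution of the always-satisfiable part of the instance yields the stated ratio.

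Concretely, in the \textbf{completeness} step I would, from a good source solution, route residents of the chosen (independent-set) vertices to their private hospitals, the rest to incident edge hospitals filling each to exactly $\ell$, verify weak stability directly (private-hospital residents cannot block, as all incident edge hospitals are then full, and an undersubscribed edge hospital is topped by a tie of residents already sitting in their own top tie), and evaluate the score as (fixed additive term) $+\ \theta\cdot(\text{size of the good solution, suitably scaled})$. In the \textbf{soundness} step I would, from an arbitrary stable matching $M$, read off a source solution — essentially the vertices whose residents are not pushed to private hospitals, or a derived matching/cover — and show, using stability together with the truncation in $s_M(h)=\min\{1,|M(h)|/\ell\}$, that $s(M)$ is at most (the same additive term) $+\ 1\cdot(\text{best source value, scaled})+\ \epsilon\cdot(\text{lower order})$; dividing the completeness bound by the soundness bound and sending $\epsilon\to0$ gives the factor $\tfrac{3\theta+3}{2\theta+4}-\epsilon$ under UGC. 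The main obstacle I anticipate is precisely this soundness analysis: an arbitrary stable matching may split one vertex's residents across several edge hospitals and leave hospitals holding anywhere from $\ell$ to $u$ residents, and one must still pin down $s(M)$ tightly; a secondary but genuine difficulty is designing the gadget with strictly uniform quotas and enough padding, since the general-model reduction of Theorem~\ref{thm:general-hardness} relied essentially on mixing $[0,p]$ and $[1,1]$ hospitals, which is not allowed here.
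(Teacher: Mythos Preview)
Your proposal has a genuine gap: you have not committed to a source problem, and neither of your candidates (Khot--Regev Independent Set, Max-2-Lin) naturally produces the constant $\tfrac{3\theta+3}{2\theta+4}$. The paper reduces from \emph{Minimum Maximal Matching} (MMM) on balanced bipartite graphs, for which it is UGC-hard to distinguish $\opt(G)\le(\tfrac12+\delta)n$ from $\opt(G)\ge(\tfrac23-\delta)n$ \cite{DBLP:conf/ipco/DudyczLM19}. The reduction achieves the \emph{exact} equation $\opt(I)=n+(\theta-1)(n-\opt(G))$, and plugging in the two endpoints gives the ratio
\[
\frac{1+(\theta-1)\cdot\tfrac12}{1+(\theta-1)\cdot\tfrac13}=\frac{3(\theta+1)}{2(\theta+2)}=\frac{3\theta+3}{2\theta+4}.
\]
Without the MMM gap, there is no reason your construction should hit this constant.

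The gadget is also different from what you sketch, and the difference matters. Hospitals do \emph{not} correspond to edges; they correspond to the $V$-side vertices of the bipartite graph. Each such hospital $h_i$ is ``pre-filled'' by $\ell$ dedicated residents $x_{i,1},\dots,x_{i,\ell}$ who list $h_i$ first and whom $h_i$ lists first, so $h_i$ always has score $1$ and exactly $u-\ell$ free slots remain. The $U$-side is blown up: each vertex $u_i$ becomes $u-\ell$ residents $r_{i,1},\dots,r_{i,u-\ell}$, and each $r_{i,j}$ has its own private hospital $y_{i,j}$ (one per resident, not per vertex), contributing $1/\ell$ to the score when used. This kills the edge-density issue you flagged: there are no edge hospitals at all. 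The obstacle you anticipated in soundness --- residents of one vertex splitting across several hospitals --- is handled by K\H{o}nig's edge-colouring theorem: the bipartite multigraph recording which $r_{i,j}$ went to which $h_k$ has maximum degree $u-\ell$, hence decomposes into $u-\ell$ matchings, and stability forces each colour class to be a \emph{maximal} matching of $G$, so each has size at least $\opt(G)$. This is the missing idea that turns the soundness sketch into an inequality.
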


\begin{theorem}\label{thm:marriage-hardness}
Under UGC, HRT-MSLQ in the marriage model is not approximable within a ratio $\frac{9}{8}- \epsilon$ for any positive $\epsilon$. 
\end{theorem}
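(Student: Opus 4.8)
The plan is to prove Theorem~\ref{thm:marriage-hardness} by a gap-preserving reduction from the \textsc{Max-2-Lin}$(q)$ problem (or, equivalently, from a hard instance of \textsc{Vertex Cover} / \textsc{Max-Cut}-type constraint satisfaction), mirroring the structure of the uniform-model reduction in Theorem~\ref{thm:uniform-hardness} but specialized to quotas $[0,1]$ and $[1,1]$. Since the marriage model only allows a hospital to ``count'' at most one resident, the right source of hardness is a problem where the optimum is governed by a maximum matching / minimum vertex cover type quantity in a graph, whose UGC-hardness of approximation is known. Concretely, I would start from the UGC-hardness of distinguishing, for a graph $G$, whether its minimum vertex cover is close to $\frac{1}{2}|V(G)|$ or close to $|V(G)|$ (equivalently, the maximum independent set is close to $\frac{1}{2}|V(G)|$ versus close to $0$), a consequence of the Khot--Regev result under UGC.

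First I would describe the reduction. Given an instance $G=(V,E)$ of the above vertex-cover gap problem, I construct an HRT-MSLQ instance in the marriage model as follows: introduce a resident $r_v$ for each vertex $v\in V$; introduce a ``private'' hospital $y_v$ with quota $[1,1]$ for each $v$, and an ``edge'' hospital $h_e$ with quota $[0,1]$ for each $e\in E$. The preference list of $r_v$ is a tie over all $h_e$ with $e\ni v$, followed by $y_v$, followed by everything else in arbitrary strict order; the hospital $h_e$ for $e=\{u,v\}$ ranks $r_u,r_v$ in a tie at the top, then everything else; $y_v$ ranks $r_v$ first, then the rest. The key combinatorial fact, exactly as in the general-model reduction (Theorem~\ref{thm:general-hardness}), is that in any stable matching each $r_v$ is assigned either to some incident $h_e$ or to $y_v$, and stability forces the set of edges used to form a maximal matching of $G$, while the set of $v$ with $r_v \mapsto y_v$ is precisely an independent set of $G$. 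Hence $\opt(I) = |E| + \mathrm{IS}(G)$ (the $|E|$ comes from the edge hospitals, each scoring $1$ since $\ell=0$; the $\mathrm{IS}(G)$ counts the private hospitals that receive their resident).

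Next I would plug in the numbers. Adjusting the number of private hospitals versus edge hospitals by taking an appropriate number of parallel copies (or by a weighting/padding argument) so that $|E|$ and $|V|$ are balanced to the right ratio, the completeness case gives score roughly $|E| + \frac12|V|$ and the soundness case gives roughly $|E| + \epsilon|V|$; choosing the blow-up so that $|E|$ is a suitable constant multiple of $|V|$ yields a ratio approaching $\frac{9}{8}$. The precise constant $\frac{9}{8}$ should come out of optimizing the ratio $\frac{|E| + \frac12|V|}{|E|}$ against the constraint relating $|E|$ and $|V|$ imposed by the UGC-hard graph family (bounded-degree or $\frac{1}{2}$-ratio Vertex Cover instances); I would carry out that one-variable optimization to pin down $\frac{9}{8}$ and confirm that $n = |R|$ stays polynomial so the reduction is genuinely gap-preserving.

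The main obstacle I anticipate is twofold. First, one must verify carefully that the ``maximal matching $\leftrightarrow$ independent set'' correspondence is exact for \emph{stable} matchings of the constructed instance, i.e.\ that no stable matching can do better by leaving an edge hospital empty while also leaving private hospitals empty --- this requires checking the blocking-pair conditions with the tie structure, and is the analogue of the stability argument already carried out in the proof of Theorem~\ref{thm:general-hardness}. Second, and more delicately, the target ratio $\frac{9}{8}$ is small, so the padding/balancing must be tight: a crude reduction only gives a ratio like $1 + o(1)$, and extracting the exact $\frac{9}{8}$ requires choosing the UGC-hard source problem (likely the $\frac{1}{2}$-vs-$1$ Vertex Cover gap, or a Max-Cut-based variant) and the copy counts so that the completeness-to-soundness ratio is maximized; getting this constant exactly right, rather than merely some constant $>1$, is where the real work lies.
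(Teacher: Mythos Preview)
Your reduction has a genuine gap in the combinatorial correspondence. You claim that in a stable matching of your instance, the set $S=\{v:r_v\mapsto y_v\}$ is an independent set of $G$ and hence $\opt(I)=|E|+\mathrm{IS}(G)$. In fact $S$ must be a \emph{2-independent set} (no two vertices of $S$ at distance $\leq 2$). Here is why: if $v\in S$, then for every edge $e=\{u,v\}$ the hospital $h_e$ must be full with a resident $h_e$ weakly prefers to $r_v$; since $h_e$'s top tie is $\{r_u,r_v\}$ and $r_v$ is at $y_v$, this forces $r_u\mapsto h_e$. So if $v,v'\in S$ share a common neighbour $u$, then $r_u$ would have to be assigned to two distinct edge hospitals, a contradiction. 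Consequently $\opt(I)=|E|+\mathrm{IS}_2(G)$, and the Khot--Regev Vertex Cover gap is the wrong source problem; you need UGC-hardness for 2-IS, equivalently for Minimum Maximal Matching. (Your construction is in fact the $p=1$ case of the paper's Theorem~\ref{thm:general-hardness} reduction, where exactly this 2-IS correspondence is established.)

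The paper avoids this detour by reducing directly from Minimum Maximal Matching on bipartite graphs $G=(U,V;E)$ with $|U|=|V|=n$. Residents are the vertices of $U$; hospitals are the vertices of $V$ (with quotas $[0,1]$) together with private hospitals $y_i$ (with $[1,1]$). Because $G$ is bipartite, the pairs $(u_i,v_j)$ occurring in a stable matching now form a genuine matching of $G$, and stability forces it to be maximal; one gets exactly $\opt(I)=2n-\opt_{\mathrm{MMM}}(G)$. The UGC-hardness of bipartite MMM (gap $(\tfrac12+\delta)n$ versus $(\tfrac23-\delta)n$, due to Dudycz--Lewandowski--Marcinkowski) then gives $\opt(I)\geq(\tfrac32-\delta)n$ versus $\opt(I)\leq(\tfrac43+\delta)n$, and the ratio $\tfrac{3/2}{4/3}=\tfrac98$ drops out with no padding or parameter optimisation at all. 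Your proposed balancing of $|E|$ against $|V|$ is both unnecessary (with the right source problem) and unworkable as stated (adding dummy $[0,1]$ hospitals only pushes the ratio toward $1$).
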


Furthermore, we give two examples showing that HRT-MSLQ is NP-hard even in very restrictive settings.
The first is a marriage model for which ties appear in one side only.

\begin{theorem}\label{thm:NP-hard-R-Ties}
HRT-MSLQ in the marriage model is NP-hard even if there is a master preference list of hospitals and ties appear only in preference lists of residents or only in preference lists of hospitals.
\end{theorem}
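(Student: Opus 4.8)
The plan is to prove NP-hardness by reduction. The starting observation is that in the marriage model every resident is matched in any stable matching, so the assignment is a perfect matching onto some $n$-element subset of $H$, and the score of a stable matching $M$ equals $|\{h\in H:\ell(h)=0\}|+|\{h\in H:\ell(h)=1,\ M(h)\neq\emptyset\}|$. Hence, calling the hospitals with $\ell(h)=0$ ``dummies,'' HRT-MSLQ in this model asks for a stable matching placing as few residents as possible on dummies; a resident on a dummy plays the role of ``being unmatched,'' so the problem is exactly a maximum-size / perfect stable matching question of MAX-SMTI type. I would therefore reduce from a suitably restricted COM-SMTI (``does a perfect weakly stable matching exist?''), using the version that is NP-complete with ties confined to one side together with a master list on one side. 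For the case where ties are allowed only on the residents' side one uses the variant whose master list is on the strict (hospital) side; for the case where ties are allowed only on the hospitals' side one uses the variant whose (hospital-side) master list itself carries the ties.

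Given such an instance $I_0$ with residents $R_0$, hospitals $H_0$ sharing a master list, and an acceptability graph, I would build an HRT-MSLQ marriage instance $I$ as follows: take $R=R_0$; give every hospital of $H_0$ the quota pair $[1,1]$; and add a pool $D$ of dummy hospitals with quota pair $[0,1]$. Each resident's list is the concatenation of (a) her $I_0$-acceptable hospitals in their $I_0$ order (retaining the ties in the ``ties on residents'' case), then (b) all dummies in a fixed order, then (c) her remaining (i.e.\ $I_0$-unacceptable) hospitals in a fixed order; every hospital's list is the global resident master list, extended by the remaining residents in that same order, and the dummies also follow this master list --- so all of $H$ genuinely shares one master list and ties occur only on the prescribed side. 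The intended correspondence is that a perfect stable matching of $I_0$, with every dummy left empty, is a stable matching of $I$ of score $|D|$, while from any stable matching of $I$ one recovers a stable matching of $I_0$ by deleting dummy edges and $I_0$-unacceptable edges; taking $|D|$ sufficiently large forces every stable matching of $I$ to avoid $I_0$-unacceptable edges altogether, so $\opt(I)=|D|$ holds if and only if $I_0$ admits a perfect stable matching.

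The verification has two directions. Forward: when extending a perfect stable matching of $I_0$, no resident--dummy pair can block because a dummy is last on every resident's list and all residents are matched, and appended hospital--resident edges are dominated by $I_0$-acceptable ones. Backward: one must show a stable matching $M$ of $I$ uses no $I_0$-unacceptable edge --- otherwise the resident on such an edge strictly prefers every dummy, and a counting argument (each ``blocked'' dummy needs a distinct partner it prefers to that resident, and there are not enough such partners once $|D|$ exceeds $n$) produces an empty or badly filled dummy, i.e.\ a blocking pair; the rest of the argument on $I_0$ is then routine. I expect the main obstacles to be two-fold. First, pinning down (or re-deriving via a SAT gadget) the base NP-completeness of COM-SMTI under the simultaneous restrictions ``one-sided ties'' and ``master list on one side'' --- the master list is a strong structural constraint, so the gadget's hospital preferences must be engineered to linearize into a single order without creating unintended stable matchings, and the ties-on-hospitals variant needs its own handling. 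Second, getting the list-completion step exactly right: it must preserve the single master list on $H$ while still excluding the appended edges, which is precisely where the size of $D$ and the placement of dummies relative to the $I_0$-unacceptable hospitals in the residents' lists have to be chosen carefully. Both cases of the theorem run through the same skeleton, differing only in which side of $I_0$ carries the ties.
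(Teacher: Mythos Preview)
Your approach is essentially the paper's: both reduce from COM-SMTI with a hospital-side master list and one-sided ties (the paper cites Theorem~3.1 of Irving--Manlove--Scott, which even gives master lists on \emph{both} sides, so your first anticipated obstacle is already resolved), append dummy $[0,1]$ hospital(s) followed by the unacceptable hospitals to complete the residents' lists, give every hospital the women's master list, and show that the maximum score is reached exactly when the source instance has a perfect weakly stable matching.

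The paper's execution is slightly cleaner than yours in that it uses a \emph{single} dummy $h_{n+1}$ rather than a pool $D$ with $|D|>n$. In the backward direction the paper only analyses a stable matching that already attains the top score $n+1$; in such a matching $h_{n+1}$ is necessarily empty, so any resident sitting on an $I_0$-unacceptable hospital would form a blocking pair with it---no counting argument is needed. Your version works too but is overkill.

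Two small slips in your write-up: the target optimal score is $|D|+n$, not $|D|$ (every dummy contributes~$1$ automatically, and in the perfect case all $n$ quota-$[1,1]$ hospitals are filled); and dummies are not ``last on every resident's list'' in your own construction (the unacceptable hospitals come after them), so the correct reason a dummy cannot block in the forward direction is simply that every resident is already at an acceptable hospital, which precedes all dummies.
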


The other is a setting like the capacitated house allocation problem, where all hospitals are indifferent among residents.

\begin{theorem}\label{thm:NP-hard-HA}
HRT-MSLQ in the uniform model is NP-hard even if all the hospitals quotas are $[1,2]$, preferences lists of all residents are strict, and
all hospitals are indifferent among all residents 
(i.e., there is a master list of hospitals consisting of a single tie).
\end{theorem}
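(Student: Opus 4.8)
The plan is to give a polynomial-time reduction from an NP-complete covering problem; concretely I would reduce from \textsc{Vertex Cover}, which remains NP-hard on graphs of maximum degree three (this degree bound will matter in the gadgets). Before building the instance, it pays to reformulate the objective. Since $\ell(h)=1$ for every hospital, $s_M(h)=1$ exactly when $|M(h)|\ge 1$, so $s(M)$ is the number of nonempty hospitals; and because every resident is matched in a stable matching, $n=\sum_{h}|M(h)|$ gives $s(M)=n-|\{h:|M(h)|=2\}|$. Hence maximizing the score is equivalent to \emph{minimizing the number of ``saturated'' hospitals}, i.e., those receiving exactly two residents. Moreover, as every hospital is indifferent among residents, a pair $(r,h)$ blocks $M$ precisely when $r$ prefers $h$ to $M(r)$ and $|M(h)|\le 1$; thus $M$ is stable iff, for every resident $r$, every hospital that $r$ strictly prefers to $M(r)$ is saturated. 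In particular a resident is ``pinned'' to her first choice whenever that hospital is not saturated, so any hospital that is the first choice of two or more residents is saturated in \emph{every} stable matching.

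Given a graph $G=(V,E)$, I would introduce a \emph{vertex hospital} $H_v$ for each $v\in V$ (capacity $2$), make $H_v$ the first choice of a single dedicated resident $r_v$ (so that $H_v$ is never ``forced'' and $r_v$ is pinned to $H_v$ only when $H_v$ is unsaturated), and attach to each edge $e=\{u,v\}$ a small gadget consisting of a constant number of auxiliary residents and hospitals, some of which are first choices of several residents and hence unavoidably saturated, contributing a fixed constant $\kappa$ to the saturated count. The gadget of $e$ would be designed so that it can be completed into a locally stable sub-assignment \emph{without any extra saturated hospital} if and only if at least one of $H_u,H_v$ is saturated; the mechanism is to place inside the gadget a hospital that becomes the common pinned destination of two gadget residents (hence causes infeasibility) unless one of $H_u,H_v$ sits above it in those residents' lists and is saturated, which then lets one of them be diverted there. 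Then, starting from a vertex cover $C$, declaring exactly the hospitals $\{H_v:v\in C\}$ together with the unavoidably-saturated ones to be full and routing the remaining residents greedily yields a stable matching with $s(M)=n-|C|-\kappa$; conversely, from any stable matching one reads off $C=\{v:H_v\text{ is saturated}\}$ and, using the edge gadgets (to see $C$ is a cover) plus the counting identity above (to see $|C|\le n-s(M)-\kappa$), recovers a vertex cover. So $G$ has a vertex cover of size $k$ iff the constructed instance admits a stable matching of score $n-k-\kappa$, which yields the theorem.

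The main obstacle is the gadget design and analysis, where the correspondence must be made \emph{exact}. Two points are delicate. First, overflow: a vertex of high degree would have many edge gadgets trying to route residents through $H_v$, so the preference lists have to be arranged (this is exactly where bounding the degree of $G$, or equivalently first subdividing $G$, is used) so that whenever $H_v$ is \emph{unsaturated} no two residents are simultaneously pinned to the same non-saturated hospital, while whenever $H_v$ is \emph{saturated} the surplus residents have working fallbacks and $H_v$ can indeed be filled to exactly two. Second, calibration: one must verify that the only incentive to saturate a vertex hospital is to cover an edge — no stable matching may profit, on balance, by saturating vertex hospitals to gain flexibility elsewhere — and that the set of hospitals forced to be saturated ``for free'' has a fixed size $\kappa$ independent of the chosen cover. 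Checking stability of the matching built from a vertex cover, and that the reverse-direction routing never produces a blocking pair, are the remaining routine but careful steps.
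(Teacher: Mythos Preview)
Your structural observations are correct and useful: with quotas $[1,2]$ and all lower quotas equal to $1$, the score $s(M)$ equals $n$ minus the number of saturated hospitals, and with hospitals completely indifferent a matching is stable iff every hospital that some resident strictly prefers to her assignment is saturated. These are exactly the right lenses for this restricted setting.

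The gap is that you do not actually build the edge gadget. You specify what the gadget should achieve (it should be locally completable without extra saturation iff at least one of $H_u,H_v$ is saturated) and you correctly flag the two delicate points (overflow at high-degree vertices and calibration so that saturating a vertex hospital never helps except to cover an edge), but you do not exhibit preference lists realizing these properties, nor do you verify the two directions of the reduction against a concrete construction. In this problem the gadget is the whole proof: once you write down specific strict lists for the auxiliary residents, you must check that (i) from a cover $C$ you can route \emph{exactly} one extra resident into each $H_v$ with $v\in C$ while keeping all other hospitals at one resident, and (ii) from a stable matching, no ``cheap'' saturation pattern other than a vertex cover can satisfy all edge gadgets. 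Neither step is routine with capacity $2$ and bounded degree $3$, because each $H_v$ can absorb only one extra resident while up to three incident gadgets may compete for it; you cannot simply assert this works.

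For comparison, the paper avoids gadgetry entirely by reducing from \textsc{Minimum Pareto Optimal Matching} (MIN-POM). That choice is natural here: when hospitals are indifferent, stability on the residents' side is essentially a Pareto condition, so a MIN-POM instance maps almost verbatim into the agents--hospitals part of the construction, with only padding residents/hospitals to make lists complete and force the intended structure. This yields an exact relation $\opt(I')=3n-\opt(I)$ without any per-edge gadgets. If you want to salvage the Vertex-Cover route, you must supply the explicit gadget and carry out both directions; otherwise, switching to a Pareto-optimality source problem is a far cleaner path.
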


\section{Concluding Remarks}
We proposed the Hospitals/Residents problem with Ties to Maximally Satisfy Lower Quotas.
We showed the difficulty of this problem from computational and strategic aspects; 
we provided NP-hardness and inapproximability results and showed
that the exact optimization is incompatible with strategy-proofness.
We presented a single algorithm {\sc Double Proposal} and tightly showed its approximation factor for 
four fundamental scenarios, which is better than that of a naive method using arbitrary tie-breaking.

There remain several open questions and future research directions for HRT-MSLQ.
Clearly, it is a major open problem to close a gap between the upper and lower bounds of 
the approximation factor for each scenario. This problem has two variants
depending on whether we restrict ourselves to strategy-proof algorithms or not.

In this paper, we assumed the completeness of the preference lists of agents.
The proofs for the stability and the strategy-proofness of our algorithm extend to the setting with incomplete lists,
but we used this assumption in the analysis of the maximum gap and the approximation factors. 
Considering the setting with incomplete lists may be an interesting future direction.



\bibliography{main_stacs}
\appendix
\clearpage

\section{Examples}\label{app:stacs}
We give some examples that show the difficulty of implementing strategy-proof algorithms for HRT-MSLQ.

\subsection{Incompatibility between Optimization and Strategy-proofness}\label{app:incompatible}
Here, we provide two examples that show that solving HRT-MSLQ exactly is incompatible 
with strategy-proofness even if we ignore computational efficiency.
This incompatibility holds even for restrictive models.
The first example is an instance in the marriage model in which ties appear only in preference lists of hospitals.
The second example is an instance in the uniform model in which ties appear only in preference lists of residents.
\begin{example}\label{ex:SP}
Consider the following instance $I$, consisting of two residents and three hospitals.
\begin{center}
\renewcommand\arraystretch{1.2}
\begin{tabular}{llllllllllllllllllllllllll}
$r_{1}$: & $h_{1}$ & $h_{2}$ & $h_{3}$ & \hspace{15mm} & $h_{1}$ $[1, 1]$: &  ($r_{1}$ &  $r_{2}$) \\
$r_{2}$: & $h_{1}$ & $h_{2}$ & $h_{3}$ & \hspace{15mm} & $h_{2}$ $[1, 1]$: &  ($r_{1}$ &  $r_{2}$) &\\
 & & & &  & $h_{3}$ $[0, 1]$: &  ($r_{1}$ &  $r_{2}$)  \\
\end{tabular}
\end{center}
Then, $I$ has two stable matchings $M_1=\{(r_1,h_1),(r_2,h_2)\}$ and $M_2=\{(r_1,h_2),(r_2,h_1)\}$,
both of which have a score of $3$.
Let $A$ be an algorithm that outputs a stable matching with a maximum score for any instance of HRT-MSLQ.
Without loss of generality, suppose that $A$ returns $M_1$.
Let $I'$ be obtained from $I$ by replacing $r_2$'s list with ``$r_2:\,h_1\ h_3\ h_2$.''
Then, the stable matchings for $I'$ are $M_3=\{(r_1,h_1),(r_2,h_3)\}$ and $M_4=\{(r_1,h_2),(r_2,h_1)\}$,
which have scores $2$ and $3$, respectively.
Since $A$ should return one with a maximum score, the output is $M_4$,
in which $r_2$ is assigned to $h_1$ while she is assigned to $h_2$ in $M_1$. 
As $h_1\succ_{r_3} h_2$ in her true preference, 
this is a successful manipulation for $r_2$, and $A$ is not strategy-proof.
\end{example}
Example~\ref{ex:SP} shows that there is no strategy-proof algorithm for HRT-MSLQ that attains an approximation factor 
better than $1.5$ even if there are no computational constraints.

\begin{example}\label{ex:SP2}
Consider the following instance $I$, consisting of six residents and five hospitals,
where the notation ``$\cdots$'' at the tail of lists denotes an arbitrary strict order of all agents missing in the list.
\begin{center}
\renewcommand\arraystretch{1.2}
\begin{tabular}{llllllllllllllllllllllllll}
$r_{1}$: & $h_{1}$ & $\cdots$&  & \hspace{15mm} & $h_{1}$ $[1, 2]$: &  $r_{1}$ & $r_{2}$ & $r_{6}$ &$\cdots$\\
$r_{2}$: & $h_{3}$ & $h_{2}$ & $h_{1}$ &$\cdots$  & $h_{2}$ $[1, 2]$: &  $r_{2}$ & $\cdots$ &&\\
$r_{3}$: & $h_{3}$ & $\cdots$&  &  & $h_{3}$ $[1, 2]$: &  $r_{3}$ & $r_{4}$ & $r_{2}$ &$\cdots$\\
$r_{4}$: & ($h_{3}$ & $h_{4}$) &$\cdots$  &  & $h_{4}$ $[1, 2]$: &  $r_{5}$ & $r_{4}$ & $r_{6}$ &$\cdots$\\
$r_{5}$: & $h_{4}$ & $\cdots$ &  &  & $h_{5}$ $[1, 2]$: &  $r_{6}$ & $\cdots$ &  \\
$r_{6}$: & $h_{4}$ & $h_{5}$ & $h_{1}$ &$\cdots$    \\
\end{tabular}
\end{center}
This instance $I$ has two stable matchings 
\begin{align*}
&M_{1}=\{ (r_{1}, h_{1}), (r_{2}, h_{2}), (r_{3}, h_{3}), (r_{4}, h_{3}), (r_{5}, h_{4}), (r_{6}, h_{4}) \}, \text{ and }\\ 
&M_{2}=\{ (r_{1}, h_{1}), (r_{2}, h_{3}), (r_{3}, h_{3}), (r_{4}, h_{4}), (r_{5}, h_{4}), (r_{6}, h_{5}) \},
\end{align*}
both of which have a score of 4.
Let $A$ be an algorithm that outputs an optimal solution for any input.
Then, $A$ must output either $M_{1}$ or $M_{2}$.

Suppose that $A$ outputs $M_{1}$.
Let $I'$ be an instance obtained by replacing $r_{2}$'s preference list from 
``$r_2:\,h_{3} \ h_{2} \ h_{1} \cdots$'' to ``$r_2:\,h_{3} \ h_{1} \ h_{2} \cdots$.''
Then, the stable matchings $I'$ admits are $M_2$ and $M'_{1}=\{ (r_{1}, h_{1}), (r_{2}, h_{1}), (r_{3}, h_{3}), (r_{4}, h_{3}), (r_{5}, h_{4}), (r_{6}, h_{4}) \}$, whose score is 3.
Hence, $A$ must output $M_{2}$.
As a result, $r_{2}$ is assigned to a better hospital $h_{3}$ than $h_{2}$, so this manipulation is successful.

If $A$ outputs $M_{2}$, then $r_{6}$ can successfully manipulate the result by changing her list from 
``$r_6:\,h_{4} \ h_{5} \ h_{1} \cdots$'' to ``$r_6:\,h_{4} \ h_{1} \ h_{5} \cdots$.''
The instance obtained by this manipulation has two stable matchings $M_{1}$ and $M'_{2}=\{ (r_{1}, h_{1}), (r_{2}, h_{3}), (r_{3}, h_{3}), (r_{4}, h_{4}), (r_{5}, h_{4}), (r_{6}, h_{1}) \}$, whose score is 3.
Hence, $A$ must output $M_{1}$ and $r_{6}$ is assigned to $h_{4}$, which is better than $h_{5}$.
\end{example}

\subsection{Absence of Strategy-proofness in Adaptive Tie-breaking}\label{app:naive}
We provide an example that demonstrates that introducing a greedy tie-breaking method
into the resident-oriented Gale--Shapley algorithm in an adaptive manner destroys the strategy-proofness for residents.
\begin{example}\label{ex:naive}
Consider the following instance $I$ (in the uniform model), consisting of five residents and three hospitals.
\begin{center}
\renewcommand\arraystretch{1.2}
\begin{tabular}{llllllllllllllllllllllllll}
$r_{1}$: & ~$h_{1}$ & $h_2$& $h_3$ & \hspace{10mm} & $h_{1}$ $[1, 2]$: &  $r_{2}$ & $r_{3}$ & $r_{5}$ & $r_{1}$ & $r_{4}$ \\
$r_{2}$: & ($h_{1}$ & $h_{2}$) & $h_{3}$ &         & $h_{2}$ $[1, 2]$: &  $r_{2}$ & $r_{4}$ & $r_{1}$ & $r_{3}$ & $r_{5}$\\
$r_{3}$: & ~$h_{1}$ & $h_{2}$ & $h_3$ &  & $h_{3}$ $[1, 2]$:&  $r_{1}$ & $r_{2}$ & $r_{3}$ & $r_{4}$ & $r_{5}$\\
$r_{4}$: & ~$h_{2}$ & $h_{1}$ &$h_3$  &  & & & & &\\
$r_{5}$: & ~$h_{1}$ & $h_3$ & $h_2$ &  & & & &  
\end{tabular}
\end{center}
Consider an algorithm that is basically the resident-oriented Gale--Shapley algorithm 
and let each resident prioritize deficient hospitals over sufficient hospitals
among the hospitals in the same tie.
Its one possible execution is as follows.
First, $r_1$ proposes to $h_1$ and is accepted. 
Next, as $h_1$ is sufficient while $h_2$ is deficient, $r_2$ proposes to $h_2$ and is accepted.
If we apply the ordinary Gale--Shapley procedure afterward, then we obtain a matching
$\{ (r_{1}, h_{3}), (r_{2}, h_{2}), (r_{3}, h_{1}), (r_{4}, h_{2}), (r_{5}, h_{1}) \}$.
Thus, $r_1$ is assigned to her third choice.

Let $I'$ be an instance obtained by swapping $h_{1}$ and $h_2$ in $r_1$'s preference list.
If we run the same algorithm for $I'$, then $r_1$ first proposes to $h_2$.
Next, as $h_2$ is sufficient while $h_1$ is deficient, $r_2$ proposes to $h_1$ and is accepted.
By applying the ordinary Gale--Shapley procedure afterward, we obtain
$\{ (r_{1}, h_{2}), (r_{2}, h_{1}), (r_{3}, h_{1}), (r_{4}, h_{2}), (r_{5}, h_{3}) \}$.
Thus, $r_{1}$ is assigned to a hospital $h_{2}$,
which is her second choice in her original list. 
Therefore, this manipulation is successful for $r_1$.
\end{example}

\switch{\end{document}}{}
\section{Proof of Lemma~\ref{lem:resident-opt}}\label{app:SP}
Let $M^*$ and $I^*$ be defined as in the proof of Theorem~\ref{thm:SP} in Section~\ref{sec:strategy-proofness}.
We show that the matching $M^*$ coincides with the output of the resident-oriented Gale--Shapley algorithm applied to the auxiliary instance $I^*$.
Since it is known that the resident-oriented Gale--Shapley algorithm outputs the resident-optimal stable matching (see e.g., \cite{books/daglib/0066875}),
it suffices to show the stability and resident-optimality of $M^*$.

The analysis goes as follows: Although matchings $M_1$, $M_2$, and $M^*$ are defined for the final matching $M$ of $I$, we also refer to them for a temporal matching $M$ at any step of the execution of {\sc Double Proposal}.
When some event occurs in {\sc Double Proposal}, we remove some pairs from the instance $I^{*}$, where removing $(r,h)$ from $I^{*}$ means to remove $r$ from $h$'s list and $h$ from $r$'s list.
The removal operations are defined shortly.
We then investigate $M$, $M^{*}$, and $I^{*}$ at this moment and observe that some property holds for $M^{*}$ and $I^{*}$.
This property is used to show the stability and resident-optimality of the final matching $M^{*}$.

Here are the definitions of removal operations.
\begin{itemize}
\setlength{\itemsep}{2mm}
\item {\bf\boldmath Case (1) $r_i$ is rejected by $h_j$ for the first time.}
In this case, we remove $(r'_i, h^{\low}_j)$ from $I^{*}$.
Just after this happens, by the priority rule on indices at Line~\ref{chosen1}, 
we have (i) $|M(h_j)|\geq \ell(h_j)$ and (ii) every $r_{k}\in M_1(h_j)$ satisfies $k<i$. 
Note that (i) implies $\min\{u(h_j)-|M(h_j)|+\ell(h_j),~u(h_j)\}=u(h_j)-|M(h_j)|+\ell(h_j)$, 
and hence $M^*$ assigns $\ell(h_j)-|M_1(h_j)|$ dummy residents to $h^{\low}_j$.
Additionally, $M^*$ assigns $|M_1(h_j)|$ residents $r'_{k}$ to $h^{\low}_j$ and (ii) implies that $k<i$ for every $k$. 
Thus, at this moment, the hospital $h^{\low}_j$ is full in $M^*$ with residents better than $r'_i$.

\item {\bf\boldmath Case (2) $r_i$ is rejected by $h_j$ for the second time.}
In this case, we remove $(r'_i, h^{\upp}_j)$ from $I^{*}$.
Just after this happens, by Lines~\ref{u-full}, \ref{chosen2}, and the priority rule on indices, 
$M(h_j)=M_2(h_j)$, $|M(h_j)|=u(h_j)$, and every $r_{k}\in M_2(h_j)$ satisfies either 
(a) $r_{k}\succ_{h_j} r_i$ or (b) $r_{k}=_{h_j} r_i$ and $k<i$,
each of which implies $r'_{k}\succ_{h^{\upp}_j} r'_i$.
Thus, at this moment, the hospital $h^{\upp}_j$ is full in $M^*$ with residents better than $r'_i$.

\item {\bf\boldmath Case (3) $|M_2(h_j)|$ increases by 1, from $u(h_j)-p$ to $u(h_j)-p+1$ for some $p$ ($1 \leq p \leq u(h_j)$).}
In this case, we remove one or two pairs depending on $p$.

We first remove $(d_{j,p}, h^{\upp}_j)$ from $I^{*}$.
Just after this happens, we have $u(h_j)-|M_2(h_j)| = p-1$ and hence
$M^*(h^{\upp}_j) = \set{r'_i|r_i\in M_2(h_j)} \cup \set{d_{j,1}, d_{j,2}, \dots, d_{j,p-1}}$.
Thus, at this moment, the hospital $h^{\upp}_j$ is full in $M^*$ with residents better than $d_{j,p}$.

If, furthermore, $p$ satisfies $1\leq p\leq u(h_j)-\ell(h_j)$, we remove $(d_{j,\ell(h_j)+p}, h^{\low}_j)$ from $I^{*}$.
Just after this happens, we have $|M_2(h_j)| = u(h_j)-p+1 >\ell(h_j)$.
Note that, by Lines~\ref{reject}--\ref{reject-end}, when $|M(h_j)|$ exceeds $\ell(h_j)$, any resident in $M(h_j)$ is once rejected by $h_{j}$, and this invariant is maintained till the end of the algorithm.
Hence, $|M_1(h_j)|=0$ and $|M_2(h_j)|=|M(h_j)|$ hold.
Then, $M^*(h^{\low}_j) = \set{d_{j,p}, d_{j,p+1}, \dots, d_{j,p+\ell(h_j)-1}}$.
Thus, at this moment, the hospital $h^{\low}_j$ is full in $M^*$ with residents better than $d_{j,\ell(h_j)+p}$.
\end{itemize}
\smallskip

Now we will see two properties of $M^{*}$ at the termination of {\sc Double Proposal}.

\begin{claim}\label{claim:strategy-1}
If $(r,h)$ is removed from $I^{*}$ by {\sc Double Proposal}, $h$ is full in $M^*$ with residents better than $r$.
\end{claim}

\begin{proof}
In all Cases (1)--(3) of the removal operation, we have observed that, just after $(r,h)$ is removed from $I^{*}$, $h$ is full in $M^*$ with residents better than $r$.
We will show that this property is maintained afterward, which completes the proof.

Note that $M^*$ changes only when $M$ changes and this occurs at Lines \ref{update1}, \ref{reject-end}, \ref{update3}, and \ref{update4}.
Let $r_{i}$ be the resident chosen at Line \ref{unmatched} and $h_{j}$ be the hospital chosen at Line \ref{proposal1} or \ref{proposal2}.
We show that, for each of the above cases, if the condition is satisfied before updating $M$, it is also satisfied after the update.

Suppose that $M$ changes as $M\coloneqq M\cup\{(r_{i},h_{j})\}$ at Line \ref{update1}.
Before application of Line~\ref{update1}, $|M(h_{j})| < \ell(h_j)$.
This implies that $M_{1}(h_{j})=M(h_{j})$ and $M_{2}(h_{j})=\emptyset$, so $M^{*}(h_{j}^{\upp})=\{ d_{j,1}, \ldots, d_{j,u(h_j)} \}$ and $M^{*}(h_{j}^{\low})$ consists of less than $\ell(h_j)$ residents in $R'$.
Since we assume that $h$ is full, $h$ cannot be $h_{j}^{\low}$.
When Line \ref{update1} is applied, $M^{*}(h_{j}^{\upp})$ does not change so we are done.

Suppose that $M$ changes as $M\coloneqq (M\cup \{(r_{i},h_{j})\})\setminus\{(r_{i'},h_{j})\}$ at Line \ref{reject-end}.
If $i'=i$, $M$ is unchanged, so suppose that $i' \neq i$.
Note that $r_{i'}$ is not rejected by $h_{j}$ yet.
If $r_{i}$ is not rejected by $h_{j}$ yet, $M_{2}$ does not change, $M_{1}$ changes as $M_{1}\coloneqq (M_{1}\cup \{(r_{i},h_{j})\})\setminus\{(r_{i'},h_{j})\}$, and $i'>i$.
Hence, $M^{*}(h_{j}^{\upp})$ does not change and $M^{*}(h_{j}^{\low})\coloneqq  (M^{*}(h_{j}^{\low}) \cup \{ r'_{i} \}) \setminus \{ r'_{i'} \}$.
If $r_{i}$ is once rejected by $h_{j}$, $M_{2}\coloneqq M_{2}\cup \{(r_{i},h_{j})\}$ and $M_{1}\coloneqq M_{1} \setminus\{(r_{i'},h_{j})\}$.
Then, $M^{*}(h_{j}^{\upp})\coloneqq  (M^{*}(h_{j}^{\upp}) \cup \{ r'_{i} \}) \setminus \{ d_{j,k} \}$ and $M^{*}(h_{j}^{\low})\coloneqq  (M^{*}(h_{j}^{\low}) \cup \{ d_{j,k} \}) \setminus \{ r'_{i'} \}$ for some $k$.
Hence, the condition is satisfied for both $h_{j}^{\upp}$ and $h_{j}^{\low}$.

Suppose that $M$ changes as $M\coloneqq M\cup\{(r_{i},h_{j})\}$ at Line \ref{update3}.
By the condition of this case, $M_{1}(h_{j})=\emptyset$ and $M_{2}(h_{j})=M(h_{j})$ before the application of Line \ref{update3}.
Then, by application of Line \ref{update3}, $M_{1}$ does not change and $M_{2}\coloneqq M_{2}\cup \{(r_{i},h_{j})\}$.
Then, $M^{*}(h_{j}^{\upp})\coloneqq  (M^{*}(h_{j}^{\upp}) \cup \{ r'_{i} \}) \setminus \{ d_{j,k} \}$ and $M^{*}(h_{j}^{\low})\coloneqq  (M^{*}(h_{j}^{\low}) \cup \{ d_{j,k} \}) \setminus \{ d_{j,k+\ell} \}$ for some $k$.
Hence, the condition is satisfied for both $h_{j}^{\upp}$ and $h_{j}^{\low}$.

Suppose that $M$ changes as $M\coloneqq (M\cup \{(r_{i},h_{j})\})\setminus\{(r_{i'},h_{j})\}$ at Line \ref{update4}.
If $i'=i$, $M$ is unchanged, so suppose that $i' \neq i$.
By the condition of this case, $M_{1}(h_{j})=\emptyset$ and $M_{2}(h_{j})=M(h_{j})$ before the application of Line \ref{update4}.
Then, by application of Line \ref{update4}, $M_{1}$ does not change, $M_{2}\coloneqq  (M_{2}\cup \{(r_{i},h_{j})\}) \setminus \{(r_{i'},h_{j})\}$, and either ($r_{i} \succ_{h_{j}} r_{i'}$) or ($r_{i} =_{h_{j}} r_{i'}$ and $i<i'$).
Then, $M^{*}(h_{j}^{\low})$ does not change, so the condition is satisfied for $h_{j}^{\low}$.
Additionally, $M^{*}(h_{j}^{\upp})\coloneqq  (M^{*}(h_{j}^{\upp}) \cup \{ r'_{i} \}) \setminus \{ r'_{i'} \}$ and $r'_{i} \succ_{h_{j}^{\upp}} r'_{i'}$, so the condition is satisfied for $h_{j}^{\upp}$.
\end{proof}

\begin{claim}\label{claim:strategy-2}
If a resident $r$ is matched in $M^{*}$, 
then $M^{*}(r)$ is at the top of $r$'s preference list in the final $I^*$.
If a resident $r$ is unmatched in $M^{*}$, then $r$'s preference list is empty.
\end{claim}

\begin{proof}
First note that, for every $i$, since $r_i$ is matched in $M$, $r'_i$ is matched in $M^*$.
Consider a resident $r'_i$ such that $(r'_i, h^{\low}_j) \in M^*$ for some $j$.
Then, $(r_i, h_j) \in M_{1}$.
Since $r_{i}$ is not rejected by $h_{j}$, the pair $(r'_i, h^{\low}_j)$ is not removed.
Consider a hospital $h$ such that $h \succ_{r'_i} h^{\low}_j$.
If $h$ is $h^{\low}_{j'}$ or $h^{\upp}_{j'}$ for some $j'$ such that $h_{j'} \succ_{r_{i}} h_{j}$ in $I$, $r_{i}$ is rejected by $h_{j'}$ twice, and both $h^{\low}_{j'}$ and $h^{\upp}_{j'}$ are removed from $r'_{i}$'s list.
If $h=h^{\low}_{j'}$ for some $j'$ such that $h_{j'} =_{r_{i}} h_{j}$ in $I$, 
then ($\ell(h_{j'})<\ell(h_{j})$) or ($\ell(h_{j'}) = \ell(h_{j})$ and $j'< j$), so $r_{i}$ must have proposed to and been rejected by $h_{j'}$ before.
Therefore $h^{\low}_{j'}$ is removed from $r'_{i}$'s list. 

Consider a resident $r'_i$ such that $(r'_i, h^{\upp}_j) \in M^*$ for some $j$.
Then, $(r_i, h_j) \in M_{2}$.
Since $r_{i}$ is rejected by $h_{j}$ only once, $(r'_i, h^{\upp}_j)$ is not removed.
Consider a hospital $h$ such that $h \succ_{r'_i} h^{\upp}_j$.
If $h$ is $h^{\low}_{j'}$ or $h^{\upp}_{j'}$ for some $j'$ such that $h_{j'} \succ_{r_{i}} h_{j}$ in $I$, then the same argument as above holds.
If $h=h^{\low}_{j'}$ for some $j'$ such that $h_{j'} =_{r_{i}} h_{j}$ in $I$, $r_{i}$ is rejected by $h_{j'}$ once, and hence $h^{\low}_{j'}$ is removed from $r'_{i}$'s list.
If $h=h^{\upp}_{j'}$ for some $j'$ such that $h_{j'} =_{r_{i}} h_{j}$ in $I$, then ($\ell(h_{j'})<\ell(h_{j})$) or ($\ell(h_{j'}) = \ell(h_{j})$ and $j'< j$), so $r_{i}$ is rejected by $h_{j'}$ twice.
Therefore $h^{\upp}_{j'}$ is removed from $r'_{i}$'s list.

Next we consider dummy residents.
Consider a pair $(d_{j,q}, h^{\upp}_j) \in M^*$.
By the definition of $M^{*}$, we have $1\leq q\leq u(h_j)-|M_2(h_j)|$, and hence $|M_2(h_j)| \leq u(h_j)-q$.
Thus $|M_2(h_j)|$ never reaches $u(h_j)-q+1$ so this $q$ does not satisfy the condition of $p$ in Case (3) of the removal operation.
Therefore Case (3) is not executed for this $q$ so $(d_{j,q}, h^{\upp}_j)$ is not removed.
Since $h^{\upp}_j$ is already at the top of $d_{j,q}$'s list, we are done.

Consider a pair $(d_{j,q}, h^{\low}_j) \in M^*$.
By the definition of $M^{*}$, we have $u(h_j)-|M_2(h_j)|< q \leq \min\{u(h_j)-|M(h_j)|+\ell(h_j),~u(h_j)\}$.
The first inequality implies $|M_2(h_j)| > u(h_j)-q$.
This means that $|M_2(h_j)|$ reaches $u(h_j)-q+1$ at some point, so $q$ satisfies the condition of $p$ in Case (3).
Therefore Case (3) is executed for this $q$ and hence $(d_{j,q}, h^{\upp}_j)$ is removed.
If $(d_{j,q}, h^{\low}_j)$ were removed, by the condition of Case (3), $|M_2(h_j)|$ 
would reach $u(h_j)-(q-\ell(h_j))+1$, so we would have $|M_2(h_j)|>u(h_j)-(q-\ell(h_j))$.
Additionally, as described in the explanation of Case (3), we would have $|M_2(h_j)|=|M(h_j)|$, and then the second inequality implies
$q \leq u(h_j)-|M_2(h_j)|+\ell(h_j)$, i.e.,  $|M_2(h_j)|\leq u(h_j)-(q-\ell(h_j))$, a contradiction.
So $(d_{j,q}, h^{\low}_j)$ is not removed.

Finally, if $d_{j,q}$ is unmatched in $M^*$, then we have $q > \min\{u(h_j)-|M(h_j)|+\ell(h_j),~u(h_j)\}$.
If $u(h_j)-|M(h_j)|+\ell(h_j) \geq u(h_j)$, we have $q>u(h_j)$ but this is a contradiction.
Hence, we have $q > u(h_j)-|M(h_j)|+\ell(h_j)$.
Then, $|M_{2}(h_j)| = |M(h_j)| - |M_{1}(h_j)| > u(h_j)-q+ \ell(h_j) -\ell(h_j)=u(h_j)-q$, as $|M_{1}(h_j)| \leq \ell(h_j)$.
This satisfies the condition of Case (3), so $(d_{j,q}, h^{\upp}_j)$ is removed.
Recall that $q\leq u(h_j)$ holds by definition.
Additionally, since $|M(h_j)| \leq u(h_j)$, the condition $q > u(h_j)-|M(h_j)|+\ell(h_j)$ implies $q > \ell(h_j)$.
Hence, we have $1 \leq q-\ell(h_j) \leq u(h_j)-\ell(h_j)$ and so $(d_{j,q}, h^{\low}_j)$ is removed.
\end{proof}

We now show the nonexistence of a blocking pair in $M^*$ at the end of the algorithm.
Suppose that $h\succ_{r}M^*(r)$ for some $r\in R'\cup D$ and $h\in H^{\upp}\cup H^{\low}$.
By Claim \ref{claim:strategy-2}, $h\succ_{r}M^*(r)$ implies that $(r,h)$ is removed during the course of {\sc Double Proposal}.
Then, by Claim \ref{claim:strategy-1}, $h$ is full in $M^*$ with residents better than $r$, so $(r,h)$ cannot block $M^{*}$.

Finally, we show that $M^*$ is resident-optimal. 
Suppose, to the contrary, that there is a stable matching $N^*$ of $I^*$ such that the set
$R^*\coloneqq \set{r\in R'\cup D| N^*(r)\succ_r M^*(r)}$ is nonempty.
By Claim \ref{claim:strategy-2}, for each $r\in R^*$, the pair $(r, N^*(r))$ is removed at some point of the algorithm.
Let $r^0\in R^*$ be a resident such that $(r^0,h^0)$ (where $h_{0}\coloneqq N^*(r^0)$) is removed first during the algorithm.
Let $M^{*}_{0}$ be the matching just after this removal.
Then, by recalling the argument in the definitions of removal operations (1)--(3), we can see that $h^{0}$ is full in $M^{*}_{0}$ with residents better than $r^0$.
Note that $M^*_0(h^0)\setminus N^*(h^0)\neq \emptyset$ because $|M^*_0(h^0)|\geq |N^*(h^0)|$ and $r^0\in N^*(h^0)\setminus M^*_0(h^0)$.
Take any resident $r^1\in M^*_0(h^0)\setminus N^*(h^0)$ and let $h^1\coloneqq N^*(r^1)$.
Since $h^0$ is at the top of $r^1$'s current list and $(r^1, h^1)$ is not yet removed by the choice of $r^0$, $h^0 \succ_{r^1} h^1$ holds.
Then, $(r^1, h^0)$ blocks $N^*$, which contradicts the stability of $N^*$.


\section{Full Version of Section~\ref{sec:approximation} 
(Maximum Gaps and Approximation Factors of {\sc Double Proposal})}\label{app:approx}
This is a full version of Section~\ref{sec:approximation}.
Here we analyze the approximation factors of our algorithm {\sc Double Proposal}, together with the maximum gaps $\Lambda$ for  several  models mentioned in Section~\ref{sec:definition}. 

For an instance $I$ of HRT-MSLQ, 
let $\opt(I)$ and $\wst(I)$ respectively denote the maximum and minimum scores over all 
stable matchings of $I$, and let $\alg(I)$  be the score of the output of our algorithm. 
For a model $\cal I$ (i.e., subfamily of problem instances of HRT-MSLQ), let 
\[
\Lambda({\cal I})=\max_{I \in {\cal I}}\frac{\opt(I)}{\wst(I)} \ \  \mbox{ and } \ \ \app({\cal I})=\max_{I \in {\cal I}}\frac{\opt(I)}{\alg(I)}. 
\]

The maximum gap $\Lambda({\cal I})$ represents a worst approximation factor of 
a naive algorithm that first breaks ties arbitrarily and then apply the resident-oriented Gale--Shapley algorithm.
Let us first confirm this fact.  
For this purpose, it suffices to show that the worst objective value is
indeed realized by the output of such an algorithm.
\begin{proposition}\label{prop:tie-breaking}
Let $I$ be an instance of HRT-MSLQ.
There exists an instance $I'$ such that (i) $I'$ is obtained by breaking the ties in $I$
and (ii) the residents-oriented Gale--Shapley algorithm applied to $I'$ outputs a matching $M'$
with $s(M')=\wst(I)$.
\end{proposition}
To see this, we remind the following two known results.
They are originally shown for the Hospitals/Residents model, but it is easy to see that they hold for HRT-MSLQ too. 
\begin{theorem}[\cite{DBLP:journals/tcs/ManloveIIMM02}]\label{thm:tie-break}
Let $I$ be an instance of HRT-MSLQ and let $M$ be a matching in $I$.
Then, $M$ is (weakly) stable in $I$ if and only if $M$ is stable in some instance $I'$ of HRT-MSLQ without ties obtained by breaking the ties in $I$.
\end{theorem}
The following claim is a part of the famous rural hospitals theorem.
The original version states stronger conditions for the case with incomplete preference lists.
\begin{theorem}[\cite{DBLP:journals/dam/GaleS85,RePEc:ucp:jpolec:v:92:y:1984:i:6:p:991-1016,RePEc:ecm:emetrp:v:54:y:1986:i:2:p:425-27}]\label{thm:RHT}
For an instance $I'$ of HRT-MSLQ that has no ties, the number of residents assigned to each hospital
does not change over all stable matchings of $I'$.
\end{theorem}
\begin{proof}[Proof of Proposition~\ref{prop:tie-breaking}]
Let $M$ be a stable matching of $I$ that attains $\wst(I)$. 
By Theorem~\ref{thm:tie-break}, there is an instance
$I'$ of HRT-MSLQ without ties such that it is obtained by breaking the ties in $I$
and $M$ is a stable matching of $I'$. 
Let $M'$ be the output of the resident-oriented Gale--Shapley algorithm applied to $I'$.
Since both $M'$ and $M$ are stable matchings of $I'$, which has no ties, 
Theorem~\ref{thm:RHT} implies that any hospital is assigned the same number 
of residents in $M'$ and $M$. Thus, $s(M')=s(M)=\wst(I)$ holds. 
\end{proof}

In the rest, we analyze $\Lambda({\cal I})$ and $\app({\cal I})$
for each model. All results in this section are summarized in Table~\ref{table2}, 
which is a refinement of the first and second rows of Table~\ref{table1}.
Here, we split each model into three sub-models according to on which side ties are allowed to appear.
The ratio for $\Lambda({\cal I})$ when ties appear only in hospitals' side, which is 
the same as $\app({\cal I})$ for all four cases, is derived by observing the proofs of the approximation factor of our algorithm.
In Table~\ref{table2}, $n$ represents the number of residents 
and a function $\phi$ is defined by
$\phi(1)=1$, $\phi(2)=\frac{3}{2}$, and $\phi(n)=
n(1+\lfloor\frac{n}{2}\rfloor)/(n+\lfloor \frac{n}{2}\rfloor)$ for any $n\geq 3$.
In the uniform model, we write $\theta=\frac{u(h)}{\ell(h)}$ for 
the ratio of the upper and lower quotas, which is common to all hospitals. 
Note that $\frac{\theta^2+\theta-1}{2\theta-1}<\theta$ holds whenever $\theta>1$.

\setlength{\tabcolsep}{1.5mm} 
\renewcommand\arraystretch{1.2}
\begin{table}[htbp]
  \centering
  \begin{tabular}{|c| c|c|c| c|c|c| c|c|c| c|c|c|} \hline
   & \multicolumn{3}{c|}{General} & \multicolumn{3}{c|}{Uniform}& \multicolumn{3}{c|}{Marriage}& \multicolumn{3}{c|}{$R$-side ML}\\\cline{2-13}
                    & $H$ & $R$ & Both & $H$ & $R$ & Both & $H$ & $R$ & Both & $H$ & $R$ & Both\\  \hline\hline
&&\multicolumn{2}{c|}{ }&&\multicolumn{2}{c|}{}&&\multicolumn{2}{c|}{}&&\multicolumn{2}{c|}{}\vspace{-3mm}\\ 
Max gap $\Lambda({\cal I})$& 
$\phi(n)$ &\multicolumn{2}{c|}{$n+1$} & 
$\frac{\theta^2+\theta-1}{2\theta-1}$ &\multicolumn{2}{c|}{$\theta$} & 
$1.5$ & \multicolumn{2}{c|}{$2$} & 
$1$ & \multicolumn{2}{c|}{$n+1$}\vspace{-1mm}\\
\tiny{(ATB+GS)}& 
\tiny{(Cor.\ref{cor:general-worst-H})} &\multicolumn{2}{c|}{\tiny{(Prop.\ref{prop:general-worst-R})}} & \tiny{(Cor.\ref{cor:uniform-worst-H})} &\multicolumn{2}{c|}{\tiny{(Prop.\ref{prop:uniform-worst-R})}} & \tiny{(Cor.\ref{cor:marriage-worst-H})} & \multicolumn{2}{c|}{\tiny{(Prop.\ref{prop:marriage-aaa})}} & 
\tiny{(Cor.\ref{cor:ML})} & \multicolumn{2}{c|}{\tiny{(Prop.\ref{prop:general-worst-R})}}\\ \hline
&\multicolumn{3}{c|}{ }&\multicolumn{3}{c|}{}&\multicolumn{3}{c|}{}&\multicolumn{3}{c|}{}\vspace{-3mm}\\ 
$\app({\cal I})$&
\multicolumn{3}{c|}{$\phi(n)$} & 
\multicolumn{3}{c|}{$\frac{\theta^2+\theta-1}{2\theta-1}$} &
\multicolumn{3}{c|}{$1.5$} &
\multicolumn{3}{c|}{$1$}\vspace{-1mm}\\ 
\tiny{({\sc Double Proposal})}& 
\multicolumn{3}{c|}{\tiny{(Thm.\ref{thm:general-approximable})}} &  \multicolumn{3}{c|}{\tiny{(Thm.\ref{thm:uniform-approximable})}} &  \multicolumn{3}{c|}{\tiny{(Thm.\ref{thm:marriage-aaa})}} &  \multicolumn{3}{c|}{\tiny{(Thm.\ref{thm:ML})}} \\\hline
\end{tabular}
\smallskip
  \caption{
  Maximum gap $\Lambda({\cal I})$ (equivalently, approximation factor of 
  the arbitrarily tie-breaking Gale--Shapley algorithm) and approximation factor of {\sc Double Proposal} 
  of HRT-MSLQ for four fundamental models ${\cal I}$.
  Here $H$ and $R$ represent the restrictions in which ties appear only in preference lists of residents and hospitals, respectively.}
  \label{table2}
\end{table}
Recall that the following conditions are commonly assumed in all models:
all agents have complete preference lists,
$\ell(h)\leq u(h)\leq n$ for each hospital $h\in H$, and $|R| < \sum_{h \in H} u(h)$.
From these, it follows that in any stable matching any resident is assigned to some hospital.

\subsection{General Model}
We first analyze our model without any additional assumption.
Before evaluating our algorithm, we provide a worst case analysis of a tie-breaking algorithm.
\let\temp\theproposition
\renewcommand{\theproposition}{\ref{prop:general-worst-R}}
\begin{proposition}
The maximum gap for general model satisfies $\Lambda({\cal I}_{\rm Gen})=n+1$.
Moreover, this equality holds even if residents have a master list, and 
the preference lists of hospitals contain no ties.
\end{proposition}
\let\theproposition\temp
\addtocounter{theorem}{-1}
\begin{proof}
We first show $\frac{\opt(I)}{\wst(I)}\leq n+1$ for any instance $I$ of HRT-MSLQ.
Let $N$ and $M$ be stable matchings with $s(N)=\opt(I)$ and $s(M)=\wst(I)$, respectively.
Recall that $\ell(h)\leq n$ is assumed for any hospital $h$.
Let $H_0\subseteq H$ be the set of hospitals $h$ with $\ell(h)=0$.
Then
\begin{align*}
&\textstyle s(N)=|H_0|+\sum_{h\in H\setminus H_0}\min\{1,\frac{|N(h)|}{\ell(h)}\}\leq
|H_0|+\sum_{h\in H\setminus H_0}\min\{1,\frac{|N(h)|}{1}\}\leq |H_0|+n,\\
&\textstyle s(M)= 
|H_0|+\sum_{h\in H\setminus H_0}\min\{1,\frac{|M(h)|}{\ell(h)}\}\geq |H_0|+\sum_{h\in H\setminus H_0}\min\{1,\frac{|M(h)|}{n}\}.
\end{align*}
In case $|H_0|=0$, we have
$\sum_{h\in H\setminus H_0}\min\{1,\frac{|M(h)|}{n}\}=\sum_{h\in H}\min\{1,\frac{|M(h)|}{n}\}\geq 1$,
and hence $\frac{s(N)}{s(M)}\leq \frac{n}{1}=n$.
In case $|H_0|\geq 1$, we have $s(M)\geq|H_0|$, and $\frac{s(N)}{s(M)}\leq \frac{|H_0|+n}{|H_0|}=1+\frac{n}{|H_0|}\leq 1+n$.
Thus, $\frac{\opt(I)}{\wst(I)}\leq n+1$ for any instance $I$.

We next show that there exists an instance $I$ with $\frac{\opt(I)}{\wst(I)}= n+1$
that satisfies the conditions required in the statement.
Let $I$ be an instance consisting of $n$ residents $\{r_1,r_2,\dots,r_n\}$ and
$n+1$ hospitals $\{h_1,h_2,\dots,h_{n+1}\}$ 
such that 
\begin{itemize}
\item  the preference list of every resident consists of a single tie containing all hospitals,
\item the preference list of every hospital is an arbitrary complete list without ties, and
\item $[\ell(h_i), u(h_i)]=[1,1]$ for $i=1,2,\dots,n$ and $[\ell(h_{n+1}), u(h_{n+1})]=[0, n]$.
\end{itemize}
This instance satisfies the conditions in the statement.
Since any resident is indifferent among all hospitals, a matching
is stable whenever all residents are assigned.
Let $N=\set{(r_i,h_i)|i=1,2,\dots,n}$ and $M=\set{(r_i,h_{n+1})|i=1,2,\dots,n}$.
Then, $s(N)=n+1$ while $s(M)=1$. Thus we obtain $\frac{\opt(I)}{\wst(I)}= n+1$.
\end{proof}

We next show that the approximation factor of our algorithm is $\phi(n)$.
Recall that $\phi$ is a function of $n=|R|$ defined by
\begin{equation*}\label{eq:phi}
\phi(n)=
\begin{cases}
1&n=1,\\
\frac{3}{2}&n=2,\\
\frac{n(1+\lfloor\frac{n}{2}\rfloor)}{n+\lfloor \frac{n}{2}\rfloor}&n\geq 3.
\end{cases}
\end{equation*}

\let\temp\thetheorem
\renewcommand{\thetheorem}{\ref{thm:general-approximable}}
\begin{theorem}
The approximation factor of {\sc Double Proposal} for the general model satisfies $\app({\cal I}_{\rm Gen})=\phi(n)$.
\end{theorem}
\let\thetheorem\temp
\addtocounter{theorem}{-1}
\begin{proof}
Here we only show $\app({\cal I}_{\rm Gen})\leq \phi(n)$,
since this together with Proposition~\ref{prop:general-tight} shown later
implies the required equality.

Let $M$ be the output of the algorithm and let $N$ be an optimal stable matching.
We define vectors $p_M$ and $p_N$ on $R$, which are distributions of scores to residents.
For each hospital $h\in H$, its scores in $M$ and $N$ are 
$s_M(h)=\min\{1,\frac{|M(h)|}{\ell(h)}\}$ and $s_N(h)=\min\{1,\frac{|M(h)|}{\ell(h)}\}$, respectively.
We set $\{p_M(r)\}_{r\in M(h)}$ and $\{p_N(r)\}_{r\in N(h)}$ as follows.
Among $M(h)\cap N(h)$, take $\min\{\ell(h), |M(h)\cap N(h)|\}$ residents arbitrarily and set $p_M(r)=p_N(r)=\frac{1}{\ell(h)}$ for them. 
If $|M(h)\cap N(h)|>\ell(h)$, set $p_M(r)=p_N(r)=0$ for the remaining residents in $M(h)\cap N(h)$.
If $|M(h)\cap N(h)|<\ell(h)$, then among $M(h)\setminus N(h)$, 
take $\min\{\ell(h)-|M(h)\cap N(h)|, |M(h)\setminus N(h)|\}$ residents arbitrarily 
and set $p_M(r)=\frac{1}{\ell(h)}$ for them. 
If there still remains a resident $r$ in $M(h)\setminus N(h)$ with undefined $p_M(r)$, 
set $p_M(r)=0$ for all such residents. 
Similarly, define $p_N(r)$ for residents in $N(h)\setminus M(h)$.

By definition, for each $h\in H$, we have 
$p_M(M(h))=s_M(h)$ and $p_N(N(h))=s_N(h)$,
where the notation $p_M(A)$ is defined as $p_M(A)=\sum_{r\in A}p_M(r)$ for any $A\subseteq R$ and $p_N(A)$ is defined similarly. Since each of $\{M(h)\}_{h\in H}$ and $\{N(h)\}_{h\in H}$ is a partition of $R$, we have
\[s(M)=p_M(R),\quad s(N)=p_N(R).\]
Thus, what we have to prove is $\frac{p_N(R)}{p_M(R)}\leq \phi(n)$,
where $n=|R|$.

Note that, for any resident $r\in R$, the condition $M(r)=N(r)$ 
means that $r\in M(h)\cap N(h)$ for some $h\in H$. 
Then, the above construction of $p_M$ and $p_N$ implies the following condition for any $r\in R$,
which will be used in the last part of the proof (in the proof of Claim~\ref{claim:less-than-2}).
\begin{equation}
M(r)=N(r) \implies p_M(r)=p_N(r).\label{eq:implication}
\end{equation}

For the convenience of the analysis below, 
let $R'=\{r'_1,r'_2,\dots,r'_n\}$ be the copy of $R$ and identify $p_N$ as a vector on $R'$. 
Consider a bipartite graph $G=(R,R';E)$, where the edge set $E$ is defined by 
$E=\set{(r_i,r'_j)\in R\times R'|p_M(r_i)\geq p_N(r'_j)}$.
For a matching $X\subseteq E$ (i.e., a subset of $E$ covering each vertex at most once),
we denote by $\partial(X)\subseteq R\cup R'$ the set of vertices covered by $X$.
Then, we have $|R\cap\partial(X)|=|R'\cap\partial(X)|=|X|$.
\begin{lemma}\label{lem:more-than-half}
$G=(R,R';E)$ admits a matching $X$ such that $|X|\geq \lceil\frac{n}{2}\rceil$.
Furthermore, in case $s(M)<2$, such a matching $X$ can be chosen so that $p_M(R\cap \partial(X))\geq 1$ holds 
and any $r\in R\setminus \partial(X)$ satisfies $p_M(r)\neq 0$.
\end{lemma}
We postpone the proof of this lemma and now complete the proof of Theorem~\ref{thm:general-approximable}.
There are two cases (i) $s(M)\geq 2$ and (ii) $s(M)<2$.

We first consider the case (i). Assume $s(M)\geq 2$. 
By Lemma~\ref{lem:more-than-half}, there is a matching $X\subseteq E$ such that $|X|\geq \lceil\frac{n}{2}\rceil$.
The definition of $E$ implies $p_M(R\cap\partial(X))\geq p_N(R'\cap\partial(X))$.
We then have $p_N(R')=p_N(R'\cap \partial(X))+p_N(R'\setminus \partial(X))\leq p_M(R\cap \partial(X))+p_N(R'\setminus \partial(X))
=\{p_M(R)-p_M(R\setminus\partial(X))\}+p_N(R'\setminus \partial(X))$,
which implies the first inequality of the following consecutive inequalities, where others are explained below. 
\begin{eqnarray*}
\frac{s(N)}{s(M)}
&=&\frac{p_N(R')}{p_M(R)}
\\
&\leq&
\frac{p_M(R)-p_M(R\setminus\partial(X))+p_N(R'\setminus \partial(X))}{p_M(R)}\\
&\leq&\frac{p_M(R)+|R'\setminus \partial(X)|}{p_M(R)}\\
&\leq&\frac{2+|R'\setminus \partial(X)|}{2}\\
&\leq&\frac{2+\lfloor\frac{n}{2}\rfloor}{2}\\
&\leq&\phi(n).
\end{eqnarray*}
The second inequality uses the facts that $p_M(r)\geq 0$ for any $r\in R$ and $p_N(r')\leq 1$ for any $r'\in R'$.
The third follows from $p_M(R)=s(M)\geq 2$. The fourth follows from $|X|\geq \lceil\frac{n}{2}\rceil$ as it implies
$|R'\setminus \partial(X)|=|R'|-|X|\leq n-\lceil\frac{n}{2}\rceil=\lfloor\frac{n}{2}\rfloor$.
The last one $\frac{2+\lfloor\frac{n}{2}\rfloor}{2}\leq \phi(n)$ can be checked for $n=1,2$ easily and
for $n\geq 3$ as follows:
\[\phi(n)-\frac{2+\lfloor\frac{n}{2}\rfloor}{2}=\frac{n(1+\lfloor\frac{n}{2}\rfloor)}{n+\lfloor \frac{n}{2}\rfloor}-\frac{2+\lfloor\frac{n}{2}\rfloor}{2}
=\frac{\lfloor\frac{n}{2}\rfloor(n-2-\lfloor\frac{n}{2}\rfloor)}{2(n+\lfloor \frac{n}{2}\rfloor)}
=\frac{\lfloor\frac{n}{2}\rfloor(\lceil\frac{n}{2}\rceil-2)}{2(n+\lfloor \frac{n}{2}\rfloor)}\geq 0.\]
Thus, we obtain $\frac{s(N)}{s(M)}\leq \phi(n)$ as required.

We next consider the case (ii). Assume $s(M)<2$. By Lemma~\ref{lem:more-than-half},
then there is a matching $X\subseteq E$ such that $|X|\geq \lceil\frac{n}{2}\rceil$, $p_M(R\cap \partial(X))\geq 1$, 
and $p_M(r)\neq 0$ for any $r\in R\setminus \partial(X)$.
Again, by the definition of $E$, we have $p_M(R\cap\partial(X))\geq p_N(R'\cap \partial(X))$,
which implies the first inequality of the following consecutive inequalities, where others are explained below. 
\begin{eqnarray*}
\frac{s(N)}{s(M)}
&=&\frac{p_N(R')}{p_M(R)}\\
&\leq&\frac{p_M(R\cap \partial(X))+p_N(R'\setminus \partial(X))}{p_M(R\cap \partial(X))+p_M(R\setminus \partial(X))}\\
&\leq&\frac{p_M(R\cap \partial(X))+|R'\setminus \partial(X)|}{p_M(R\cap \partial(X))+\frac{1}{n}|R\setminus \partial(X)|}\\
&\leq&\frac{1+|R'\setminus \partial(X)|}{1+\frac{1}{n}|R\setminus \partial(X)|}\\
&\leq&\frac{1+\lfloor\frac{n}{2}\rfloor}{1+\frac{1}{n}\lfloor\frac{n}{2}\rfloor}\ = \ \phi(n).
\end{eqnarray*}
The second inequality follows from the facts that $p_N(r')\leq 1$ for any $r'\in R'$ and $p_M(r)\neq 0$ for any $r\in R\setminus \partial(X)$.
Note that $p_M(r)\neq 0$ implies $p_M(r)=\frac{1}{\ell(h)}\geq \frac{1}{n}$ where $h\coloneqq M(r)$.
The third follows from $p_M(R\cap \partial(X))\geq 1$.
The last one follows from $|R'\setminus \partial(X)|=|R\setminus \partial(X)|=n-|X|\leq \lfloor\frac{n}{2}\rfloor$.
Thus, we obtain $\frac{s(N)}{s(M)}\leq \phi(n)$ also for this case.
\end{proof}
\begin{proof}[Proof of Lemma~\ref{lem:more-than-half}]
To show the first claim of the lemma, 
we intend to construct a matching in $G$ of size at least $\lceil\frac{n}{2}\rceil$.
We need some preparation for this construction.

Divide the set $R$ of residents into three parts:
\begin{align*}
&R_{+}\coloneqq \set{r\in R|M(r)\succ_{r} N(r)},\\
&R_{-}\coloneqq \set{r\in R|N(r)\succ_{r} M(r) \text{ or }  [M(r)=_{r}N(r), ~p_N(r)>p_M(r)]}, \mbox{ and}\\
&R_{0}~\!\coloneqq \set{r\in R|M(r)=_{r}N(r), ~p_M(r)\geq p_N(r)}.
\end{align*}
Let $R'_+, R'_-, R'_0$ be the corresponding subsets of $R'$. 

\begin{claim}\label{claim:injection}
There is an injection $\xi_+\colon R_+\to R'$ such that $p_M(r)=p_N(\xi_+(r))$ for every $r\in R_+$.
There is an injection $\xi_-\colon R'_-\to R$ such that $p_N(r')=p_M(\xi_-(r'))$ for every $r'\in R'_-$.
\end{claim}
\begin{proof}
We first construct $\xi_+\colon R_+\to R'$. Set $M(R_+)\coloneqq \set{M(r)|r\in R_+}$. 
For each hospital $h\in M(R_+)$, any $r\in M(h)\cap R_+$ satisfies $h=M(r)\succ_{r}N(r)$.
By the stability of $N$, then $h$ is full in $N$. 
Therefore, in $N(h)$, there are $\ell(h)$ residents with $p_N$ value $\frac{1}{\ell(h)}$ 
and $u(h)-\ell(h)$ residents with $p_N$ value $0$.
Since $|M(h)|\leq u(h)$ and $p_M$ values are $\frac{1}{\ell(h)}$ for $\min\{|M(h)|, \ell(h)\}$ residents, we can define
an injection $\xi_+^h\colon M(h)\cap R_+\to N(h)$ such that $p_M(r)=p_N(\xi_+^h(r))$
for every $r\in M(h)\cap R_+$.
By regarding $N(h)$ as a subset of $R'$ and taking the direct sum of $\xi_+^h$ for all $h\in M(R_+)$,
we obtain an injection $\xi_+\colon R_+\to R'$ such that $p_M(r)=p_N(\xi_+(r))$ for every $r\in R_+$.

We next construct $\xi_-\colon R'_-\to R$.
Define $N(R'_-)\coloneqq \set{N(r')|r'\in R'_-}$.
For each $h'\in N(R'_-)$, any resident $r\in N(h')\cap R'_-$ satisfies either 
$h'=N(r)\succ_{r} M(r)$ or $[h'=N(r)=_{r}M(r), ~p_N(r)>p_M(r)]$.
In case some resident $r$ satisfies $h'=N(r)\succ_{r} M(r)$,
the stability of $M$ implies that $h'$ is full in $M$. 
Then, we can define an injection $\xi_-^{h'}\colon N(h')\cap R'_-\to M(h')$ in the manner we defined $\xi_+^{h}$ above and
$p_N(r')=p_M(\xi_-^{h'}(r'))$ holds for any $r'\in N(h')\cap R'_-$.
We then assume that all residents $r\in N(h')\cap R'_-$ satisfy $[h'=N(r)=_{r}M(r), ~p_N(r)>p_M(r)]$.
Then, all those residents satisfy $0\neq p_N(r)=\frac{1}{\ell(h')}$,
and hence $|N(h')\cap R'_-|\leq \ell(h')$.
Additionally, $p_N(r)>p_N(r)$ implies either $p_M(r)=0$ or $\ell(h')<\ell(h)$, where $h\coloneqq M(r)$.
Observe that $p_M(r)=0$ implies $|M(h)|>\ell(h)$.
As we have $h=_{r}h'$ and $M(r)=h$, by Lemma~\ref{lem:property}, 
each of $\ell(h')<\ell(h)$ and $|M(h)|>\ell(h)$ implies $|M(h')|\geq \ell(h')$,
and hence there are $\ell(h')$ residents whose $p_M$ values are $\frac{1}{\ell(h')}$.  
Since $p_N(r)=\frac{1}{\ell(h')}$ for all residents $r\in N(h')\cap R'_-$, 
we can define an injection $\xi_-^{h'}\colon N(h')\cap R'_-\to M(h')$ such that $p_N(r')=p_M(\xi_-^{h'}(r'))=\frac{1}{\ell(h')}$ for every $r'\in N(h')\cap R'_-$.
By taking the direct sum of $\xi_-^{h'}$ for all $h'\in M(R'_-)$,
we obtain an injection $\xi_-\colon R'_-\to R$ such that $p_N(r')=p_M(\xi_-(r'))$ for every $r'\in R'_-$.
\end{proof}

We now define a bipartite graph which may have multiple edges.
Let $G^*=(R,R';E^*)$, where $E^*$ is the disjoint union of $E_+$, $E_-$, and $E_0$, where
\begin{align*}
E_+&\coloneqq \set{(r,\xi_+(r))|r\in R_+},\\
E_-&\coloneqq \set{(\xi_-(r'),r')|r\in R'_-},  \mbox{ and}\\
E_0~\!&\coloneqq \set{(r,r')|r\in R_0 \text{ and $r'$ is the copy of $r$}}.
\end{align*}
\begin{figure*}
	\begin{center}
		\includegraphics[width=70mm]{fig1.pdf}
	\end{center}
	\caption{\small A graph $G^*=(R,R';E^*)$. The upper and lower rectangles represent $R$ and $R'$, respectively. 
	The edge sets $E_+$, $E_-$, and $E_0$
	are respectively represented by downward directed edges, upward directed edges, 
	and undirected edges. (The same figure as Fig.~\ref{fig:graph1}.)}
	\label{fig:graph1-second}
\end{figure*}

See Fig.~\ref{fig:graph1-second} for an example.
Note that $E^*$ can have multiple edges between $r$ and $r'$ if $(r,r')=(r,\xi_+(r))=(\xi_-(r'),r')$. 
By the definitions of $\xi_+$, $\xi_-$, and $R_0$, any edge $(r,r')$ in $E^*$ satisfies $p_M(r)\geq p_N(r')$.
Since $E=\set{(r,r')|p_M(r)\geq p_N(r')}$, any matching in $G^*$ is also a matching in $G$.

Then, the following claim completes the first statement of the lemma.
\begin{claim}
$G^*$ admits a matching whose size is at least $\lceil\frac{n}{2}\rceil$, and so does $G$.
\end{claim}
\begin{proof}
Since $\xi_+\colon R_+\to R'$ and $\xi_-\colon R'_-\to R$ are injections, 
every vertex in $G^*$ is incident to at most two edges in $E^*$ as follows:
Each vertex in $R_+$ (resp., $R'_-$) is incident to exactly one edge in $E_+$ (resp., $E_-$) and at most one edge in $E_-$ (resp., $E_+$).
Each vertex in $R_-$ (resp., $R'_+$) is incident to at most one edge in $E_-$ (resp., $E_+$).
Each vertex in $R_0$ (resp., $R'_0$) is incident to exactly one edge in $E_=$ and at most one edge in $E_-$ (resp., $E_+$).

Since $E^*$ is the disjoint union of $E_+$, $E_-$, and $E_0$, we have $E^*=|E_+|+|E_-|+|E_=|=|R_+|+|R_-|+|R_0|=n$.
As every vertex is incident to at most two edges in $E^*$,
each connected component $K$ of $G^*$ forms a path or a cycle.
In $K$, we can take a matching that contains at least a half of the edges in $K$. 
(Take edges alternately along a path or a cycle. 
For a path with odd edges, let the end edges be contained.)
The union of such matchings in all components forms a matching
in $G^*$ whose size is at least $\lceil\frac{n}{2}\rceil$.
\end{proof}
In the rest, we show the second claim of Lemma~\ref{lem:more-than-half}. 
Suppose that there is a matching $Y$ in $G$. 
Then, there is a maximum matching $X$ in $G$ such that $\partial(Y)\subseteq \partial(X)$.
This follows from the behavior of the augmenting path algorithm to compute a maximum matching in a bipartite graph (see e.g., \cite{schrijver2003combinatorial}).
In this algorithm, a matching, say $X$, is repeatedly updated to reach the maximum size. 
Through the algorithm, $\partial(X)$ is monotone increasing. Therefore, if 
we initialize $X$ by $Y$, it finds a maximum matching with $\partial(Y)\subseteq \partial(X)$.
Additionally, note that $\partial(Y)\subseteq \partial(X)$ implies $p_M(R\cap \partial(Y))\leq p_M(R\cap \partial(X))$
as $p_M$ is an nonnegative vector.
Therefore, the following claim completes the proof of the second claim of the lemma.

\begin{claim}\label{claim:less-than-2}
If $s(M)<2$, then there is a matching $Y$ in $G$ such that $p_M(R\cap \partial(Y))\geq 1$ holds 
and any $r\in R\setminus \partial(Y)$ satisfies $p_M(r)\neq 0$. 
\end{claim}
\begin{proof}
We first consider the case where $p_M(r^*)=0$ for some $r^*\in R$.
For $h\coloneqq M(r^*)$ we have $|M(h)|>\ell(h)$, and hence $p_M(M(h))=1$.
Since $s(M)<2$, any hospital other than $h$ should be deficient.
Therefore, the value of $p_M$ can be $0$ only for residents in $M(h)$.
\begin{itemize}
\setlength{\itemsep}{2mm}
\medskip
\item If $M(h)\cap R_+\neq \emptyset$, then as shown in the proof of Claim~\ref{claim:injection}, 
$h$ is full in $N$. Then, there is an injection $\xi:M(h)\to N(h)$ such that $p_M(r)=p_N(r)$ for any $r\in M(h)$.
Hence, $Y\coloneqq \set{(r,\xi(r))|r\in M(h)}\subseteq E$ is a matching in $G$ satisfying 
the required conditions.
\item If $M(h)\cap R_-\neq\emptyset$, take any $r\in M(h)\cap R_-$ and set $h'\coloneqq N(r)$.
As shown in the proof of Claim~\ref{claim:injection}, then $|M(h')|\geq \ell(h')$,
i.e., $h'$ is sufficient. 
Note that we have $p_M(r)\neq p_N(r)$ only if $M(r)\neq N(r)$ by the condition \eqref{eq:implication}.
Since $r\in R_-$ implies either $h'\succ_r h$ or $p_N(r)>p_M(r)$, we have $h'\neq h$,
which contradicts the fact that $h$ is the unique sufficient hospital.
Thus, $M(h)\cap R_-\neq\emptyset$ cannot happen.
\item If $M(h)\cap R_+=\emptyset$ and $M(h)\cap R_-=\emptyset$, we have $M(h)\subseteq R_0$.
Then, by connecting each resident in $M(h)$ to its copy in $R'$, we can obtain a matching satisfying the required conditions.
\end{itemize}

We next consider the case where $p_M(r)\neq 0$ for any $r\in R$.
Then, our task is to find a matching $Y\subseteq E$ with $p_M(R\cap \partial(Y))\geq 1$.
With this assumption, for any resident $r\in R$ with $h=M(r)$, we always have $p_M(r)=\frac{1}{\ell(h)}$. 
\begin{itemize}
\setlength{\itemsep}{2mm}
\medskip
\item 
If $R_+\neq \emptyset$, then $M(R_+)\coloneqq \set{M(r)|r\in R_+}\neq \emptyset$.
Since $\sum_{h\in M(R_+)}|M(h)\cap R_+|=|R_+|$ and $\sum_{h\in M(R_+)}|N(h)\cap R_+|\leq |R_+|$, 
there is at least one hospital $h\in M(R_+)$ such that $|M(h)\cap R_+|\geq |N(h)\cap R_+|$.
Let $h$ be such a hospital. Since $h\in M(R_+)$, as shown in the proof of Claim~\ref{claim:injection},
$h$ is full in $N$ and there are $\ell(h)$ residents $r$ with $p_N(r)=\frac{1}{\ell(h)}$.
We intend to show that there are at least $\ell(h)$ residents $r$ with $p_M(r)\geq \frac{1}{\ell(h)}$,
which implies the existence of a required $Y$.
Regard $N(h)$ as a subset of $R'$. 
If there is some $r'\in N(h)\cap R'_-$, as seen in the proof of Claim~\ref{claim:injection},
there are $\ell(h)$ residents $r$ with $p_M(r)=\frac{1}{\ell(h)}$, and we are done.
So, assume $N(h)\cap R'_-=\emptyset$, which implies $N(h)\subseteq R'_+\cup R'_0$.
Since $|M(h)\cap R_+|\geq |N(h)\cap R_+|$, at least $|N(h)\cap R_+|$ residents in $R_+$ belongs to $M(h)$.
As $p_M$ is positive, then at least $|N(h)\cap R_+|$ residents $r\in M(h)$ satisfy $p_M(r)=\frac{1}{\ell(h)}$.
Additionally, by the definition of $E_0$, each $r\in N(h)\cap R_0$ satisfies $p_M(r)\geq p_N(r)$,
where $p_N(r)=\frac{1}{\ell(h)}$ for at least $\ell(h)-|N(h)\cap R_+|$ residents in $N(h)\cap R_0$.
Thus, at least $\ell(h)$ residents $r\in R$ satisfy $p_M(r)\geq \frac{1}{\ell(h)}$.

\item If $R_-\neq \emptyset$, then $N(R_-)\coloneqq \set{N(r)|r\in R_-}\neq \emptyset$.
Similarly to the argument above, 
there is at least one hospital $h\in N(R_-)$ such that $|N(h)\cap R_-|\geq |M(h)\cap R_-|$.
Let $h$ be such a hospital. Since $h\in N(R_-)$, as shown in the proof of Claim~\ref{claim:injection},
$h$ is sufficient in $M$ and there are $\ell(h)$ residents $r$ with $p_M(r)=\frac{1}{\ell(h)}$.
We intend to show that there are at least $\ell(h)$ residents $r$ with $p_N(r)\leq \frac{1}{\ell(h)}$.
If there is some $r\in M(h)\cap R_+$, we are done as in the previous case.
So, assume $M(h)\cap R_+=\emptyset$, which implies $M(h)\subseteq R_-\cup R_0$.
Since $|N(h)\cap R_-|\geq |M(h)\cap R_-|$, at least $|M(h)\cap R_-|$ residents $r$ in $R_-$ belongs to $N(h)$,
and satisfies $p_N(r)\in \{\frac{1}{\ell(h)},0\}$.
Additionally, by the definition of $E_0$, each $r\in M(h)\cap R_0$ satisfies $p_N(r)\leq p_M(r)=\frac{1}{\ell(h)}$.
Thus, at least $\ell(h)$ residents $r\in R$ satisfy $p_N(r)\leq \frac{1}{\ell(h)}$.

\item If $R_+=\emptyset$ and $R_-=\emptyset$, then $R_0=R$ and $E_0$ forms a matching and we have $\partial(E)\cap R=R$.
Since $p_M(r)=\frac{1}{\ell(h)}\geq \frac{1}{n}$ for any $h\in H$ and $r\in M(h)$, we have $p_M(\partial(E)\cap R)\geq 1$.
\end{itemize}
Thus, in any case, we can find a matching with required conditions.
\end{proof}
Thus we completed the proof of the second claim of the lemma.
\end{proof}

The above analysis for Theorem~\ref{thm:general-approximable} is tight
as shown by the following proposition.
\begin{proposition}\label{prop:general-tight}
For any natural number $n$, 
there is an instance $I$ with $n$ residents such that $\frac{\opt(I)}{\alg(I)}=\phi(n)$.
This holds even if ties appear only in preference lists of hospitals
or only in preference lists of residents.
\end{proposition}
\begin{proof}
As the upper bound is shown in Theorem~\ref{thm:general-approximable},
it suffices to give an instance $I$ with $\frac{\opt(I)}{\alg(I)}\geq \phi(n)$.
Recall that $\phi(1)=1$, $\phi(2)=\frac{3}{2}$, and $\phi(n)=
n(1+\lfloor\frac{n}{2}\rfloor)/(n+\lfloor \frac{n}{2}\rfloor)$ for $n\geq 3$.

Case $n=1$ is trivial because $\frac{\opt(I)}{\alg(I)}$ is always at least 1.
For $n\geq 2$, we construct instances $I_1$ and $I_2$ such that
$\frac{\opt(I_1)}{\alg(I_1)} \geq \phi(n)$,
$\frac{\opt(I_2)}{\alg(I_2)} \geq \phi(n)$, 
and in $I_1$ (resp., in $I_2$) ties appear only in preference lists of hospitals (resp., residents).

In case $n=2$, consider the following instance $I_1$ with two residents and three hospitals.
\begin{center}
\renewcommand\arraystretch{1.2}
\begin{tabular}{llllllllllllllllllllllllll}
$r_{1}$: & $h_{1}$ & $h_{2}$ & $h_{3}$ & \hspace{15mm} & $h_{1}$ $[1, 1]$: &  (~$r_{1}$ &  $r_{2}$~) \\
$r_{2}$: & $h_{1}$ & $h_{3}$ & $h_{2}$ & \hspace{15mm} & $h_{2}$ $[1, 1]$: &   ~~$r_{1}$ &  $r_{2}$ &\\
 & & & &  & $h_{3}$ $[0, 1]$: &  ~~$r_{1}$ &  $r_{2}$  \\
\end{tabular}
\end{center}
Recall that we delete arbitrariness in {\sc Double Proposal} using the priority rules defined by indices.
Then, $h_1$ prefers the second proposal of $r_1$ to that of $r_2$.
Therefore, {\sc Double Proposal} returns $\{(r_1,h_1), (r_2,h_3)\}$, whose score is $2$.
Since $\{(r_1,h_2), (r_2,h_1)\}$ is also a stable matching and has the score $3$, 
we obtain $\frac{\opt(I_1)}{\alg(I_1)} = \frac{3}{2}=\phi(2)$.

The instance $I_2$ for $n=2$ is given as follows.
\begin{center}
\renewcommand\arraystretch{1.2}
\begin{tabular}{llllllllllllllllllllllllll}
$r_{1}$: & (~$h_{1}$ & $h_{2}$~) & $h_{3}$ & \hspace{15mm} & $h_{1}$ $[0, 1]$: &  $r_{1}$ &  $r_{2}$ \\
$r_{2}$: & ~~$h_{2}$ & $h_{3}$ & $h_{1}$ & \hspace{15mm} & $h_{2}$ $[1, 1]$: &  $r_{1}$ &  $r_{2}$ &\\
 & & & &  & $h_{3}$ $[1, 1]$: &  $r_{1}$ &  $r_{2}$  \\
\end{tabular}
\end{center}
The algorithm proceeds as follows. 
First, $r_1$ makes the first proposal to $h_1$ as its lower quota is smaller than that of $h_2$. Since $\ell(h_1)=0$, this proposal is immediately rejected.
Then, $r_1$ makes the first proposal to $h_2$, and it is accepted.
Next, $r_2$ makes the fist proposal to $h_2$. 
Since neither of $r_1$ and $r_2$ have been rejected by $h_2$, the one with larger index, i.e., $r_2$ is rejected. Then, $r_2$ makes the second proposal
to $h_2$, and then $r_1$ is rejected by $h_2$ because $r_1$ has not been rejected by $h_2$. 
Then, $r_1$ goes into the second round of the top tie and makes the second proposal to $h_1$.
As $h_1$ has upper quota $1$ and is currently assigned no resident, this proposal is accepted, and the algorithm terminates with the output $\{(r_1,h_1), (r_2,h_2)\}$, whose score is $2$.
On the other hand, a matching $\{(r_1, h_2), (r_2, h_3)\}$ is stable and has a score $3$.
Thus $\frac{\opt(I_2)}{\alg(I_2)} = \frac{3}{2}=\phi(2)$.

In the rest, we show the claim for $n\geq 3$.
In both $I_1$ and $I_2$, the set of residents is given as $R=R'\cup R''$ where $R'=\{r'_1,r'_2,\dots,r'_{\lceil\frac{n}{2}\rceil}\}$
and $R''=\{r''_1,r''_2,\dots,r''_{\lfloor\frac{n}{2}\rfloor}\}$ and 
the set of hospital is given as $H=\{h_1,h_2\dots, h_n\}\cup\{x,y\}$.
Then, $|R|=n$ and $|H|=n+2$.

The preference lists in $I_1$ are given as follows.
Here ``(~~$R$~~)'' represents the tie consisting of all residents
and  ``[~~$R$~~]'' denotes an arbitrary strict order of all residents.
The notation ``$\cdots$'' at the tail of lists means an arbitrary strict order of all agents missing in the list.
\begin{center}
\renewcommand\arraystretch{1.2}
\begin{tabular}{llllllllllllllllll}
$r'_{i}$: & $x$ & $h_i$ & $\cdots$ &\hspace{15mm} & $x$ $[\lceil\frac{n}{2}\rceil, \lceil\frac{n}{2}\rceil]$: &  (~~~$R$~~~) \hspace{15mm}\\
$r''_{i}$: & $x$ & $y$ & $\cdots$ &\hspace{15mm} & $y$ $[n, n]$: &  [~~~$R$~~~]\\
 & & &  & & $h_i$ $[1, 1]$: &  [~~~$R$~~~] \\
\end{tabular}
\end{center}
As each resident has a strict preference order, she makes two proposals to the same hospital sequentially.
If indices are defined so that residents in $R'$ have smaller indices than those in $R''$,
then we can observe that our algorithm {\sc Double Proposal} returns the matching 
\[\textstyle M_1=\set{(r'_i, x)|i=1,2,\dots, \lceil\frac{n}{2}\rceil}\cup \set{(r''_i,y)|i=1,2,\dots, \lfloor\frac{n}{2}\rfloor}.\]
Its score is $s(M_1)=s_{M_1}(x)+s_{M_1}(y)=1+\frac{\lfloor\frac{n}{2}\rfloor}{n}=\frac{1}{n}(n+\lfloor\frac{n}{2}\rfloor)$.
Next, define $N_1$ by
\[\textstyle N_1=\set{(r'_i, h_i)|i=1,2,\dots, \lceil\frac{n}{2}\rceil}\cup \set{(r''_i,x)|i=1,2,\dots, \lfloor\frac{n}{2}\rfloor}\]
and let $\tilde{N}_1\coloneqq N_1$ if $n$ is even and $\tilde{N}_1\coloneqq (N_1\setminus\{(r'_1, h_1)\})\cup\{(r'_1, x)\}$ if $n$ is odd.
We can check that $\tilde{N}_1$ is a stable matching and its score is $s(\tilde{N}_1)=1+\lfloor\frac{n}{2}\rfloor$.
Therefore, $\frac{\opt(I_1)}{\alg(I_1)} \geq \frac{s(\tilde{N}_1)}{s(M_1)}=\phi(n)$.

The preference lists in $I_2$ are given as follows.
Similarly to the notation ``[~~$R$~~],'' 
we denote by ``[~$R'$~]'' and ``[~$R''$~]'' arbitrary strict orders of all residents in $R'$ and $R''$, respectively.

\begin{center}
\renewcommand\arraystretch{1.2}
\begin{tabular}{llllllllllllllllll}
$r'_{i}$: & (~$x$ & $y$~)  & $\cdots$ &\hspace{15mm} & $x$ $[\lceil\frac{n}{2}\rceil, \lceil\frac{n}{2}\rceil]$: &  [~$R'$~]~[~$R''$~] \hspace{15mm}\\
$r''_{i}$: & ~~$x$ & $h_i$~~ & $\cdots$ &\hspace{15mm} & $y$ $[n, n]$: &  [~~~$R$~~~] \\
 & & & & & $h_i$ $[1, 1]$: &  [~~~$R$~~~]  \\
\end{tabular}
\end{center}
Then, we can observe that {\sc Double Proposal} returns a matching $\tilde{M}_2$ which is defined as follows.
First, define $M_2$ by
\[\textstyle M_2=\set{(r'_i, y)|i=1,2,\dots, \lceil\frac{n}{2}\rceil}\cup \set{(r''_i,x)|i=1,2,\dots, \lfloor\frac{n}{2}\lfloor}\]
and let $\tilde{M}_2\coloneqq M_2$ if $n$ is even and $\tilde{M_2}\coloneqq (M_{2}\setminus\{(r'_1, y)\})\cup\{(r'_1, x)\}$ if $n$ is odd.
Its score is $s(\tilde{M}_2)=s_{\tilde{M}_2}(x)+s_{\tilde{M}_2}(y)=1+\frac{\lfloor\frac{n}{2}\rfloor}{n}=\frac{1}{n}(n+\lfloor\frac{n}{2}\rfloor)$.
Next, define $N_2$ by
\[\textstyle N_2=\set{(r'_i, x)|i=1,2,\dots, \lceil\frac{n}{2}\rceil}\cup \set{(r''_i,h_i)|i=1,2,\dots, \lfloor\frac{n}{2}\rfloor}.\]
We can observe that $N_2$ is a stable matching and its score is $s(N_2)=1+\lfloor\frac{n}{2}\rfloor$.
Thus, $\frac{\opt(I_2)}{\alg(I_2)} \geq \frac{s(N_2)}{s(\tilde{M}_2)}=\phi(n)$.
\end{proof}

\begin{corollary}\label{cor:general-worst-H}
Among instances in which ties appear only in preference lists of hospitals,
$\max_I \frac{\opt(I)}{\wst(I)}=\phi(n)$.
\end{corollary}
\begin{proof}
From the proof of Theorem~\ref{thm:general-approximable}, 
we can observe that the inequality $\frac{s(N)}{s(M)}\leq \phi(n)$ 
is obtained if both $M$ and $N$ are stable and $M$ satisfies the properties in Lemma~\ref{lem:property}.
Note that the properties in Lemma~\ref{lem:property} are satisfied by any stable matching if there is no ties in the preference lists of residents, because $h=_r h'$ cannot happen for any $r\in R$ and $h,h'\in H$. Therefore, the maximum value of $\frac{\opt(I)}{\wst(I)}$ is at most $\phi(n)$.
Further the instance $I_1$ in the proof of Proposition~\ref{prop:general-tight} shows that the value is at least $\phi(n)$.
\end{proof}

\subsection{Uniform Model}
In the uniform model, upper quotas and lower quotas are same for all hospitals.
Let $\ell$ and $u$ be the common lower and upper quotas, respectively, and let $\theta\coloneqq \frac{u}{\ell}~(\geq 1)$.
We first provide a worst case analysis of a tie-breaking algorithm.

\let\temp\theproposition
\renewcommand{\theproposition}{\ref{prop:uniform-worst-R}}
\begin{proposition}
The maximum gap for the uniform model satisfies $\Lambda({\cal I}_{\rm Uniform})=\theta$.
Moreover, this equality holds
even if preference lists of hospitals contain no ties.
\end{proposition}
\let\theproposition\temp
\addtocounter{theorem}{-1}
\begin{proof}
We first show $\frac{\opt(I)}{\wst(I)}\leq \theta$ for any instance $I$ of the uniform model with $\theta=\frac{u}{\ell}$.
Let $N$ and $M$ be stable matchings with $s(N)=\opt(I)$ and $s(M)=\wst(I)$.
Clearly, 
\[\textstyle s(N)=\sum_{h\in H}\min\{1,\frac{|N(h)|}{\ell}\}\leq \sum_{h\in H}\frac{|M(h)|}{\ell}=\frac{|R|}{\ell}.\]
Note that $|M(h)|\leq u$ implies
$\min\{1,\frac{|M(h)|}{\ell}\}=|M(h)|\cdot \min\{\frac{1}{|M(h)|},\frac{1}{\ell}\}\geq \frac{|M(h)|}{u}$.
Then
\[\textstyle s(M)=\sum_{h\in H}\min\{1,\frac{|M(h)|}{\ell}\}\geq \sum_{h\in H}\frac{|M(h)|}{u}=\frac{|R|}{u}.\]
Therefore, we have $\frac{s(N)}{s(M)}\leq\frac{u}{\ell}=\theta$.

Next, we provide an instance $I$ with $\frac{\opt(I)}{\wst(I)}= \theta$
in which ties appear only in preference lists of residents.
Let $I$ be an instance of the uniform model with quotas $[\ell,u]$
consisting of $\ell\cdot u$ residents and $u$ hospitals such that
\begin{itemize}
\item the preference list of every resident consists of a single tie containing all hospitals, and
\item the preference list of every hospital is an arbitrary complete list without ties. 
\end{itemize}
Since any resident is indifferent among all hospitals, a matching
is stable whenever all residents are assigned.
Let $M$ be a matching that assigns $u$ residents to $\ell$ hospitals and no resident to $u-\ell$ hospitals.
Additionally, let $N$ be a matching that assigns $\ell$ residents to all $u$ hospitals.
Then, $s(M)=\ell$ while $s(N)=u$. Thus we obtain $\frac{\opt(I)}{\wst(I)}=\frac{u}{\ell}=\theta$.
\end{proof}

We show that the approximation factor of our algorithm is
$\frac{\theta^{2}+\theta -1}{2\theta -1}$ for this model. Since
$\theta -\frac{\theta^{2}+\theta -1}{2\theta -1}=\frac{\theta^{2}-2\theta+1}{2\theta -1}=
\frac{(\theta-1)^2}{2\theta -1}$,
we see $\frac{\theta^{2}+\theta -1}{2\theta -1}$ is strictly smaller than $\theta$ whenever $\ell<u$.
\let\temp\thetheorem
\renewcommand{\thetheorem}{\ref{thm:uniform-approximable}}
\begin{theorem}
The approximation factor of {\sc Double Proposal} for the uniform  model satisfies $\app({\cal I}_{\rm uniform})=\frac{\theta^{2}+\theta -1}{2\theta -1}$.
\end{theorem}
\let\thetheorem\temp
\addtocounter{theorem}{-1}

\begin{proof}
Here we only show $\app({\cal I}_{\rm Gen})\leq \frac{\theta^{2}+\theta -1}{2\theta -1}$,
since this together with Proposition~\ref{prop:uniform-tight} shown later
implies the required equality.

Since any stable matching is optimal when $\ell=u$, 
we assume  in the following  that $\ell<u$, which implies that $\theta >1$.

Let $M$ be the output of the algorithm and let $N$ be an optimal stable matching.
Suppose $s(N)>s(M)$ since otherwise the claim is trivial.
Consider a bipartite graph $(R, H; M\cup N)$, which may have multiple edges.
To complete the proof, it is sufficient to show that 
the approximation factor is attained in each component of the graph.
Take any connected component and let $R^*$ and $H^*$ respectively denote the set of residents and hospitals in the component.
We define a partition $\{H_0, H_1,H_2\}$ of $H^*$ and
a partition $\{R_0, R_1,R_2\}$ of $R^*$ as follows (See Fig.~\ref{fig:graph2-second}). 
First, we set
\begin{align*}
&H_0\coloneqq \set{h\in H^*|s_N(h)>s_M(h)} \mbox{ and}\\
&R_{0}\coloneqq \set{r\in R^*|N(r)\in H_0}.
\end{align*}
That is, $H_0$ is the set of all hospitals in the component for which the optimal stable matching
$N$ gets scores larger than $M$. The set $R_0$ consists of residents assigned to $H_0$ in the optimal matching $N$.
We then define
\begin{align*}
&H_{1}\coloneqq \set{h\in H^*\setminus H_0|\exists r\in R_{0}:M(r)=h},\\
&R_{1}\coloneqq \set{r\in R^*|N(r)\in H_1},\\
&H_2\coloneqq H^*\setminus(H_0\cup H_1), \text{~~and~~}\\  
&R_2\coloneqq R^*\setminus(R_0\cup R_1).
\end{align*}
\begin{figure}
	\begin{center}
		\includegraphics[width=55mm]{fig2-dot.pdf}
	\end{center}
	\caption{\small An example of a connected component in $N\cup M$ for the case $[\ell, u]=[2,3]$.
	Hospitals and residents are represented by squares and circles, respectively.
	The matchings $N$ and $M$ are represented by solid (black) lines and dashed (red) lines, 
	respectively. (The same figure as Fig.~\ref{fig:graph2}.)}
	\label{fig:graph2-second}
\end{figure}

For convenience, we use a scaled score function $v_M(h)\coloneqq \ell\cdot s_M(h)=\min\{ \ell, |M(h)|\}$ for each $h\in H$ and write $v_M(H')\coloneqq \sum_{h \in H'} v_M(h)$ for any $H'\subseteq H$. 
We define $v_N\coloneqq \ell\cdot s_N(h)=\min\{ \ell, |N(h)|\}$ similarly.
We now show the following inequality, which completes the proof:
\begin{equation}
\frac{v_N(H_0\cup H_1\cup H_2)}{v_M(H_0\cup H_1\cup H_2)}\leq \frac{\theta^{2}+\theta -1}{2\theta -1}.
\label{eq:goal}
\end{equation}
Let $\alpha\coloneqq v_N(H_0)-v_M(H_0)>0$. 
Then, $\alpha\leq v_N(H_0)=\sum_{h \in H_0} \min\{\ell, |N(h)|\}\leq \sum_{h \in H_0} |N(h)|=|R_0|$.
Note that $M$ assigns each resident in $R_0$ to a hospital in $H_0$ or $H_1$ by the definition of $H_1$.
Then, $\sum_{h \in H_0\cup H_1} |M(h)|\geq |R_{0}| \geq \alpha$. Since
$\ell \geq \frac{1}{\theta}|M(h)|$ and $|M(h)| \geq \frac{1}{\theta}|M(h)|$,
we have that $v_M(H_0\cup H_1)=\sum_{h \in H_0\cup H_1} \min\{\ell, |M(h)|\}\geq 
\sum_{h \in H_0\cup H_1} \frac{1}{\theta}|M(h)| \geq \frac{\alpha}{\theta}$, i.e.,
\begin{equation}
v_M(H_0\cup H_1)\geq \frac{\alpha}{\theta}.
\label{eq:alpha}
\end{equation}
Let $\beta\coloneqq v_N(H_1\cup H_2)-v_M(H_1)$. Then, we have
\begin{equation}
v_N(H_0\cup H_1\cup H_2)=\alpha+\beta+v_M(H_0\cup H_1).
\label{eq:beta}
\end{equation}
We separately consider two cases: (i) $\beta\geq \frac{\alpha}{\theta-1}$ and (ii) $\beta\leq \frac{\alpha}{\theta-1}$.

First, consider the case (i).
Since $v_{N}(h)\leq v_{M}(h)$ for any $h\in H_1\cup H_2$, we have 
$v_M(H_0\cup H_1\cup H_2)\geq v_M(H_0)+v_N(H_1\cup H_2)=\beta+v_M(H_0\cup H_1)$.
Combining this with the equation \eqref{eq:beta}, we obtain \eqref{eq:goal} in this case. 
\begin{eqnarray*}
\frac{v_N(H_0\cup H_1\cup H_2)}{v_M(H_0\cup H_1\cup H_2)} &\leq & 
\frac{\alpha+\beta+v_M(H_0 \cup H_1)}{\beta+v_M(H_0 \cup H_1)} \\
& = & 1+\frac{\alpha}{\beta+v_M(H_0 \cup H_1)} \\
& \leq & 1 + \frac{\alpha}{\frac{\alpha}{\theta-1}+\frac{\alpha}{\theta}} \\
& = & \frac{\theta^{2}+\theta -1}{2\theta -1}.
\end{eqnarray*}
Here the second inequality follows from the inequality \eqref{eq:alpha} and the condition (i). 

\medskip
We next consider the case (ii) $\beta\leq \frac{\alpha}{\theta-1}$, which is the main part of the proof.
Since any $h\in H_0$ satisfies $v_N(h)>v_M(h)$, we have $v_M(h)<\ell\leq u$ for any $h\in H_0$, i.e., any $h\in H_0$ is undersubscribed in $M$.
Then, any $r\in R_0=\set{r\in R^*|N(r)\in H_0}$ satisfies $M(r)\succeq_r N(r)$ since otherwise $(r, N(r))$ blocks $M$, 
which contradicts the stability of $M$.
Partition $H_1$ into two sets:
\begin{align*}
& H_1^{\succ}\coloneqq \set{h\in H_1|\exists r\in R_0:M(r)=h\succ_r N(r)} \mbox{ and}\\
& H_1^{=}\coloneqq H_1\setminus H_1^{\succ}.
\end{align*}
Then, for any $h\in H_1^{=}$, all residents $r\in R_0$ with $M(r)=h$ satisfy $M(r)=_r N(r)$.
We claim that 
\begin{equation}
v_M(H_2) \geq \frac{1}{\theta}(\alpha+\beta),  \label{eq:R-size3}
\end{equation}
which can be proven by estimating $|R^{*}|$ in two ways.

For the first estimation, we further partition $R_1$ into 
$R_1^{\succ}\coloneqq \set{r\in R_1|N(r)\in H_1^{\succ}}$ and  $R_1^{=}\coloneqq \set{r\in R_1| N(r)\in H_1^{=}}$.
By the stability of $N$, 
 each $h\in H_1^{\succ}$ is full in $N$, 
since there exists a resident $r\in R_0$ with $h\succ_r N(r)$, implying that  $|N(h)|=u$ and  $v_N(h)=\ell$. 
Thus we have 
$|R_1^{\succ}|=u\cdot |H_1^{\succ}| = \frac{u}{\ell}\cdot v_N(H_1^{\succ})=\theta \cdot v_N(H_1^{\succ})$.
Additionally, since each $h\in H_1^{\succ}$ satisfies $v_M(h)\geq v_N(h)$ by $h\not\in H_0$,  we have $v_M(h)=v_N(h)=\ell$, which implies  $v_M(H_1^{\succ})=v_N(H_1^{\succ})$. 
We therefore represent $|R_1^{\succ}|$ as  
 $|R_1^{\succ}|=(\theta-1)\cdot v_M(H_1^{\succ})+v_N(H_1^{\succ})$. 
Further, by definition, we have  $|R_0|\geq v_N(H_0)$, $|R_1^{=}|\geq v_N(H_1^{=})$, and 
$|R_2|\geq v_N(H_2)$.    
Combining them together, we obtain 
\begin{align}
|R^*|
=|R_0|+|R_1|+|R_2|&\geq v_N(H_0)+(\theta-1)\cdot v_M(H_1^{\succ})+v_N(H_1^{\succ})+v_N(H_1^{=})+v_N(H_2)\nonumber\\
&=\alpha+\beta+v_M(H_0\cup H_1)+(\theta-1)\cdot v_M(H_1^{\succ}).
\label{eq:R-size}
\end{align}

For the second estimation of $|R^{*}|$, we define another partition $\{S_{0}, S_1^{=}, S_{\rm rest}\}$ of $R^*$ depending on the matching $M$: 
\begin{align*}
&S_{0}\coloneqq \set{r\in R^*|M(r)\in H_0},\\
&S_1^{=}\coloneqq \set{r\in R^*|M(r)\in H_1^{=}}, \mbox{ and}\\
&S_{\rm rest}\coloneqq R^*\setminus(S_0\cup S_1^{=}).
\end{align*}
We show that $|S_0|=v_M(H_0)$,  $|S_1^{=}|=v_M(H_1^{=})$,  and $|S_{\rm rest}|\leq \theta \cdot v_M(H_1^{\succ}\cup H_2)$. 
Since 
any $h\in H_0$ satisfies $\ell\geq v_N(h)>v_M(h)$, we have  $v_M(h)=|M(h)|$, which proves the first equality $|S_0|=v_M(H_0)$. 
For the second equality, 
recall that, for each $h\in H_1^{=}$, there exists a resident $r\in R_0$ with $M(r)=h$ and $M(r)=_r N(r)$.
Since for any $r\in R_0$, the hospital $h'\coloneqq N(r)$ belongs to $H_0$,  we have $v_N(h')>v_M(h')=\min\{\ell,|M(h')|\}$, 
which implies $|M(h')|<\ell$. 
From this together with  Lemma~\ref{lem:property}, we have $|M(h)|\leq \ell$, which shows that  $v_M(h)=|M(h)|$ for each $h\in H_1^{=}$, i.e., the second equality. 
The third equality follows from the fact that  all residents in $S_{\rm rest}$ are assigned to
$H_1^{\succ}\cup H_2$. 

By the three equalities above,  we have 
\begin{align}
|R^*| = |S_0| + |S_1^{=}| + |S_{\rm rest}| &\leq  v_M(H_0) + v_M(H_1^{=}) + \theta \cdot v_M(H_1^{\succ}\cup H_2)\nonumber\\
&= v_M(H_0\cup H_1)+(\theta-1)\cdot v_M(H_1^{\succ})+\theta\cdot v_M(H_2), 
\label{eq:R-size2}
\end{align}
which together with \eqref{eq:R-size} proves our claim (\ref{eq:R-size3}).

By using \eqref{eq:beta} and (\ref{eq:R-size3}), we obtain the required inequality \eqref{eq:goal} also for the case (ii):

\begin{eqnarray*}
\frac{v_N(H_0\cup H_1\cup H_2)}{v_M(H_0\cup H_1\cup H_2)} &\leq & 
\frac{\alpha+\beta+v_M(H_0 \cup H_1)}{\frac{1}{\theta}(\alpha+\beta)+v_M(H_0 \cup H_1)} \\
& = & 1+\frac{(\theta-1)\alpha + (\theta-1) \beta}{\alpha+\beta+\theta\cdot v_M(H_0 \cup H_1)} \\
& \leq & 1+\frac{(\theta-1)\alpha + (\theta-1) \beta}{2\alpha+\beta} \\
& = & 1 + \frac{\theta-1}{2} + \frac{\theta-1}{2}\cdot \frac{\beta}{2\alpha+\beta} \\ 
& \leq & \frac{\theta+1}{2} + \frac{\theta-1}{2} \cdot\frac{1}{2(\theta-1)+1} \\
& = & \frac{\theta^{2}+\theta -1}{2\theta -1}.
\end{eqnarray*}
Here the second inequality follows from the inequality \eqref{eq:alpha}, and the third inequality  follows from the condition $\beta\leq \frac{\alpha}{\theta-1}$ of this case (ii) and the condition $2\alpha+\beta>0$, where the latter is obtained from 
$2\alpha+\beta>\alpha+\beta =v_N(H_0\cup H_1\cup H_2)-v_M(H_0\cup H_1)>0$.

\end{proof}

The above analysis for Theorem~\ref{thm:uniform-approximable} is tight, as seen from the following proposition.
\begin{proposition}\label{prop:uniform-tight}
There is an instance $I$ of the uniform model such that $\frac{\opt(I)}{\alg(I)}=\frac{\theta^{2}+\theta -1}{2\theta -1}$.
This holds even if ties appear only in preference lists of hospitals
or only in preference lists of residents.
\end{proposition}
\begin{proof}
As the upper bound is shown in Theorem~\ref{thm:uniform-approximable},
it suffices to give an instance $I$ with $\frac{\opt(I)}{\alg(I)}\geq \frac{\theta^{2}+\theta -1}{2\theta -1}$.
Further, since the case $\theta=1$ is trivial, we assume $\theta>1$, i.e., $\ell<u$.
We construct two instances $I_1$ and $I_2$ each of which satisfies this inequality
and in $I_1$ (resp., in $I_2$) ties appear only in preference lists of hospitals (resp., residents).

Both $I_1$ and $I_2$ consist of $2(u-\ell)u+\ell u$ residents and $(u-\ell)u+(u-\ell)+u$ hospitals.
The set of residents is $R=A\cup B\cup C$ where $A=\set{a_{i,j}|1\leq i\leq u-\ell,~1\leq j\leq u}$,
$B=\set{b_{i,j}|1\leq i\leq u-\ell,~1\leq j\leq u}$, and $C=\set{c_{i,j}|1\leq i\leq u,~1\leq j\leq \ell}$.
The set of hospitals is $H=X\cup Y\cup Z$ where
$X=\set{x_{i,j}|1\leq i\leq u-\ell,~1\leq j\leq u}$,
$Y=\set{y_i| 1\leq i\leq u-\ell}$, and 
$Z=\set{z_i| 1\leq i\leq u}$.
(Fig.~\ref{fig:graph3} shows a pictorial representation of a small example.)

The preference lists in $I_1$ are given as follows, where 
``$(~~R~~)$'' and ``$[~~R~~]$'' respectively represent a single tie containing all members of $R$ and an arbitrary strict order on $R$.
\begin{center}
\renewcommand\arraystretch{1.2}
\begin{tabular}{llllllllllllllllll}
$a_{i,j}$: & $y_{i}$ & $x_{i,j}$  & $\cdots$ &\hspace{15mm} & $x_{i,j}$ $[\ell, u]$: &  [~~~$R$~~~] \hspace{15mm}\\
$b_{i,j}$: & $y_{i}$ & $z_{j}$ & $\cdots$ &\hspace{15mm} & $y_i$ $[\ell, u]$: &  (~~~$R$~~~) \\
$c_{i,j}$: & $z_{i}$ & $\cdots$ &  &\hspace{15mm}& $z_i$ $[\ell, u]$: &  [~~~$R$~~~]  \\
\end{tabular}
\end{center}
Note that, for each $i=1,2,\dots,u-\ell$, 
the hospital $y_i$ is the first choice of $2u$ residents $\set{a_{i,j},b_{i,j}|1\leq j\leq u}$.
Recall that we delete arbitrariness in {\sc Double Proposal} using the priority rules defined by indices.
If we set indices on residents so that residents in $A$ have smaller indices than those in $B$,
then $y_i$ prioritizes residents in $A$ over those in $B$.
We then observe that the output of the algorithm is
\begin{align*}
M=&\set{(a_{i,j},y_{i})|1\leq i\leq u-\ell,~1\leq j\leq u}\cup \set{(b_{i,j},z_{j})|1\leq i\leq u-\ell,~1\leq j\leq u}\\
&\cup \set{(c_{i,j},z_{i})|1\leq i\leq u,~1\leq j\leq \ell}.
\end{align*}
In $M$, the hospitals $x_{i,j}$, $y_i$, and $z_i$ are assigned $0$, $u$, and $u$ residents, respectively.
Then, their scores in $M$ are $0$, $1$, and $1$, respectively.
Hence, we obtain $s(M)=|Y|+|Z|=(u-\ell)+u$.
\begin{figure}
	\begin{center}
		\includegraphics[width=0.38\hsize]{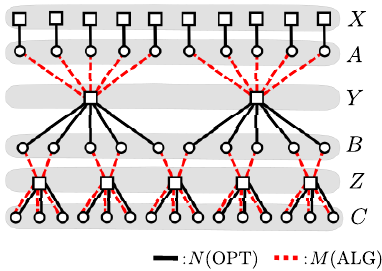}
	\end{center}
	\vspace{-2mm}
	\caption{\small An example with $[\ell, u]=[3,5]$. 
	An optimal matching $N$ is represented by solid (black) lines while the output $M$ of the algorithm is represented by dashed (red) lines.}
	\label{fig:graph3}
\end{figure}

Next, define a matching $N$ by
\begin{align*}
N=&\set{(a_{i,j},x_{i,j})|1\leq i\leq u-\ell,~1\leq j\leq u}\cup \set{(b_{i,j},y_{i})|1\leq i\leq u-\ell,~1\leq j\leq u}\\
&\cup\set{(c_{i,j},z_{i})|1\leq i\leq u,~1\leq j\leq \ell}.
\end{align*}
It is straightforward to see that this is a stable matching.
In $N$, the hospitals $x_{i,j}$, $y_i$, and $z_i$ are assigned $1$, $u$, and $\ell$ residents, respectively.
Then, their scores in $N$ are $\frac{1}{\ell}$, $1$, and $1$, respectively.
Hence, $s(N)=\frac{1}{\ell}|X|+|Y|+|Z|=\frac{(u-\ell)u}{\ell}+(u-\ell)+u$.
From these, we obtain
\[\textstyle\frac{\opt(I_1)}{\alg(I_1)}\geq \frac{s(N_1)}{s(M_1)}=\frac{\frac{(u-\ell)u}{\ell}+(u-\ell)+u}{(u-\ell)+u}
=\frac{(u-\ell)u+\ell(u-\ell)+\ell u}{\ell(u-\ell)+\ell u}
=\frac{u^2+\ell u-\ell^2}{2\ell u-\ell^2}=\frac{\theta^2 +\theta-1}{2\theta-1}
.\]

Next, we define $I_2$.
The preference lists in $I_2$ are given as follows.
Similarly to the notation ``[~~$R$~~],'' 
we denote by ``[~$B$~]'' and ``[~$A\cup C$~]'' arbitrary strict orders of all residents in $B$ and $A\cup C$, respectively.

\begin{center}
\renewcommand\arraystretch{1.2}
\begin{tabular}{llllllllllllllllll}
$a_{i,j}$: & ~~$y_{i}$ & $x_{i,j}$~~  & $\cdots$ &\hspace{15mm} & $x_{i,j}$ $[\ell, u]$: &  [~~~$R$~~~] \hspace{15mm}\\
$b_{i,j}$: & (~$y_{i}$ & $z_{j}$~) & $\cdots$ &\hspace{15mm} & $y_i$ $[\ell, u]$: &  [~$B$~][~$A\cup C$~] \\
$c_{i,j}$: & ~~$z_{i}$ & $\cdots$ &  &\hspace{15mm}& $z_i$ $[\ell, u]$: &  [~~~$R$~~~]  \\
\end{tabular}
\end{center}
If we set indices on hospitals so that those in $Z$ have smaller indices than those in $Y$,
in {\sc Double Proposal}, each $b_{i,j}$ makes (the second) proposal to $z_{j}$ 
before to $y_{i}$.
Then, we can observe the output of the algorithm coincides with the matching $M$ defined above.
Additionally, we see that the matching $N$ defined above is a stable matching of $I_2$.
Therefore, we can obtain $\frac{\opt(I_2)}{\alg(I_2)}\geq \frac{s(N)}{s(M)}=\frac{\theta^2 +\theta-1}{2\theta-1}$.
\end{proof}

\begin{corollary}\label{cor:uniform-worst-H}
Among instances of the uniform model in which ties appear only in preference lists of hospitals,
$\max_I \frac{\opt(I)}{\wst(I)}=\frac{\theta^{2}+\theta -1}{2\theta -1}$.
\end{corollary}
\begin{proof}
From the proof of Theorem~\ref{thm:uniform-approximable}, 
we can observe that the inequality $\frac{s(N)}{s(M)}\leq \frac{\theta^{2}+\theta -1}{2\theta -1}$ 
is obtained if both $M$ and $N$ are stable and $M$ satisfies the property given as Lemma~\ref{lem:property}(ii).
Note that this property is satisfied by any stable matching if there is no ties 
in the preference lists of residents,
because $h=_r h'$ cannot happen for any $r\in R$ and $h,h'\in H$.
Therefore, the maximum value of $\frac{\opt(I)}{\wst(I)}$ is at most $\frac{\theta^{2}+\theta -1}{2\theta -1}$.
Further the instance $I_1$ in the proof of Proposition~\ref{prop:uniform-tight} shows that this bound is tight.
\end{proof}

\subsection{Marriage Model}\label{sec:marriage}
In the marriage model, the upper quota of each hospital is $1$.
Therefore, $[\ell(h), u(h)]$ is either $[0,1]$ or $[1,1]$ for each $h\in H$.
We first provide a worst case analysis of a tie-breaking algorithm.
\begin{proposition}\label{prop:marriage-aaa}
The maximum gap for the marriage model satisfies \mbox{$\Lambda({\cal I}_{\rm Marriage})=2$}.
Moreover, this equality holds even if ties appear only in preference lists of residents.
\end{proposition}
\begin{proof}
We first show $\frac{\opt(I)}{\wst(I)}\leq 2$ for any instance $I$ of the marriage model.
Let $N$ and $M$ be stable matchings with $s(N)=\opt(I)$ and $s(M)=\wst(I)$.
Consider a bipartite graph $G=(R,H:N\cup M)$, where we consider an edge used in both $N$ and $M$ as a length-two cycle in $G$.
Since $N$ and $M$ are one-to-one matchings in which all residents are assigned,
each component is an alternating cycle or an alternating path whose two end vertices are both in $H$.

Take any connected component.
It suffices to show that the sum of the scores of the hospitals in this component in $N$ is at most twice of that in $M$.
The case of a cycle is trivial since every hospital in it has the score of 1. 
Therefore, consider a path. 
Then, one of two terminal hospitals, say $h_1$, is incident only to $N$ and the other, say $h_2$, is only to $M$.
We then have $s_N(h_1)=1$ and $s_M(h_2)=1$. 
The value $s_M(h_1)$ is $1$ if $\ell(h_1)=0$ and $0$ otherwise.
Similarly, $s_N(h_2)$ is $1$ if $\ell(h_2)=0$ and $0$ otherwise.
For any non-terminal hospital $h$, we have $s_N(h)=s_M(h)=1$.
If there are $k$ non-terminal hospitals,
then the sum of scores in this component in $N$ is $1+s_N(h_2)+k$ while
that in $M$ is $1+s_M(h_1)+k$. 
Since $k\geq 0$, $s_N(h_2)\leq 1$, and $s_M(h_1)\geq 0$,
we have $\frac{1+s_N(h_2)+k}{1+s_M(h_1)+k}\leq 2$.

Next, we provide an instance $I$ with $\frac{\opt(I)}{\wst(I)}=2$.
Let $I$ be an instance containing one resident $r$ and two hospitals $h_1$ and $h_2$
such that $r$ is indifferent between $h_1$ and $h_2$ and quotas are defined as 
$[\ell(h_1),u(h_1)]=[1,1]$ and $[\ell(h_2),u(h_2)]=[0,1]$.
Then, $N=\{(r,h_1)\}$ and $M=\{(r,h_2)\}$ are both stable matchings and
we have $s(N)=2$ while $s(M)=1$.
\end{proof}
\begin{theorem}\label{thm:marriage-aaa}
The approximation factor of {\sc Double Proposal} for the marriage  model satisfies $\app({\cal I}_{\rm Marriage})=1.5$.
Moreover, this is best possible for the marriage model, if strategy-proofness is required. 
\end{theorem}
\begin{proof}
We first show that $\frac{\opt(I)}{\alg(I)}\leq 1.5$ holds for any instance $I$ of the marriage model.
Let $M$ be the output of {\sc Double Proposal} and let $N$ be an optimal stable matching.
By the arguments in the proof of Proposition~\ref{prop:marriage-aaa},
it suffices to show that there is no component of $G=(R,H; N\cup M)$ that forms a path
with two edges $(r, h_1)\in N$, $(r,h_2)\in M$ with $\ell(h_1)=1$ and $\ell(h_2)=0$.
Suppose conversely that there is such a path.
As $h_1$ is assigned no resident in $M$, we have $h_2=M(r)\succeq_{r} h_1$ by the stability of $M$.
Similarly, the stability of $N$ implies $h_1=N(r)\succeq_{r} h_2$, and hence $h_1=_r h_2$.
Since $|M(h_2)|=1>\ell(h_2)$, Lemma~\ref{lem:property}(ii) implies $|M(h_1)|\geq \ell(h_1)=1$, 
which contradicts $|M(h_1)|=0$.

To see that $\max_{I}\frac{\opt(I)}{\alg(I)}\geq 1.5$
even if ties appear only in preference lists of hospitals or only in that of residents,
see the instances $I_1$ and $I_2$ defined for $n=2$ in Proposition~\ref{prop:general-tight}.
These two are instances of the marriage model and satisfy 
$\frac{\opt(I_1)}{\alg(I_1)}=\frac{\opt(I_2)}{\alg(I_2)}=\phi(2)=1.5$.
\end{proof}

It is worth mentioning that, for the marriage model, 
our algorithm attains the best approximation factor
in the domain of strategy-proof algorithms.
As shown in Example~\ref{ex:SP}, there is no strategy-proof algorithm that
achieves an approximation factor better than $1.5$ even in the marriage model.
Therefore, we cannot improve this ratio without harming strategy-proofness for residents.

\begin{corollary}\label{cor:marriage-worst-H}
Among instances of the marriage model in which ties appear only in preference lists of hospitals,
$\max_I \frac{\opt(I)}{\wst(I)}=1.5$.
\end{corollary}
\begin{proof}
If the preference lists of the residents have no ties,
the proof of Theorem~\ref{thm:marriage-aaa} works for any
pair of stable matchings, since it cannot be $h_1=_r h_2$.
Hence, the upper bound follows.
The lower bound follows from the instance $I_1$ mentioned there.
\end{proof}

\subsection{Resident-Side Master List Model}
In the resident-side master list case, the preference lists of all residents are the same.
Even with this restriction, the maximum value of $\frac{\opt(I)}{\wst(I)}$ can be $n+1$ as shown in Proposition~\ref{prop:general-worst-R}.
Our algorithm, however, solves this special case exactly.
\begin{theorem}\label{thm:ML}
The approximation factor of {\sc Double Proposal} for the R-side master list  model satisfies $\app({\cal I}_{\scriptsize \mbox{R-ML}})=1$, i.e., {\sc Double Proposal} can solve the R-side master list model  exactly. 
\end{theorem}
\begin{proof}
Let $I$ be an instance and $P$ be the master preference list of residents over hospitals.
For convenience, we suppose that $P$ is a strictly ordered list of ties $T_1, T_2, \dots, T_k$, by regarding a hospital that does not belong to any tie as a tie of length one.
For each $i$, let $u(T_i)=\sum_{h\in T_i}u(h)$.
Let $z$ be the index (if any) such that $\sum_{i=1}^{z-1}u(T_i) < |R|$ and $\sum_{i=1}^{z}u(T_i) > |R|$.
If there is an integer $p$ such that $\sum_{i=1}^{p}u(T_i) = |R|$, we define $z=p+0.5$.

A tie $T_{i}$ is called {\em full} if $1 \leq i < z$ and {\em empty} if $z < i \leq k$.
In case $z$ is an integer, the tie $T_{z}$ is called {\em intermediate}.
The following lemma gives a necessary condition for a matching to be stable in $I$.

\begin{lemma}\label{lemma:characterSM}
Any stable matching of $I$ assigns $u(h)$ residents to each hospital $h$ in a full tie and no resident to each hospital in an empty tie.
\end{lemma}

\begin{proof}
Let $M$ be a stable matching.
Suppose that $|M(h)| < u(h)$ holds for a hospital $h$ in a full tie.
Then, there must be a resident $r$ such that $M(r)=h'$ and $h \succ_{r} h'$.
Thus $(r, h)$ blocks $M$, a contradiction.
Suppose that $|M(h)| > 0$ holds for a hospital $h$ in an empty tie.
Let $r$ be a resident in $M(h)$.
Then, there must be an undersubscribed hospital $h'$ such that $h' \succ_{r} h$.
Thus $(r, h')$ blocks $M$, a contradiction.
\end{proof}

Let $M$ be the output of {\sc Double Proposal} and $N$ be an optimal solution.
For contradiction, suppose that $s(M) < s(N)$.
By Lemma \ref{lemma:characterSM}, $s_{M}(h)=s_{N}(h)$ for any hospital $h$ in a full tie or in an empty tie.
Hence, the difference of the scores of $M$ and $N$ is caused by hospitals in the intermediate tie.
In the following, we concentrate on the intermediate tie, and if we refer to a hospital, it always means a hospital in the intermediate tie.

Suppose that there is a hospital $h$ such that $|M(h)| > \ell(h)$.
Then, by Lemma~\ref{lem:property}(ii), $|M(h')| > \ell(h')$ holds for any hospital $h'$ in this tie.
Therefore, the score of each hospital is 1 in $M$ and it is impossible that $s(M) < s(N)$.
Hence, in the following, we assume that $|M(h)| \leq \ell(h)$ for each hospital $h$.

Suppose that there are $q$ different lower quotas for hospitals in the intermediate tie, and let them be $\ell_{1}, \ell_{2} \ldots, \ell_{q}$ such that $\ell_{1} < \ell_{2} < \cdots < \ell_{q}$.
For $1 \leq i \leq q$, let $L_{i}$ be the set of hospitals whose lower quota is $\ell_{i}$.
For $1 \leq i \leq q$, let $S_{M}(i)=\sum_{h \in L_{1} \cup L_{2} \cup \cdots \cup L_{i}} s_{M}(h)$ and $S_{N}(i)=\sum_{h \in L_{1} \cup L_{2} \cup \cdots \cup L_{i}} s_{N}(h)$.
By the assumption $s(M) < s(N)$, there exists an index $i$ such that $S_{M}(i) < S_{N}(i)$ and let $i^{*}$ be the minimum one.
Then, there is a hospital $h' \in L_{i^{*}}$ such that $|M(h')| < \ell(h')$, as otherwise all hospitals in $L_{i^{*}}$ have the score 1 in $M$ and this contradicts the choice of $i^{*}$.
Since $|M(h)| \leq \ell(h)$ for each hospital $h$, $N$ assigns strictly more residents to hospitals in $L_{1} \cup L_{2} \cup \cdots \cup L_{i^{*}}$ than $M$, as otherwise $S_{M}(i^{*}) < S_{N}(i^{*})$ would not hold.
Then, there is a resident $r$ and a hospital $\tilde{h} \in L_{i}$ ($i>i^{*}$) such that $M(r)=\tilde{h}$.
Since $\tilde{h} =_{r} h'$ and $\ell(\tilde{h}) > \ell(h')$, Lemma~\ref{lem:property}(i) implies that $|M(h')| \geq \ell(h')$, but this contradicts the fact we have derived above.
\end{proof}

\begin{corollary}\label{cor:ML}
If there is a master preference list of residents that contains no ties, 
then any stable matching is optimal. 
\end{corollary}
\begin{proof}
This corollary is easily derived from the proof of Theorem~\ref{thm:ML}.
Since there are no ties in the master preference list, the intermediate tie (if any) consists of a single hospital.
Hence, Lemma \ref{lemma:characterSM} implies that the number of residents assigned to each hospital does not depend on the choice of a stable matching.
This completes the proof.
\end{proof}


\section{Proofs of Hardness Results}\label{app:hard}
In this section, we give omitted proofs of hardness results given in Section \ref{sec:hardness}.
For readability, we give proofs of Theorems \ref{thm:marriage-hardness} and \ref{thm:uniform-hardness} in this order.


\subsection{Proof of Theorem \ref{thm:marriage-hardness}}
We show the theorem by a reduction from the {\em minimum maximal matching problem} ({\em MMM} for short).
In this problem, we are given an undirected graph $G$ and are asked to find a maximal matching of minimum size, denoted by $\opt(G)$.
It is known that under UGC, there is no polynomial-time algorithm to distinguish between the following two cases: (i) $\opt(G) \leq (\frac{1}{2} + \delta)n$ and (ii) $\opt(G) \geq (\frac{2}{3} - \delta)n$ for any positive constant $\delta$, even for bipartite graphs with $n$ vertices in each part \cite{DBLP:conf/ipco/DudyczLM19}:

Let $G=(U, V; E)$ ($|U| = |V| = n$) be an instance of MMM, where $U =\{ u_{1}, u_{2}, \ldots, u_{n} \}$ and $V =\{ v_{1}, v_{2}, \ldots, v_{n} \}$.
We will construct an instance $I$ of HRT-MSLQ in the marriage model.
$I$ consists of $n$ residents $U =\{ u_{1}, u_{2}, \ldots, u_{n} \}$ and $2n$ hospitals $V =\{ v_{1}, v_{2}, \ldots, v_{n} \}$ and $Y =\{ y_{1}, y_{2}, \ldots, y_{n} \}$.
For convenience, we use $u_{i}$ and $v_{i}$ as the names of vertices in $G$ and agents in $I$ interchangeably.

Preference lists and quotas of hospitals are defined in Fig.~\ref{fig:1}.
Here, $N(u_{i})$ is the set of neighbors of $u_{i}$ in $G$, namely, $N(u_{i}) =\set{ v_{j} | (u_{i}, v_{j}) \in E }$ and ``( \ $N(u_{i})$ \ )'' is the tie consisting of all hospitals in $N(u_{i})$.
The notation  ``( \ $N(v_{i})$ \ )'' in $v_{i}$'s list is defined similarly.
The notation ``$\cdots$'' at the tail of lists means an arbitrary strict order of all agents missing in the list.

\begin{figure}[ht]
\begin{center}
\renewcommand\arraystretch{1.2}
\begin{tabular}{lllllllllllllllllllllllllll}
$u_{i}$: & ( \ $N(u_{i})$ \ ) & $y_{i}$ & $\cdots$ & \hspace{15mm} & $v_{i}$ $[0,1]$: &  ( \ $N(v_{i})$ \ ) & $\cdots$ \\
 &  &  & & & $y_{i}$ $[1,1]$: &  $u_{i}$ \ $\cdots$ &  \\
\end{tabular}
\caption{Preference lists of residents and hospitals.}\label{fig:1}
\end{center}
\end{figure}

In the following, we show that $\opt(I)=2n-\opt(G)$.
If so, the above mentioned hardness implies that it is UG-hard to distinguish between the cases (i$'$) $\opt(I) \geq (\frac{3}{2}-\delta) n$ and (ii$'$) $\opt(I) \leq (\frac{4}{3}+\delta)n$.
This in turn implies that an approximation algorithm with an approximation factor smaller than $\frac{9-6\delta}{8+6\delta}$ would refute UGC.
Since $\delta$ can be taken arbitrarily small, if we set $\delta < \frac{12 \epsilon}{17-9\epsilon}$, the theorem is proved.

We first show that $\opt(I) \geq 2n-\opt(G)$.
Let $L$ be an optimal solution of $G$, i.e., a maximal matching of $G$ of size $\opt(G)$.
Then, we construct a matching $M$ of $I$ as $M= M_{1} \cup M_{2}$ where $M_{1} = \set{(u_{i}, v_{j}) | (u_{i}, v_{j}) \in L }$ and $M_{2} = \set{(u_{i}, y_{i}) | u_{i} \mbox{ is unmatched in } L }$.

We show that $M$ is stable.
If a resident $u_{i}$ is matched in $M_{1}$, she is matched with a top choice hospital so she cannot be a part of a blocking pair.
Suppose that a resident $u_{i}$ who is matched in $M_{2}$ (with $y_{i}$) forms a blocking pair.
Then, $u_{i}$ is unmatched in $L$ and the counterpart of the blocking pair must be some $v_{j} \in N(u_{i})$.
Note that $v_{j}$ is unmatched in $M$ since, by construction of $M$, if a hospital $v_{j}$ is matched, then it is assigned a top choice resident and hence cannot form a blocking pair.
From the above arguments, we have that $(u_{i}, v_{j}) \in E$ but both $u_{i}$ and $v_{j}$ are unmatched in $L$, which contradicts  maximality of $L$.

The score of each $v_{i}$ is 1 because its lower quota is 0.
Since all residents are matched in $M$, $|M_{1}| + |M_{2}| = n$.
By construction $|M_{1}|=|L|$ holds, so the total score of hospitals in $Y$ is $|M_{2}| = n-|M_{1}|=n-|L| = n-\opt(G)$.
Hence, we have that $\opt(I) \geq s(M) = n+|M_{2}| = 2n - \opt(G)$.

Next, we show that $\opt(I) \leq 2n-\opt(G)$.
Let $M$ be an optimal solution for $I$, a stable matching of $I$ whose score is $\opt(I)$.
Since each $v_{i}$'s score is 1 without depending on the matching, $\opt(I) \geq n$ and we can write $\opt(I)=n+k$ for a nonnegative integer $k$.
Here, $k$ coincides the number of hospitals of $Y$ matched in $M$.

As mentioned in Sec.~\ref{sec:definition}, every resident is matched in $M$.
Note that, for any $i$, resident $u_{i}$ is not matched with any hospital in ``$\cdots$'' part, as otherwise, $(u_{i}, y_{i})$ blocks $M$, a contradiction (note that $u_{i}$ is the unique first choice of $y_{i}$).
Among $n$ residents, $k$ ones are matched with hospitals in $Y$, so the remaining $n-k$ ones are matched with hospitals in $V$.

Let us define a matching $L$ of $G$ as $L= \set{ (u_{i}, v_{j}) | (u_{i}, v_{j}) \in M }$.
Then, from the above observations, $|L|=n-k=2n-\opt(I)$.
We show that $L$ is maximal in $G$.
Suppose not and that $(u_{i}, v_{j}) \in E$ but both $u_{i}$ and $v_{j}$ are unmatched in $L$.
Then, $u_{i}$ is matched with $y_{i}$ and $v_{j}$ is unmatched in $M$, which implies that $(u_{i}, v_{j})$ blocks $M$, contradicting the stability of $M$.
Therefore, $\opt(G) \leq |L| = 2n-\opt(I)$, which completes the proof.

\subsection{Proof of Theorem \ref{thm:uniform-hardness}}

The proof is a nontrivial extension of that of Theorem \ref{thm:marriage-hardness}.
As a reduction source, we use MMM for bipartite graphs (see the proof of Theorem \ref{thm:marriage-hardness} for definition).
Let $G=(U, V, E)$ be an input bipartite graph for MMM where $|U|=|V|=n$.
We will construct an instance $I$ for HRT-MSLQ in the uniform model.
The set of residents is $X \cup R$ where $X= \{ x_{i,j} \mid 1 \leq i \leq n, 1 \leq j \leq \ell \}$ and $R= \{ r_{i,j} \mid 1 \leq i \leq n, 1 \leq j \leq u-\ell \}$.
The set of hospitals is $H \cup Y$ where $H= \{ h_{i} \mid 1 \leq i \leq n \}$ and $Y= \{ y_{i,j} \mid 1 \leq i \leq n, 1 \leq j \leq u-\ell \}$.

Preference lists of agents are given in Fig.~\ref{fig:2}.
Here, $N(u_{i})$ is defined as $N(u_{i}) = \{ h_{j} \mid (u_{i}, v_{j}) \in E \}$ and ``( \ $N(u_{i})$ \ )'' denotes the tie consisting of all hospitals in $N(u_{i})$.
$N(v_{i})$ is defined as $N(v_{i}) = \{ r_{j,k} \mid (u_{j}, v_{i}) \in E, 1 \leq k \leq u-\ell \}$ and ``( \ $N(v_{i})$ \ )'' is the tie consisting of all residents in $N(v_{i})$.
As before, the notation ``$\cdots$'' at the tail of the list means an arbitrary strict order of all agents missing in the list.

\begin{figure}[ht]
\begin{center}
\renewcommand\arraystretch{1.2}
\begin{tabular}{llllllllllllllllllllllllll}
$x_{i,j}$: & $h_{i}$ \ $\cdots$ &  & \hspace{15mm} & $h_{i}$ $[\ell, u]$: &  $x_{i,1}$ & $\cdots$ & $x_{i,l}$ & ( \ $N(v_{i})$ \ ) & $\cdots$ \\
$r_{i,j}$: & ( \ $N(u_{i})$ \ ) & $y_{i,j}$ & $\cdots$ & $y_{i,j}$ $[\ell, u]$: &  $r_{i,j}$ & $\cdots$ \\
\end{tabular}
\caption{Preference lists of residents and hospitals.}\label{fig:2}
\end{center}
\end{figure}

It would be helpful to informally explain here an idea behind the reduction.
The $u-\ell$ residents $r_{i,j}$ ($1 \leq j \leq u-\ell$) correspond to the vertex $u_{i} \in U$ of $G$, and a hospital $h_{i}$ corresponds to the vertex $v_{i} \in V$ of $G$.
The first choice of the $\ell$ residents $x_{i,j}$ ($1 \leq j \leq \ell$) is $h_{i}$ and $h_{i}$'s first $\ell$ choices are $x_{i,j}$ ($1 \leq j \leq \ell$), so all $x_{i,j}$s are assigned to $h_{i}$ in any stable matching.
These $x_{i,j}$s fill the $\ell$ positions of $h_{i}$, so $h_{i}$'s score is 1 and there remains $u-\ell$ positions.
Then, the residents in $R$ ($u-\ell$ copies of vertices of $U$) and the hospitals in $H$ (corresponding to vertices of $V$) form a matching that simulates a maximal matching of $G$.
A resident $r_{i,j}$ unmatched in this maximal matching will be assigned to $y_{i,j}$, by which $y_{i,j}$ obtains a score of $\frac{1}{\ell}$.
Thus a smaller maximal matching of $G$ can produce a stable matching of $I$ of larger score.

Formally, we will prove that the equation $\opt(I) = n+ \frac{u-\ell}{\ell}(n-\opt(G)) (= n+ (\theta-1)(n-\opt(G)))$ holds.
Then, by (i) and (ii) in the proof of Theorem \ref{thm:marriage-hardness}, it is UG-hard to distinguish between the cases (i$''$) $\opt(I) \geq (1+ (\theta - 1)(\frac{1}{2}-\delta))n$ and (ii$''$) $\opt(I) \leq (1+ (\theta - 1)(\frac{1}{3}+\delta))n$. 
This implies that existence of a polynomial-time approximation algorithm with an approximation factor smaller than $\frac{1+ (\theta - 1)(\frac{1}{2}-\delta)}{1+ (\theta - 1)(\frac{1}{3}+\delta)}= \frac{3\theta +3 - 6(\theta-1)\delta}{2\theta+4+6(\theta-1)\delta}$ would refute UGC.
Then, the theorem holds by setting $\delta < \frac{2}{15}\epsilon < \frac{(2\theta + 4)^{2}\epsilon}{6(\theta-1)((5\theta+4)-(2\theta+4)\epsilon)}$.

First, we show that $\opt(I) \geq n+ \frac{u-\ell}{\ell}(n-\opt(G))$.
Let $L$ be a minimum maximal matching of $G$.
Let us define a matching $M$ of $I$ as $M=M_{1} \cup M_{2} \cup M_{3}$, where $M_{1} = \{ (x_{i,j}, h_{i}) \mid 1 \leq i \leq n, 1\leq j\leq \ell \}$, $M_{2} = \{ (r_{i,j}, h_{k}) \mid (u_{i}, v_{k}) \in L, 1\leq j\leq u-\ell \}$, and $M_{3} = \{ (r_{i,j}, y_{i,j}) \mid u_{i} \mbox{ is unmatched in } L, 1\leq j\leq u-\ell \}$.

We show that $M$ is stable.
Since all residents in $X$ are assigned to a top choice hospital, none of them can be a part of a blocking pair.
This also holds for a resident $r_{i,j}$ if she is assigned to a hospital in $N(u_{i})$.
Hence, only a resident $r_{i,j}$ who is assigned to $y_{i,j}$ can form a blocking pair with some $h_{k} \in N(u_{i})$.
This means that $(u_{i}, v_{k}) \in E$ but $u_{i}$ is unmatched in $L$.
Observe that, by construction of $M$, $h_{k}$ is assigned $\ell$ residents $x_{k,1}, \ldots, x_{k,\ell}$, and is either assigned $u-\ell$ more residents in $N(v_{k})$, in which case $h_{k}$ is full, or assigned no more residents, in which case $h_{k}$ is undersubscribed.
In the former case, $h_{k}$ cannot prefer $r_{i,j}$ to any of residents in $M(h_{k})$, so the latter case must hold.
This implies that $v_{k}$ is unmatched in $L$.
Thus $L \cup \{(u_{i}, v_{k}) \}$ is a matching of $G$, contradicting the maximality of $L$.

Since each $h_{i}$ is assigned $\ell$ residents in $M_{1}$, $h_{i}$'s score is 1.
Hence, the total score of hospitals in $H$ is $n$.
Among $(u-\ell)n$ residents in $R$, $(u-\ell)|L|$ ones are assigned to hospitals in $H$, so the remaining $(u-\ell)(n-|L|)$ ones are assigned to hospitals in $Y$.
Such residents are assigned to different hospitals, so the total score of hospitals in $Y$ is $\frac{1}{\ell}(u-\ell)(n-|L|)$.
Thus the score of $M$ is $n+\frac{1}{\ell}(u-\ell)(n-|L|) = n+\frac{u-\ell}{\ell}(n-\opt(G))$, so it results that $\opt(I) \geq n+\frac{u-\ell}{\ell}(n-\opt(G))$.

Next, we show that $\opt(I) \leq n+ \frac{u-\ell}{\ell}(n-\opt(G))$.
Let $M$ be a stable matching of maximum score, that is, $s(M)=\opt(I)$.
As observed above, each $h_{i}$ is assigned $\ell$ residents $x_{i,1}, x_{i,2}, \ldots, x_{i,\ell}$ of $X$ and hence the score of each hospital $h_{i}$ is 1.
Additionally, we can see that $r_{i,j}$ is not assigned to any hospital in ``$\cdots$'' part, as otherwise, $(r_{i,j}, y_{i,j})$ blocks $M$.
We construct a bipartite (multi-)graph $G_{M}=(U, V; F)$ where $U= \{ u_{1}, u_{2}, \ldots, u_{n} \}$ and $V= \{ v_{1}, v_{2}, \ldots, v_{n} \}$ are identified with vertices of $G$, and we have an edge $(u_{i}, v_{k})_{j} \in F$ if and only if $(r_{i,j}, h_{k}) \in M$.
Here, a subscript $j$ of edge $(u_{i}, v_{k})_{j}$ is introduced to distinguish multiplicity of edge $(u_{i}, v_{k})$.
The degree of each vertex of $G_{M}$ is at most $u-\ell$, so by K\H{o}nig's edge coloring theorem \cite{konig1916}, $G_{M}$ is $(u-\ell)$-edge colorable and each color class $c$ induces a matching $M_{c}$ ($1 \leq c \leq u-\ell$) of $G_{M}$.
Note that each $M_{c}$ is a matching of $G$ because $(u_{i}, v_{k}) \in M_{c}$ means $(u_{i}, v_{k})_{j} \in F$ for some $j$, which implies $(r_{i,j}, h_{k}) \in M$, which in turn implies $h_{k} \in N(u_{i})$ and hence $(u_{i}, v_{k}) \in E$.
We then show that $M_{c}$ is a maximal matching of $G$.
Suppose not and that $M_{c} \cup \{(u_{a}, v_{b}) \}$ is a matching of $G$.
Then, $u_{a}$ in $G_{M}$ is not incident to an edge of color $c$, but since there are $u-\ell$ colors in total, $u_{a}$'s degree in $G_{M}$ is less than $u-\ell$.
This implies that $r_{a,p}$ for some $p$ is not assigned to any hospital in $N(u_{a})$ (and hence unmatched or assigned to $y_{a,p}$) in $M$.
A similar argument shows that $v_{b}$'s degree in $G_{M}$ is less than $u-\ell$ and hence $h_{b}$ is undersubscribed in $M$.
These imply that $(r_{a,p}, h_{b})$ blocks $M$, a contradiction.

Hence, we have shown that each $M_{c}$ is a maximal matching of $G$, so its size is at least $\opt(G)$.
Since $\{ M_{c} \mid 1 \leq c \leq u-\ell \}$ is a partition of $F$, we have that $|F| \geq (u-\ell)\opt(G)$.
There are $(u-\ell)n$ residents in $R$ and at least $(u-\ell)\opt(G)$ of them are assigned to a hospital in $H$, so at most $(u-\ell)(n-\opt(G))$ residents are assigned to hospitals in $Y$.
Each such resident contributes $\frac{1}{\ell}$ for a hospital's score and hence the total score of hospitals in $Y$ is at most $\frac{u-\ell}{\ell}(n-\opt(G))$.
Since, as observed above, the total score of hospitals in $H$ is $n$, we have that $\opt(I) \leq n+ \frac{u-\ell}{\ell}(n-\opt(G))$.

\subsection{Proof of Theorem \ref{thm:NP-hard-HA}}


We give a reduction from the minimum Pareto optimal matching problem ({\em MIN-POM}) \cite{DBLP:conf/isaac/AbrahamCMM05} defined as follows.
An instance of MIN-POM consists of a set $S= \{ s_{1}, s_{2}, \ldots, s_{n} \}$ of agents and a set $T= \{ t_{1}, t_{2}, \ldots, t_{m} \}$ of houses.
Each agent has a strict and possibly incomplete preference list over houses, but houses have no preference.
A matching $M$ is an assignment of houses to agents such that each agent is assigned at most one house and each house is assigned to at most one agent.
We write $M(s)$ the house assigned to $s$ by $M$ if any.
An agent $s$ {\em strictly prefers} a matching $M$ to a matching $M'$ if $s$ prefers $M(s)$ to $M'(s)$ or $s$ is assigned a house in $M$ but not in $M'$.
An agent $s$ is {\em indifferent} between $M$ and $M'$ if $M(s)=M'(s)$ or $s$ is unassigned in both $M$ and $M'$.
An agent $s$ {\em weakly prefers} $M$ to $M'$ if $s$ strictly prefers $M$ to $M'$ or is indifferent between them.
A matching $M$ {\em Pareto dominates} a matching $M'$ if every agent weakly prefers and at least one agent strictly prefers $M$ to $M'$.
A matching $M$ is {\em Pareto optimal} if there is no matching that dominates $M$.
MIN-POM asks to find a Pareto optimal matching of minimum size.
It is known that MIN-POM is NP-hard \cite{DBLP:conf/isaac/AbrahamCMM05}.

We now show the reduction.
Let $I$ be an instance of MIN-POM as above.
We can assume without loss of generality that the number of agents and the number of houses are the same, as otherwise, we may add either dummy agents with empty preference list or dummy houses which no agent includes in a preference list, without changing the set of Pareto optimal matchings.
We let $|S|=|T|=n$.
We will construct an instance $I'$ of HRT-MSLQ from $I$.
The set of residents of $I'$ is $R=A \cup B \cup C$, where $A= \{ a_{1}, a_{2}, \ldots, a_{n} \}$, $B= \{ b_{1}, b_{2}, \ldots, b_{n} \}$, and $C= \{ c_{1}, c_{2}, \ldots, c_{n} \}$, and the set of hospitals is $H=X \cup Y \cup Z$, where $X= \{ x_{1}, x_{2}, \ldots, x_{n} \}$, $Y= \{ y_{1}, y_{2}, \ldots, y_{n} \}$, and $Z= \{ z_{1}, z_{2}, \ldots, z_{n} \}$.
The set $A$ of residents and the set $X$ of hospitals correspond, respectively, to the set $S$ of agents and the set $T$ of houses of $I$.
The upper and the lower quotas of each hospital is $[1, 2]$ and each hospital's preference list is a single tie including all residents, as stated in the theorem.
We then show preference lists of residents.
For each agent $s_{i} \in S$ of $I$, let $P(s_{i})$ be her preference list and define $P'(a_{i})$ as the list obtained from $P(s_{i})$ by replacing the occurrence of each $t_{j}$ by $x_{j}$.
The preference lists of residents are shown in Fig.~\ref{fig:hardHA}, where for a set $D$, the notation $[D]$ denotes an arbitrary strict order of the hospitals in $D$.
For later use, some symbols are written in boldface.

\begin{figure}[ht]
\begin{center}
\renewcommand\arraystretch{1.2}
\begin{tabular}{llllllllllllllllllllllllll}
$a_{i}$: & {\bf\boldmath $P'(a_{i})$} \ \  {\bf\boldmath $y_{i}$} \ \ $[(X \setminus P'(a_{i})) \cup (Y \setminus \{y_{i} \})]$ \ \ $[Z]$ &  &  \\
$b_{i}$: & {\bf\boldmath $x_{i}$} \ \ {\bf\boldmath $[Z]$} \ \ $[(X \cup Y) \setminus \{ x_{i} \}]$ \\
$c_{i}$: & {\bf\boldmath $z_{i}$} \ \ {\bf\boldmath $[Z \setminus \{ z_{i} \}]$} \ \ $[X \cup Y]$ \\
\end{tabular}
\caption{Preference lists of residents.}\label{fig:hardHA}
\end{center}
\end{figure}

Here we briefly explain an idea behind the reduction.
For the correctness, we give a relationship between optimal solutions of $I$ and $I'$.
To this aim, we show that there is an optimal solution of $I'$ such that a resident $b_{i}$ is assigned to a hospital $x_{i}$ for each $i$ and a resident $c_{i}$ is assigned to a hospital $z_{i}$ for each $i$.
Then, each hospital in $X \cup Z$ obtains the score of 1.
Note that each hospital in $X$ has the remaining capacity of 1.
Residents in $A$ and hospitals in $X$ simulates a matching between $S$ and $T$ of $I$, i.e., a matching between $A$ and $X$ is stable if and only if a matching between $S$ and $T$ is Pareto optimal.
If this matching is small, then unmatched $a_{i}$ can go to the hospital $y_{i}$, by which $y_{i}$ obtains the score of 1.
Hence, a Pareto optimal matching of $I$ of smaller size gives us a stable matching of $I'$ of higher score.

Now we proceed to a formal proof.
Let $\opt(I)$ and $\opt(I')$ be the values of optimal solutions for $I$ and $I'$, respectively.
Our goal is to show $\opt(I) = 3n-\opt(I')$.

Let $M$ be an optimal solution for $I$, i.e., a Pareto optimal matching of size $\opt(I)$.
We define a matching $M'$ of $I'$ as $M' = \{ (a_{i}, x_{j}) \mid (s_{i}, t_{j}) \in M \} \cup \{ (a_{i}, y_{i}) \mid s_{i}\mbox{ is unmatched in } M \} \cup \{ (b_{i}, x_{i}) \mid 1 \leq i \leq n \} \cup \{ (c_{i}, z_{i}) \mid 1 \leq i \leq n \}$.
We show that $M'$ is stable.
Since residents in $B\cup C$ are assigned to the top choice hospital and each resident $a_{i} \in A$ is assigned to a hospital in $P'(a_{i}) \cup \{ y_{i} \}$, if there were a blocking pair for $M'$, it is of the form $(a_{i}, x_{j})$ for some $x_{j} \in P'(a_{i})$.
Hence, $s_{i}$'s preference list includes $t_{j}$.
As $x_{j}$'s preference list is a single tie of all residents, $x_{j}$ must be unmatched in $M'$, which implies that $t_{j}$ is unmatched in $M$.
If $(a_{i}, y_{i}) \in M'$ then $s_{i}$ is unmatched in $M$, so $M \cup \{ (s_{i}, t_{j}) \}$ Pareto dominates $M$, a contradiction.
If $(a_{i}, x_{k}) \in M'$ for some $k$, then $x_{j} \succ_{a_{i}} x_{k}$, so $(s_{i}, t_{k}) \in M$ and $t_{j} \succ_{s_{i}} t_{k}$.
Thus $(M \setminus \{ (s_{i}, t_{k}) \} )\cup \{ (s_{i}, t_{j}) \}$ Pareto dominates $M$, a contradiction.
The numbers of hospitals in $X$, $Y$, and $Z$ that are assigned at least one resident are $n$, $n-\opt(I)$, and $n$, respectively, so $s(M') = 3n-\opt(I)$.
Hence, we have that $\opt(I') \geq s(M') = 3n-\opt(I)$.

For the other direction, we first show that there exists an optimal solution for $I'$, i.e., a stable matching of score $\opt(I')$, that contains $(b_{i}, x_{i})$ and $(c_{i}, z_{i})$ for all $i$ ($1 \leq i \leq n$).
Let us call this property $\Pi$.

We begin with observing that in any stable matching, any resident is assigned to a hospital written in boldface in Fig.~\ref{fig:hardHA}. 
Let $M'$ be any stable matching for $I'$.
First, observe that (i) $(a_{i}, z_{j}) \not\in M'$ for any $i$ and $j$.
This is because there are at least $2n$ hospitals in $X \cup Y$ preferred to $z_{j}$ by $a_{i}$, and their $4n$ positions cannot be fully occupied by only $3n$ residents.
Next, we show that (ii) $(b_{i}, h) \not\in M'$ for any $i$ and $h \in (X \cup Y) \setminus \{ x_{i} \}$.
The reason is similar to that for (i): If $(b_{i}, h) \in M'$, then all hospitals in $Z$ must be full.
But as shown above, no resident in $A$ is assigned a hospital in $Z$, so these $2n$ positions must be occupied by residents in $B\cup C$.
However, this is impossible by $2n-1$ residents (note that $b_{i}$ is assumed to be assigned to a hospital not in $Z$).
The same argument as (ii) shows that (iii) $(c_{i}, h) \not\in M'$ for any $i$ and $h \in X \cup Y$.
We then show that (iv) $(a_{i}, h) \not\in M'$ for any $i$ and $h \in (X \setminus P'(a_{i})) \cup (Y \setminus \{y_{i} \})$.
For contradiction, suppose that there are $k$ such residents $a_{i_{1}}, \ldots, a_{i_{k}}$.
Then, to avoid blocking pairs, all hospitals $y_{i_{1}}, \ldots, y_{i_{k}}$ must be full, but by (ii) and (iii) above, no resident in $B\cup C$ can be assigned to a hospital in $Y$.
It then results that these $2k$ positions are occupied by $k$ agents $a_{i_{1}}, \ldots, a_{i_{k}}$, a contradiction.

Now let $M'$ be an optimal solution for $I'$.
Of course $M'$ satisfies (i)--(iv) above.
We show that $M'$ can be modified to satisfy the property $\Pi$ without breaking the stability and decreasing the score.

\begin{enumerate}
\item[(1)] (Re)assign every $b_{i}$ to $x_{i}$ and every $c_{i}$ to $z_{i}$ and let $M'_{1}$ be the resulting assignment.
The followings are properties of $M'_{1}$.
Every resident is assigned to one hospital.
By property (i), each hospital $z_{i}$ is assigned one resident $c_{i}$.
By properties (i) and (iv), each hospital $y_{i}$ is assigned at most one resident $a_{i}$.
A hospital $x_{i}$ is assigned a resident $b_{i}$ and at most two other residents of $A$ who are assigned to it by $M'$.
Hence, $x_{i}$ is assigned at most three residents.

\item[(2)] Let $x_{i}$ be a hospital that is assigned three residents in $M'_{1}$.
Then, $M'_{1}(x_{i}) = \{ a_{j}, a_{k}, b_{i} \}$ for some $j$ and $k$.
We choose either one agent of $A$ from $M'_{1}(x_{i})$, say $a_{j}$, and delete $(a_{j}, x_{i})$ from $M'_{1}$.
We do this for all such hospitals $x_{i}$ and let $M'_{2}$ be the resulting assignment.
Note that $M'_{2}$ satisfies all upper quotas and hence is a matching.
Note also that any hospital in $X$ that is full in $M'$ is also full in $M'_{2}$.

\item[(3)] Let $A_2 \subseteq A$ be the set of residents who is unmatched in $M'_{2}$.
Order residents in $A_2$ arbitrarily, and apply the serial dictatorship algorithm in this order, i.e., in a resident $a_{i}$'s turn, assign $a_{i}$ to the most preferred hospital that is undersubscribed.
Note that in this process $a_{i}$ is assigned to a hospital in $P'(a_{i}) \cup \{ y_{i} \}$ because $y_{i}$ is unassigned in $M'_{2}$.
Let $M'_{3}$ be the resulting matching.
\end{enumerate}

It is not hard to see that $M'_{3}$ satisfies the property $\Pi$.
Note that any hospital in $X \cup Z$ is assigned at least one resident in $M'_{3}$.
Additionally, by the properties (i)--(iv), if a hospital $y_{i}$ is assigned a resident in $M'$ then she is $a_{i}$, and by the modifications (1)--(3), the pair $(a_{i}, y_{i})$ is still in $M'_{3}$.
Hence, we have that $c(M'_{3}) \geq c(M')$.
It remains to show the stability of $M'_{3}$.
Residents in $B \cup C$ are assigned to the first choice hospital, so they do not participate in a blocking pair.
Consider a resident $a$ in $A$.
If $M'_{3}(a)=M'(a)$ then $a$ cannot form a blocking pair because $M'$ is stable and any hospital $h \in X$ that is full in $M'$ is also full in $M'_{3}$.
If $M'_{3}(a) \neq M'(a)$, then $a$ is reassigned at the modification (3).
By the assignment rule of (3), any hospital preferred to $M'_{3}(a)$ by $a$ is full.
Therefore $a$ cannot form a blocking pair.
Thus we have shown that $M'_{3}$ is an optimal solution that satisfies property $\Pi$.

We construct a matching $M$ of $I$ from $M'_{3}$ as $M\coloneqq  \{ (s_{i}, t_{j}) \mid (a_{i}, x_{j}) \in M'_{3} \}$.
$M$ is actually a matching because in $(b_{j}, x_{j}) \in M'_{3}$ for each $j$ so $x_{j}$ can be assigned at most one resident from $A$.
As noted above, all hospitals in $X \cup Z$ are assigned in $M'_{3}$, yielding a score of $2n$.
Hence, we can write $\opt(I')=2n+\alpha$ for some nonnegative integer $\alpha$.
Note that $\alpha$ is the number of hospitals in $Y$ that are assigned at least one resident in $M'_{3}$, and equivalently, the number of residents in $A$ assigned to a hospital in $Y$.
Since each resident in $A$ is assigned to a hospital $x_{j} \in X$ for some $j$ or a hospital $y_{i} \in Y$, we have that $|M|=n-\alpha = 3n-\opt(I')$.

We can see that $M$ is {\em maximal} since if $M \cup \{ (s_{p}, t_{q}) \}$ is a matching of $I$, then $M'_{3}(a_{p})=y_{p}$ and $x_{q}$ is undersubscribed in $M'_{3}$, so $(a_{p}, x_{q})$ blocks $M'_{3}$, contradicting the stability of $M'_{3}$.
We can see that $M$ is {\em trade-in-free}, i.e., there is no pair of an agent $s_{p}$ and a house $t_{q}$ such that $t_{q}$ is unmatched in $M$, $s_{p}$ is matched in $M$, and $s_{p}$ prefers $t_{q}$ to $M(s_{p})$.
This is because if such a pair exists, then $x_{q}$ is undersubscribed in $M'_{3}$ and $a_{p}$ prefers $x_{q}$ to $M'_{3}(a_{p})$, contradicting the stability of $M'_{3}$.
A {\em coalition} of $M$ is defined as a set of agents $C=\{ a_{0}, a_{1}, \ldots, a_{k-1} \}$ for some $k \geq 2$ such that each $a_{i}$ prefers $M(a_{i+1})$ to $M(a_{i})$, where $i+1$ is taken modulo $k$.
{\em Satisfying} a coalition $C$ means updating $M$ as $M\coloneqq  (M \setminus \{ (a_{i}, M(a_{i})) \mid 0 \leq i \leq k-1 \}) \cup \{ (a_{i}, M(a_{i+1})) \mid 0 \leq i \leq k-1 \}$.
Note that satisfying a coalition maintains the matching size, maximality, and trade-in-freeness.
As long as there is a coalition, we satisfy it.
This sequence of satisfying operations eventually terminates because at each operation at least two residents will be strictly improved (and none will be worse off).
Then, the resulting matching $M$ is maximal, trade-in-free, and coalition-free.
It is known (Proposition 2 of \cite{DBLP:conf/isaac/AbrahamCMM05}) that a matching is Pareto optimal if and only if it is maximal, trade-in-free, and coalition-free.
Hence, $M$ is Pareto optimal and $|M| = 3n-\opt(I')$.
Thus $\opt(I) \leq |M| = 3n-\opt(I')$.

We have shown that $\opt(I) = 3n-\opt(I')$, which means that computing $\opt(I')$ in polynomial time implies computing $\opt(I)$ in polynomial time.

\subsection{Proof of Theorem \ref{thm:NP-hard-R-Ties}}


We give a reduction from a decision problem COM-SMTI-2ML.
An instance of this problem consists of the same number $n$ of men and women.
There are two master lists, both of which may contain ties; one is a master list of men that includes all women, and the other is a master list of women that includes all men.
Each man's preference list is derived from the master list of men by deleting some women (and keeping the relative order of the remaining women), and each woman's preference list is derived similarly from the master list of women.
(However, in the following argument, we do not use the fact that there is a master list of men.)
If $w$ is included in $m$'s preference list, we say that $w$ is {\em acceptable} to $m$.
Without loss of generality, we assume that acceptability is mutual, i.e., $m$ is acceptable to $w$ if and only if $w$ is acceptable to $m$.
The problem COM-SMTI-2ML asks if there exists a weakly stable matching of size $n$.
It is known that COM-SMTI-2ML is NP-complete even if ties appear in preference lists of one side only (Theorem 3.1 of \cite{DBLP:journals/dam/IrvingMS08}).

Let $I$ be an instance of COM-SMTI-2ML consisting of $n$ men $m_{1}, \ldots, m_{n}$ and $n$ women $w_{1}, \ldots, w_{n}$. 
We will construct an instance $I'$ of HRT-MSLQ as follows.
The set of residents is $R=\{ r_{i} \mid 1 \leq i \leq n \}$ and the set of hospitals is $H=\{ h_{i} \mid 1 \leq i \leq n+1 \}$. 
Hospitals $h_{i}$ ($1 \leq i \leq n$) has quotas $[1, 1]$ and the hospital $h_{n+1}$ has quotas $[0, 1]$.
A preference list of each hospital is derived from the master list of women in $I$ by replacing a man $m_{i}$ with a resident $r_{i}$ for each $i$.
Let $P(m_{i})$ be a (possibly incomplete) preference list of $m_{i}$ in $I$.
We construct $P'(r_{i})$ from $P(m_{i})$ by replacing each woman $w_{j}$ with a hospital $h_{j}$.
Then, the preference list of $r_{i}$ is defined as 
\begin{center}
\renewcommand\arraystretch{1.2}
\begin{tabular}{lllllllllllllllllllllllllll}
$r_{i}$: & $P'(r_{i})$ & $h_{n+1}$ & $\cdots$\\ 
\end{tabular}
\end{center}
where ``$\cdots$'' means an arbitrary strict order of all hospitals missing in the list.
Now the reduction is completed.
Note that if preference lists of men (resp. women) of $I$ do not contain ties, the preference lists of residents (resp. hospitals) of $I'$ do not contain ties.

For the correctness, we show that $I$ is an yes-instance if and only if $I'$ admits a stable matching of score $n+1$.
If $I$ is an yes-instance, there exists a perfect stable matching $M$ of $I$.
We construct a matching $M'$ of $I'$ in such a way that $(r_{i}, h_{j}) \in M'$ if and only if $(m_{i}, w_{j}) \in M$.
It is easy to see that the score of $M'$ is $n+1$.
We show that $M'$ is stable.
Suppose not and let $(r_{i}, h_{j})$ be a blocking pair for $M'$.
Note that $j\neq n+1$ because each resident is assigned to a better hospital than $h_{n+1}$.
Then, we have that $h_{j} \succ_{r_{i}} M'(r_{i})$ and $r_{i} \succ_{h_{j}} M'(h_{j})$.
By construction of $M'$, $M'(r_{i})$ is in $P'(r_{i})$ and hence $h_{j}$ is also in $P'(r_{i})$, meaning that $w_{j}$ is acceptable to $m_{i}$.
The fact $h_{j} \succ_{r_{i}} M'(r_{i})$ in $I'$ implies $w_{j} \succ_{m_{i}} M(m_{i})$ in $I$.
Since $m_{i}$ is acceptable to $w_{j}$, the fact $r_{i} \succ_{h_{j}} M'(h_{j})$ in $I'$ implies $m_{i} \succ_{w_{j}} M(w_{j})$ in $I$.
Thus $(m_{i}, w_{j})$ blocks $M$ in $I$, a contradiction.

Conversely, suppose that there is a stable matching $M'$ of $I'$ such that $s(M')=n+1$.
Since $s(M')=n+1$, $M'$ forms a perfect matching between $R$ and $H \setminus \{h_{n+1} \}$.
If $r_{i}$ is assigned to a hospital in the ``$\cdots$'' part, then $r_{i}$ and $h_{n+1}$ form a blocking pair for $M'$, a contradiction.
Hence, each $r_{i}$ is assigned to a hospital in $P'(r_{i})$.
Then, $M=\{ (m_{i}, w_{j}) \mid (r_{i}, h_{j}) \in M' \}$ is a perfect matching of $I$.
It is not hard to see that $M$ is stable in $I$ because if $(m_{i}, w_{j})$ blocks $M$ then $(r_{i}, h_{j})$ blocks $M'$.

\section{Other Objective Functions}\label{appendix:other}
This paper investigates approximability and inapproximability of the problem of maximizing the total satisfaction ratio
over the family $\mathcal{M}$ of stable matchings. This is formulated as
\[
\max_{M\in \mathcal{M}} \sum_{h\in H}s_M(h),
\mbox{ ~~where ~~} \textstyle{s_M(h)=\min\left\{ 1, \frac{|M(h)|}{\ell(h)}\right\}}.
\]
To formulate the objective of ``filling lower quotas of hospitals as much as possible,''
other objective functions can be considered.
Here we briefly discuss on three alternative objective functions below.
\begin{description}
\item[(a) Maximizing the minimum satisfaction ratio:]
\[
\max_{M\in \mathcal{M}} \min_{h\in H}s_M(h).
\]
\item[(b) Maximizing the number of satisfied hospitals:]
\[
\max_{M\in \mathcal{M}} |\set{h\in H| s_M(h)=1}|.
\]
\item[(c) Maximizing the number of residents filling lower quotas:] 
\[
\max_{M\in \mathcal{M}} \sum_{h\in H}v_M(h),
\mbox{ ~~where ~~} v_M(h)=\min\{\ell(h), |M(h)|\}.
\]
\end{description}
We first provide a hardness result that is used to show the difficulty of approximation of those alternatives.
Let us define a decision problem HRT-D as follows. 
An input of HRT-D is a pair $(I,h^*)$ consisting of an HRT instance $I$ and a specified hospital $h^*$ in $I$,
we are asked whether $I$ admits a stable matching in which $h^*$ is assigned at least one resident.

\begin{theorem}
The problem HRT-D is NP-complete.
\end{theorem}
\begin{proof}
Membership in NP is obvious.
We give a reduction from an NP-complete problem COM-SMTI \cite{DBLP:journals/dam/IrvingMS08}.
An input of this problem is a stable marriage instance consisting of $n$ men and $n$ women, each having an incomplete preference list with ties.
The problem asks if there exists a weakly stable matching of size $n$.

Let $I$ be an instance of COM-SMTI consisting of $n$ men $m_{1}, \ldots, m_{n}$ and $n$ women $w_{1}, \ldots, w_{n}$. 
We will construct an instance $I'$ of HRT-D as follows.
The set of residents is $R=\{r_{i} \mid 1 \leq i \leq n \}\cup\{r\}$ and the set of hospitals is 
$H=\{ h_{i} \mid 1 \leq i \leq n \}\cup \{h,h^*\}$, where $h^*$ is the specified hospital.
An upper quota of each hospital is 1.

Let $P(m_{i})$ and $P(w_{i})$ be the preferences lists of $m_{i}$ and $w_{i}$ in $I$, respectively.
Then, we define $P'(r_{i})$ as the list obtained from $P(m_{i})$ by replacing each $w_{j}$ by $h_{j}$.
Similarly, let $P'(h_{i})$ be the list obtained from $P(w_{i})$ by replacing each $m_{j}$ by $r_{j}$.
The preference lists of agents in $I'$ are as follows, where ``$\cdots$'' denotes an arbitrary strict order of all agents missing in the list.
\begin{center}
\renewcommand\arraystretch{1.2}
\begin{tabular}{llllllllllllllllllllllllll}
$r_{i}$: & $P'(r_{i})$ & $\cdots$ & $h^*$ & \hspace{10mm} & $h_{i}$ & $[1]$: & $P'(h_{i})$ & $r$ & $\cdots$ \\
$r$: & $\cdots$ & $h^*$ & $h$ & \hspace{10mm} & $h$ & $[1]$: & $\cdots$ \\
&&&& & $h^*$ & $[1]$: &  $r$ & $\cdots$ \\
\end{tabular}
\end{center}

Suppose that $I$ admits a weakly stable matching $M$ of size $n$.
Then, $M'= \{(r_{i}, h_{j}) \mid (m_{i}, w_{j}) \in M \} \cup \{ (r, h^*) \}$ is a stable matching of $I'$ in which $h^*$ is assigned.

Conversely, suppose that $I'$ admits a stable matching $M'$ in which $h^*$ is assigned.
Since the preference lists are all complete and the number of hospitals exceeds that of residents by one, all agents but one hospital are assigned in $M'$.
If $M'(r_{i})=h^*$ for some $i$, then $r_{i}$ forms a blocking pair with the unassigned hospital; hence we have that $M'(r)=h^*$.
Then, each $h_{i}$ must be assigned a resident in $P'(h_{i})$, as otherwise $(r, h_{i})$ blocks $M'$.
Thus $M'$ defines a perfect matching between $\{r_{i} \mid 1 \leq i \leq n \}$ and $\{ h_{i} \mid 1 \leq i \leq n \}$.
It is not hard to see that $M= \{(m_{i}, w_{j}) \mid (r_{i}, h_{j}) \in M' \}$ is a weakly stable matching of $I$ of size $n$.
\end{proof}

\begin{proposition}\label{prop:alternative1}
For the objective function (a), there is no polynomial-time algorithm whose approximation factor is bounded unless P=NP.
\end{proposition}
\begin{proof}
We show the claim by a reduction from HRT-D.
Given an instance $(I,h^*)$ of HRT-D, let $I'$ be an instance of HRT-MSLQ obtained from $I$ by setting lower quotas as $\ell(h^*)=1$ and $\ell(h)=0$ for any $h\in H\setminus \{h^*\}$. 
Note that the sets of stable matchings in $I$ and $I'$ are the same. 
Then, the optimal value of $I'$ is $1$ if $(I, h^*)$ is a yes instance and $0$ otherwise.
Hence, any algorithm with a bounded approximation factor can distinguish these two cases.
\end{proof}

The proof of Proposition~\ref{prop:alternative1}
utilizes the fact that assigning a resident to a hospital with lower quota of $0$
does not contribute to the objective function at all.
However, even without such hospitals, approximation of 
this objective function is impossible.

\begin{proposition}\label{prop:alternative1-2}
Proposition~\ref{prop:alternative1} 
holds even if lower quotas of all hospitals are positive.
\end{proposition}
\begin{proof}
We modify the proof of Proposition~\ref{prop:alternative1} as follows.
In the construction of $I'$, for each $h\in H\setminus \{h^*\}$,
we set $\ell(h)=1$ and increase the upper quota of $h$ by $1$ from that in $I$. 
Hence, all lower quotas are 1 in $I'$.
We also add $|H|$ dummy residents $\set{d_h \mid h\in H\setminus \{h^*\}}$.
Each $h\in H\setminus\{h^*\}$ adds $d_h$ at the top of its preference list and adds other $|H|-1$ dummy residents at the bottom of its list in an arbitrary order.
Each dummy resident $d_h$ puts $h$ at the top of her list, followed by the other hospitals in any order.
We can observe that, for any $h\in H\setminus\{h^*\}$, $(d_{h}, h)$ is a pair in any stable matching.
Thus, the problem reduces to the one in Proposition~\ref{prop:alternative1} 
and our claim is proved.
\end{proof}

We then turn to the objective function (b).

\begin{proposition}\label{prop:alternative2}
Solving the problem with objective function (b) exactly is NP-hard.
There exists an algorithm whose approximation factor is at most $n$.
(Recall that $n$ is the number $|R|$ of residents.)
\end{proposition}
\begin{proof}
The first claim easily follows from a reduction from HRT-D.
Given an instance $(I,h^*)$ of HRT-D, 
let $I'$ be an HRT-MSLQ instance obtained from $I$ by setting lower quotas as 
$\ell(h^*)=1$ and $\ell(h)=0$ for any $h\in H\setminus \{h^*\}$. 
Then, the optimal value of $I'$ is $|H|$ if $(I, h^*)$ is a yes instance and $|H|-1$ otherwise.

For the second claim, we show that the following naive algorithm attains an approximation factor of $n$.
Given an HRT-MSLQ instance $I$ consisting of $R$ and $H$, the algorithm first constructs a bipartite graph $(R,H;E)$ with
$E=\set{(r,h)\in R\times H| \text{$h$ is included in the top tie of $r$}}$.
Let $d(h)$ be the degree of each hospital $h\in H$ in this graph.
If $d(h)<\ell(h)$ for every $h\in H$, then the algorithm returns an arbitrary stable matching;
otherwise, the algorithm takes any $h^*$ with $d(h^*)\geq \ell(h^*)$, 
breaks the ties of $I$ so that $h^*$ has the highest rank in any tie including $h^*$, and returns any stable matching of the resultant instance.

In the former case, we can easily see that any stable matching is a subset of $E$; hence the optimal value is $0$.
Hence, any stable matching is optimal.
In the latter case, the hospital $h^*$ is assigned at least $\ell(h^*)$ residents, and hence
the objective value of the output matching is at least $\max\{1, |\set{h\in H|\ell(h)=0}|\}$. 
As the optimal value is at most $n+|\set{h\in H|\ell(h)=0}|$, the approximation factor 
of this algorithm is at most $n$. 
\end{proof}
Note that the approximation factor mentioned in Proposition~\ref{prop:alternative2}
cannot be attained by our algorithm {\sc Double Proposal}: it may return a stable matching of value $0$ even when 
there exists a stable matching of positive objective value.
As the algorithm in the above proof is just a simple greedy algorithm and there is no inapproximability result for this problem,
its approximability may be worth investigating further.

Finally, we consider the objective function (c).

\begin{proposition}\label{prop:alternative3}
For the objective function (c), there is no polynomial-time algorithm whose approximation factor is bounded unless P=NP.
\end{proposition}
\begin{proof}
By the reduction used to show Proposition~\ref{prop:alternative1},
we see that it is NP-hard to distinguish the two cases where the optimal objective value is 0 and 1.
\end{proof}
As shown above, the problem with the objective function (c) is inapproximable.
Fortunately, however, it is approximable if all hospitals have positive lower quotas, in contrast to the objective function~(a).
We show that our algorithm {\sc Double Proposal} presented in Section~\ref{sec:approximation} attains an approximation factor better than 
the arbitrary tie-breaking GS algorithm.
\begin{proposition}\label{prop:alternative3-2}
For the objective function (c), {\sc Double Proposal} attains the approximation factor shown in the second row of Table~\ref{table3}
if all hospitals have positive lower quotas.
\end{proposition}
\begin{proof}
We show the approximation factors:
For the R-side ML model, the output of {\sc Double Proposal} is an optimal solution by Lemma~\ref{lem:property}(ii) and Lemma~\ref{lemma:characterSM}.
For the marriage model, the assumption that all lower quotas are positive implies that any hospitals has quotas $[1,1]$. 
Then, every stable matching $M$ satisfies $\sum_{h\in H}v_M(h)=n$, and hence the approximation factor of {\sc Double Proposal} is clearly $1$.
For the uniform model, the approximation factor of {\sc Double Proposal} for the objective function $\sum_{h\in H}v_M(h)$ is equivalent to that for $\sum_{h\in H}s_M(h)$,
which is $\frac{\theta^2+\theta-1}{2\theta-1}$ by Theorem~\ref{thm:uniform-approximable}.
For the general model, the approximation factor is obtained by combining Claims~\ref{claim:alternative3-2} and \ref{claim:alternative3-3} below.
\end{proof}

In Table~\ref{table3} below, we also present the maximum gap for the objective function (c) when all lower quotas are positive.
The values for the marriage and uniform models follow from the above arguments.
The maximum gap of $n$ for the general and R-side ML models can be obtained by modifying the proof of Proposition~\ref{prop:general-worst-R}
(note that $H_0:=\set{h\in H|\ell(h)=0}=\emptyset$ under the assumption of positive lower quotas).
\renewcommand\arraystretch{0.9}
\begin{center}
\begin{threeparttable}[htbp]
  \centering
    \begin{tabular}{|l| c| c| c| c|} \hline
   & General  &\ Uniform &\ Marriage & \ $R$-side ML\  \\\hline\hline
   \begin{tabular}{l}~\vspace{-2mm}\\
   Maximum gap $\Lambda({\cal I})$
   \vspace{-0.5mm}\\
   {\scriptsize (i.e.,  Approx. factor of}
   \\[-.1cm]
   {\scriptsize arbitrary tie-breaking GS)}
   \end{tabular}
   &$n$&  $\theta$ & $1$ & $n$\\[0.4cm] \hline
   \begin{tabular}{l}~\vspace{-2mm}\\
   Approx. factor of \\
   {\sc Double Proposal}
   \end{tabular}& \ 
   $\frac{\lceil \frac{n}{2}\rceil+1}{2}$ \ & $\frac{\theta^2+\theta-1}{2\theta-1}$ & $1$ & $1$\\[.4cm] \hline
  \end{tabular}
\smallskip 
  \caption{Maximum gap $\Lambda({\cal I})$ and approximation factor of {\sc Double Proposal} for the objective function (c) when lower quotas of all hospitals are positive.}
  \label{table3}
\end{threeparttable}
\end{center}

\begin{claim}\label{claim:alternative3-2}
For the objective function (c), if all hospitals have positive lower quotas, 
the approximation factor of {\sc Double Proposal} is at least $\frac{\lceil \frac{n}{2}\rceil+1}{2}$ for the general model.
\end{claim}
\begin{proof}
We provide a family of instances each of which admits a stable matching with objective value $\frac{\lceil \frac{n}{2}\rceil+1}{2}$ times as large as that of the output of {\sc Double Proposal}.
Let $R=R'\cup R''$ where $R'=\{r'_1,r'_2,\dots,r'_{\lfloor\frac{n}{2}\rfloor}\}$ and $R''=\{r''_1,r''_2,\dots,r''_{\lceil\frac{n}{2}\rceil}\}$ and 
the set of hospitals is given as $H=\{h_1,h_2\dots, h_n\}\cup\{x,y\}$.
Then, $|R|=n$ and $|H|=n+2$.
The preference lists are given as follows, where ``(~~$R$~~)'' represents the tie consisting of all residents.
\begin{center}
\renewcommand\arraystretch{1.2}
\begin{tabular}{llllllllllllllllll}
$r'_{i}$: & ~~$x$ & $h_i$~~  & $\cdots$ &\hspace{15mm} & $x$ $[1, \lfloor\frac{n}{2}\rfloor]$: &  (~~$R$~~) \hspace{15mm}\\
$r''_{i}$: & ~~$x$ & $y$~~ & $\cdots$ &\hspace{15mm} & $y$ $[1, n]$: &  (~~$R$~~) \\
 & & & & & $h_i$ $[1, 1]$: &  (~~$R$~~)  \\
\end{tabular}
\end{center}
If indices are defined so that residents in $R'$ have smaller indices compared with those in $R''$,
then {\sc Double Proposal} returns  
$M=\set{(r'_i, x)|i=1,2,\dots, \lfloor\frac{n}{2}\rfloor}\cup \set{(r''_i,y)|i=1,2,\dots, \lceil\frac{n}{2}\rceil}$,
whose objective value is $v_{M}(x)+v_{M}(y)=2$.
Define $N'$ by $N'=\set{(r'_i, h_i)|i=1,2,\dots, \lfloor\frac{n}{2}\rfloor}\cup \set{(r''_i,x)|i=1,2,\dots, \lfloor\frac{n}{2}\rfloor}$
and let $N=N'$ if $n$ is even and $N=N' \cup \{(r''_{\lceil\frac{n}{2}\rceil},y)\}$ if $n$ is odd.
Then, $N$ is a stable matching whose objective value is $\lceil\frac{n}{2}\rceil+1$.
\end{proof}

\begin{claim}\label{claim:alternative3-3}
For the objective function (c), if all hospitals have positive lower quotas, 
the approximation factor of {\sc Double Proposal} is at most $\frac{\lceil \frac{n}{2}\rceil+1}{2}$ for the general model.
\end{claim}
\begin{proof}
Take any instance and let $N$ be an optimal solution and $M$ be the output of {\sc Double Proposal}.
We use the notation $v_M(H')=\sum_{h\in H'} v_M(h)$ for any $H'\subseteq H$ and define $v_N(H')$ similarly.
Consider a bipartite graph $(R, H; M\cup N)$, which may have multiple edges.
Take an arbitrary connected component and let $R^*$ and $H^*$ be the sets of residents and hospitals, respectively, contained in it.
It is sufficient to bound $\frac{v_N(H^*)}{v_M(H^*)}$.

For any $h\in H$, let $r_M(h)=|M(h)|-v_M(h)$, which is the number of residents assigned to $h$ redundantly in $M$.
We write $r_M(H')=\sum_{h\in H'} r_M(h)$ for any $H'\subseteq H$.
We define $r_N(h)$ and $r_N(H')$ similarly for $N$.
Define the sets $H_0,R_0,H_1,R_1,H_2,$ and $R_2$ as in the proof of Theorem~\ref{thm:uniform-approximable}.
By the arguments there, using Lemma~\ref{lem:property}(ii), 
we can see that each $h\in H_1$ satisfies either $|N(h)|=u(h)$ or $|M(h)|\leq \ell(h)$,
each of which implies $r_M(h)\leq r_N(h)$.
Then, we have $r_M(H_1)\leq r_N(H_1)$.
Moreover, the definition of $H_0$ implies $r_M(h)=0$ for each $h\in H_0$, and hence $r_M(H_0)\leq r_N(H_0)$.
Thus, we have $r_M(H_0\cup H_1)-r_N(H_0\cup H_1)\leq 0$.

Note that $v_N(H^*)+r_N(H^*)=|R^*|= v_M(H^*)+r_M(H^*)$, and hence $v_N(H^*)=v_M(H^*)+r_M(H^*)-r_N(H^*)$.
If $H_2=\emptyset$, then $r_M(H^*)-r_N(H^*)=r_M(H_0\cup H_1)-r_N(H_0\cup H_1)\leq 0$, and hence clearly $\frac{v_N(H^*)}{v_M(H^*)}\leq 1$.
We then assume $H_2\neq \emptyset$ in the rest of the proof.

By the definitions of $H_1$ and $H_2$, at least one resident is assigned to each of them in $M$.
By the assumption that all hospitals have positive lower quotas, we obtain $v_M(H_1)\geq 1$ and $v_M(H_2)\geq 1$, and hence $v_M(H^*)\geq 2$.
Thus,
\[\frac{v_N(H^*)}{v_M(H^*)}=\frac{v_M(H^*)+r_M(H^*)-r_N(H^*)}{v_M(H^*)}\leq \frac{2+r_M(H^*)-r_N(H^*)}{2}.\]
We now bound the value of $r_M(H^*)-r_N(H^*)$.
By the definitions of $r_M$ and $r_N$, we have
\[\textstyle r_M(H_2)-r_N(H_2)=\sum_{h\in H_2}|M(h)|-v_M(H_2)-\left\{\sum_{h\in H_2}|N(h)|-v_N(H_2)\right\}.\]
Additionally, $-v_M(H_2)+v_N(H_2)\leq 0$,
$\sum_{h\in H_2}|N(h)|=|R_2|$, and $\sum_{h\in H_2}|M(h)|\leq |R_2|+|R_1|$ by the definitions of $H_0$, $H_1$, and $H_2$.
Substituting them, we obtain $r_M(H_2)-r_N(H_2)\leq |R_1|$.
We also have $|R_1|=v_N(H_1)+r_N(H_1)\leq v_M(H_1)+r_N(H_1)=\sum_{h\in H_1}|M(h)|-r_M(H_1)+r_N(H_1)$
and $\sum_{h\in H_1}|M(h)|\leq n-\sum_{h\in H_2}|M(h)| = n-v_M(H_2)-r_M(H_2)\leq n-1-r_M(H_2)$.
Combining them, we obtain
\[r_M(H_2)-r_N(H_2)\leq n-1-r_M(H_2)-r_M(H_1)+r_N(H_1),\]
which implies $r_M(H_2)\leq \frac{1}{2}(n-1-r_M(H_1)+r_N(H_2)+r_N(H_1))$.
Then, 
\begin{eqnarray*}
r_M(H^*)-r_N(H^*) & \leq & r_M(H_2)+r_M(H_1)-r_N(H_2)-r_N(H_1) \\
& \leq & \frac{1}{2}(n-1+r_M(H_1)-r_N(H_1)-r_N(H_2)).
\end{eqnarray*}
Since $r_M(H_1)-r_N(H_1)\leq 0$ and $-r_N(H_2)\leq 0$, we obtain $r_M(H^*)-r_N(H^*)\leq \frac{n-1}{2}$,
which implies $r_M(H^*)-r_N(H^*)\leq \lceil\frac{n}{2}\rceil-1$ by the integrality of $r_M(H^*)-r_N(H^*)$.
Thus, we obtain $\frac{v_N(H^*)}{v_M(H^*)}
\leq \frac{2+r_M(H^*)-r_N(H^*)}{2}\leq \frac{\lceil\frac{n}{2}\rceil+1}{2}$.
\end{proof}

\end{document}